\definecolor{darkblue}{rgb}{0,0.08,0.45}
\newcommand{\bigfig}[1]{} %#1
\newcommand{\floor}[1]{\lfloor #1 \rfloor}
\newcommand{\abs}[1]{\left| #1 \right|}
\newcommand{\comment}[1]{}
\newcommand{\set}[1]{\{#1\}}
\newtheorem{lemma}{Lemma}
\newtheorem{invariant}{Invariant}
\newtheorem{corollary}{Corollary}
\newtheorem{definition}{Definition}
\newtheorem{theorem}{Theorem}
\newtheorem*{theorem*}{Theorem}
\newtheorem{fact}{Fact}
\def\dualfacelevel{\ell^F}
\def\dualnodelevel{\ell^V}
\def\dualedgelevel{\ell^E}
\def\dualcomponentlevel{\ell^{\mathcal K}}
\newcommand{\parent}{\mbox{parent}}
\newcommand{\rev}{\text{rev}}
\newcommand{\barG}{\bar G}
\begin{document}
\begin{titlepage}
\title{Structured Recursive Separator Decompositions\\ for Planar Graphs in Linear Time\thanks{Announced at the 45th ACM Symposium on Theory of Computing (STOC)~\cite{KMS13}.}}
\author{
Philip N.~Klein\thanks{Brown University, Providence, RI. Supported in part by National Science Foundation Grant CCF-0964037.}
\and Shay Mozes\thanks{MIT, Cambridge, MA. 
Supported in part by 
National Science Foundation Grants CCF-0964037 and CCF-1111109. 
Work partially conducted while at Brown University.}
\and Christian Sommer\thanks{Supported in part by 
the Swiss National Science Foundation. 
Work done while at MIT CSAIL, Cambridge, MA. }}
\date{}
\maketitle

\begin{abstract} 
Given a planar graph $G$ on $n$ vertices and an integer $r<n$, 
an {\em $r$--division} of $G$ with {\em few holes} is 
a decomposition of $G$ into $O(n/r)$ regions of size at most $r$
such that 
each region contains at most a constant number of faces that are not faces of $G$ 
(also called {\em holes}), 
and such that, 
for each region, 
the total number of vertices on these faces is $O(\sqrt r)$.

We provide a linear-time algorithm for 
computing $r$--divisions with few holes. 
In fact, 
our algorithm computes a structure, 
called {\em decomposition tree}, 
which represents a recursive decomposition of~$G$ 
that includes $r$--divisions for essentially all values of $r$.
In particular, 
given an increasing sequence ${\bm r} = (r_1,r_2,\dots)$,
our algorithm can produce 
a {\em recursive} ${\bm r}$--division with few holes 
in linear time.

$r$--divisions with few holes have been used 
in efficient algorithms to compute 
shortest paths, 
minimum cuts, and 
maximum flows. 
Our linear-time algorithm improves upon 
the decomposition algorithm 
used in the state-of-the-art algorithm for minimum $st$--cut 
(Italiano, Nussbaum, Sankowski, and Wulff-Nilsen, STOC~2011), 
removing one of the bottlenecks in the overall running time of their algorithm 
(analogously for minimum cut in planar and bounded-genus graphs). 
\end{abstract}
\thispagestyle{empty}

\end{titlepage}

\section{Introduction}

Separators decompose a graph in a {\em balanced} way into 
two subgraphs with a limited number of vertices in common. 
Separators are often used in 
efficient algorithms using a divide-and-conquer strategy~\cite{journals/siamcomp/LiptonT80,ChungSurvey,Shmoys,RosenbergHeath}. 
Graphs with small recursive separators include 
planar graphs \cite{Ungar51,journals/siamjam/LiptonT79,Djidjev1982,journals/jcss/Miller86,conf/sigal/GazitM90,conf/compgeom/SpielmanT96,DjidjevVenkatesan97}, 
bounded-genus graphs \cite{DjidjevSerdica1985LinearTime,journals/jal/GilbertHT84,journals/siamcomp/Kelner06}, 
minor-free graphs \cite{journals/jct/Andreae86,AlonST90,conf/soda/PlotkinRS94,journals/talg/ReedW09,journals/jacm/BiswalLR10,conf/focs/KawarabayashiR10,CWNMinorFree}, 
and 
graphs with bounded tree-width~\cite{halin,RS2,Arn}. 
Furthermore, 
for graphs of all these classes, 
separators can be found efficiently, 
often in linear time. 
For planar graphs, 
experimental results demonstrate that separator algorithms are practical~\cite{journals/jea/AleksandrovDGM06,journals/jea/HolzerSWPZ09,FMPS13}. 

Perhaps the most influential result of this kind is 
the linear-time algorithm of Lipton and Tarjan~\cite{journals/siamjam/LiptonT79} 
for finding a separator of size $O(\sqrt{n})$ in a planar graph with $n$ vertices.
Consider the result of using this algorithm recursively 
until each separated subgraph has size at most 
some specified limit~$r$. 
It is easy to show 
that $O(n/r)$ subgraphs result, 
and 
that the average number of boundary vertices per subgraph is $O(\sqrt{r})$. 
Frederickson~\cite{journals/siamcomp/Frederickson87} showed that, 
with additional care, 
one can ensure that each subgraph has $O(\sqrt{r})$ boundary vertices; 
he calls such a decomposition an {\em $r$--division}, 
and 
he refers to the subgraphs as {\em regions}. 
The running time of Frederickson's algorithm is $O(n\log r+(n/\sqrt{r})\log n)$. 

Decompositions of such kind have been used in 
many efficient algorithms for planar graphs,
such as those for computing 
shortest paths~\cite{journals/siamcomp/Frederickson87,journals/jcss/HenzingerKRS97},
maximum flow~\cite{JV82}, 
polygon triangulation~\cite{journals/jcss/Goodrich95}, 
and 
I/O--efficient algorithms~\cite{journals/siamcomp/MaheshwariZ08}.

For some of these applications, 
Frederickson's algorithm for $r$--divisions is too slow.
Goodrich~\cite{journals/jcss/Goodrich95} gave a linear-time algorithm 
that achieves $O(\sqrt{r})$ boundary size for planar graphs
(see~\cite{journals/jal/ArgeBT04} for the I/O model).
Even this was not enough for 
the linear-time shortest-path algorithm of Henzinger, Klein, Rao, and Subramanian~\cite{journals/jcss/HenzingerKRS97}, 
which requires recursive applications of an algorithm for $r$--divisions 
(with roughly $\log^* n$ levels of recursion). 
They addressed this by showing 
that a linear-time $O(\sqrt{n})$ separator algorithm can be used 
to obtain a sublinear-time separator algorithm with a worse 
(but still sublinear) 
boundary-size guarantee. 

However, 
for some of the more involved algorithms working with planar embedded graphs, 
it is essential that the regions of the division are topologically ``nice'' in 
that the boundary of each region consists of a constant number of faces, 
also called {\em holes}. 
Such a division can be found 
by using small {\em cycle} separators (Miller~\cite{journals/jcss/Miller86})
instead of just small separators, and 
by incorporating iterations in which 
the graph is separated according to the number of holes. 

Such $r$--divisions with a constant number of holes 
were first used in algorithms of Klein and Subramanian~\cite{journals/algorithmica/KleinS98,SubramanianThesis},
and subsequently in many other algorithms~\cite{journals/jcss/FakcharoenpholR06,CabelloR10,journals/algorithmica/Cabello12,esa/MozesW10,conf/stoc/ItalianoNSW11,conf/esa/LackiS11,KKS,conf/soda/EricksonFN12,MS12}.\footnote{Cabello~\cite{journals/algorithmica/Cabello12} requires only that the {\em average} number of holes per region be small.}

Prior to this work,
computing an $r$--division with a constant number of holes per region 
required time $O(n\log r+(n/\sqrt{r})\log n)$~\cite{conf/stoc/ItalianoNSW11}, 
Such $r$--divisions were one of the time bottlenecks in the state-of-the-art algorithms for 
minimum $st$--cut and maximum $st$--flow~\cite{conf/stoc/ItalianoNSW11} and 
minimum cut in undirected planar graphs~\cite{conf/esa/LackiS11} and 
bounded-genus graphs~\cite{conf/soda/EricksonFN12}. 
Whether such an $r$--division can be computed in linear time 
was an open problem until the current work.
For example, 
Cabello~\cite{journals/algorithmica/Cabello12} remarks that 
{\em ``it is unclear if the algorithm of Goodrich~\cite{journals/jcss/Goodrich95} can also be modified to use the cycle-separator, 
and thus obtain a linear-time construction of $r$--decompositions with a few holes.''}

\subsection*{Contributions}
We provide a linear-time algorithm for computing $r$--divisions
with few holes for any triangulated biconnected plane graph $G$ and
any~$r$ (Theorem~\ref{thm:r-division}).

In fact, the algorithm produces a {\em decomposition tree} of~$G$ (Theorem~\ref{thm:decomposition}),
which is a tree 
that naturally represents a recursive decomposition of $G$ by cycle separators, 
and 
from which one can read off, 
in linear time, 
a recursive ${\bm r}$--division with few holes, 
for any exponentially increasing sequence ${\bm r} = r_1,r_2, \dots$ (Theorem~\ref{thm:recursive-r-division}).

Our linear-time algorithm improves upon 
the $O(n\log r+(n/\sqrt{r})\log n)$--time algorithm of Italiano, Nussbaum, Sankowski, and Wulff-Nilsen~\cite{conf/stoc/ItalianoNSW11}, 
removing one of the time bottlenecks of the state-of-the-art algorithms for 
minimum $st$--cut and maximum $st$--flow~\cite{conf/stoc/ItalianoNSW11}, 
as well as
minimum cut~\cite{conf/esa/LackiS11,conf/soda/EricksonFN12} 
(all in undirected graphs).

Concurrent with and independent of our result, 
Arge, van Walderveen, and Zeh~\cite{AWZ13} also gave an algorithm 
that produces an $r$--division with few holes in linear time. 
The focus of their paper is I/O-efficient algorithms. 
Their approach differs from ours in that 
their algorithm requires the value of $r$ as input and 
uses it in a way that makes it unsuitable for 
producing recursive ${\bm r}$--divisions in linear time. 
Such a linear time algorithm seems to be crucial for 
obtaining linear-time algorithms using $r$--divisions (see~\cite{journals/jcss/HenzingerKRS97}).

\subsection*{Techniques} The overall approach of our algorithm builds on 
that of Goodrich~\cite{journals/jcss/Goodrich95}, 
which is based on Lipton-Tarjan vertex separators~\cite{journals/siamjam/LiptonT79}. 
However, 
our approach must handle the 
additional complexities 
of finding cycle separators (see Miller~\cite{journals/jcss/Miller86}), 
which involves 
maintaining spanning trees in both the input graph and its planar dual, 
and of bounding the number of holes.
We manage both primal and dual trees simultaneously by using
a primal spanning tree
that follows a dual breadth-first-search tree (Section~\ref{sec:cyclesep:mirrortree}). 
The levels of the dual BFS tree define connected components, 
which we maintain in a {\em Component Tree} (Section~\ref{sec:components}). 
This component tree internally captures 
the structural connectivity of dual BFS components. 

\subsection*{Outline}
Precise definitions of $r$--divisions, holes, recursive divisions, and decomposition trees, 
as well as formal statements of our results are 
in Section~\ref{sec:prel:rdiv} and Section~\ref{sec:prel:recursivediv}. 
We give a high-level description of the $r$--division algorithm 
and prove its correctness in Section~\ref{sec:rdiv}.

A key ingredient of our algorithm is 
a cycle separator algorithm (Section~\ref{sec:cyclesep}) 
that can be implemented efficiently by using auxiliary data structures 
(based on {\em dynamic trees} and {\em Euler-tour trees}, 
see Section~\ref{sec:prelim:dtree}).
In a straightforward computation of an $r$--division, 
the cycle separator at each recursive level can be found in linear time 
and the graph can be cut in linear time into two regions along the cycle. 
In order to achieve {\em overall} linear time, 
we aim at {\em sublinear} time bounds for 
computing the cycle separator 
and cutting the graph along it.
In our algorithm, 
we initialize the data structures once, 
and then, 
in subsequent steps of the recursion, 
we reuse what has been computed already (Section~\ref{sec:maint}), 
and we call an efficient dynamic-tree implementation of our cycle separator algorithm 
(Section~\ref{sec:cyclesep-eff}).

\section{Preliminaries}\label{sec:prel}

\subsection{Graph Notation}
\label{sec:prelim:planar}
Let $G=(V,E)$ be a simple graph. 
For a subset $V'$ of $V$, 
we denote by $\delta_G(V')$ the set of edges $uv$ of $G$ 
such that $v\in V'$ and $u\not\in V'$. 
We refer to $\delta_G(V')$ as a {\em cut} in $G$.
If $V'$ and $V-V'$ both induce connected subgraphs of $G$, 
we say it is a {\em simple cut}
(also known as a {\em bond}). 
In this case, $V'$ and $V-V'$ are the
vertex-sets of the two connected components of $\delta_G(V')$, which
shows that the edges of $\delta_G(V')$ uniquely determine the
bipartition $\set{V', V-V'}$.

A graph is called {\em biconnected} iff 
any pair of vertices is connected by at least two vertex-disoint paths.

For a spanning tree $T$ of $G$ and an edge $e$ of $G$ not in~$T$, 
the {\em fundamental cycle} of $e$ with respect to $T$ in $G$ is 
the simple cycle consisting of $e$ and 
the unique simple path in $T$ between the endpoints of $e$.

\subsection{Embeddings and Planar Graphs}

We review the definitions of combinatorial embeddings. 
For elaboration, see also \url{http://planarity.org}.

Let $E$ be a finite set, the edge-set. 
We define the corresponding set of {\em darts} to be $E \times \set{\pm 1}$. 
For $e\in E$, 
the darts of $e$ are $(e, +1)$ and $(e,-1)$. 
We think of the darts of $e$ as oriented versions of $e$ 
(one for each orientation).
We define the involution $\rev(\cdot)$ by $\rev((e,i)) = (e,-i)$. 
That is, 
$\rev(d)$ is the dart with the same edge but the opposite orientation.

A graph on $E$ is defined to be a pair $(V, E)$ 
where $V$ is a partition of the darts. 
Thus each element of $V$ is a nonempty subset of darts. 
We refer to the elements of $V$ as {\em vertices}.
The endpoints of an edge $e$ are the subsets $v, v'\in V$ containing the darts of $e$. 
The {\em head} of a dart $d$ is the subset $v\in V$ containing $d$, 
and its {\em tail} is the head of $\rev(d)$.

An {\em embedding} of $(V, E)$ is 
a permutation $\pi$ of the darts
such that $V$ is the set of orbits of $\pi$. 
For each orbit $v$, 
the restriction of $\pi$ to that orbit is a permutation cycle. 
The permutation cycle for $v$ specifies 
how the darts with head $v$ are arranged around $v$ in the embedding 
(in, say, counterclockwise order).
We refer to the pair $(\pi, E)$ as an embedded graph.

Let $\pi^*$ denote the permutation $\rev \circ \pi$, 
where $\circ$ denotes functional composition. 
Then $(\pi^*, E)$ is another embedded graph, 
the {\em dual} of $(\pi, E)$. 
(In this context, 
we refer to $(\pi, E)$ as the {\em primal.}) 

The {\em faces} of $(\pi, E)$ are defined to be the vertices of $(\pi^*, E)$. 
Since $\rev \circ (\rev\circ \pi) = \pi$, 
the dual of the dual of $(\pi, E)$ is $(\pi, E)$.
Therefore the faces of $(\pi^*, E)$ are the vertices of $(\pi, E)$.

We define an embedded graph $(\pi, E)$ to be {\em planar} 
if $n-m+\phi=2\kappa$, 
where $n$~is the number of vertices, 
$m$~is the number of edges, 
$\phi$~is the number of faces, 
and 
$\kappa$~is the number of connected components. 
Since taking the dual swaps vertices and faces 
and preserves the number of connected components, 
the dual of a planar embedded graph is also planar.
 
Note that, 
according to our notation, 
we can use $e$ to refer to an edge in the primal or the dual. 

\subsection{Some Properties of Planar Graphs}
\label{sec:planar}
\begin{fact}[Sparsity]\label{fact:sparse} 
Let $G$ be a simple planar graph. 
$\abs{E(G)} \leq 3\abs{V(G)} - 6$.
\end{fact}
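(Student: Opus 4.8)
The plan is to derive the inequality from Euler's formula — the planarity relation $n-m+\phi=2\kappa$ introduced above, where $n,m,\phi,\kappa$ count vertices, edges, faces, and connected components — together with a double-counting argument on incidences between darts and faces. First I would reduce to the case that $G$ is connected with $n=\abs{V(G)}\geq 3$. If $\abs{V(G)}\leq 2$ the claim is checked directly (a simple graph on at most two vertices has at most one edge, and one verifies the statement is intended for $n\geq 3$). If $G$ is disconnected, one may repeatedly add an edge joining two distinct components: no such edge is already present, so the graph stays simple, the addition can be realized in the embedding, and $\abs{E(G)}$ only grows; alternatively one argues component by component, the ``$-6$'' slack absorbing the components on fewer than three vertices. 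So assume $G$ is connected and $n\geq 3$, fix a planar embedding $(\pi,E)$, and note $\kappa=1$, whence $n-m+\phi=2$.

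\textbf{Counting step.} Each dart of $G$ lies on the boundary walk of exactly one face (its orbit under $\pi^{*}=\rev\circ\pi$), so summing the lengths of all facial walks counts every dart once and therefore gives the total $2m$. I claim every facial walk has length at least $3$: a facial walk of length $1$ forces a loop; a facial walk of length $2$ forces either two parallel edges or a single edge traversed by both its darts within one walk, the latter meaning that edge is the whole component and hence $n\leq 2$. Since $G$ is simple and connected with $n\geq 3$, none of these occurs, so each of the $\phi$ facial walks has length at least $3$, giving $3\phi\leq 2m$, i.e.\ $\phi\leq\tfrac{2}{3}m$.

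\textbf{Conclusion and main obstacle.} Substituting into Euler's formula, $2=n-m+\phi\leq n-m+\tfrac{2}{3}m=n-\tfrac{1}{3}m$, which rearranges to $m\leq 3n-6$. The one point needing care — the ``hard part'' of an otherwise routine argument — is the lower bound of $3$ on the length of a facial walk: one must rule out short facial walks using only simplicity and $n\geq 3$, and in particular handle bridges, which are traversed twice by the \emph{same} facial walk. Phrasing the count in terms of darts, rather than saying ``each edge borders at most two faces,'' is precisely what keeps this bookkeeping clean.
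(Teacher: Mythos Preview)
The paper states this as a \texttt{fact} without proof, treating $\abs{E(G)}\leq 3\abs{V(G)}-6$ as a classical property of simple planar graphs. Your argument is the standard derivation from Euler's formula via the inequality $3\phi\leq 2m$, and it is correct; your care in phrasing the double count in terms of darts and in ruling out facial walks of length $1$ or $2$ (loops, parallel edges, or a one-edge component) is exactly what is needed, and the reduction to the connected case with $n\geq 3$ is handled appropriately.
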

When the exact constants are not considered, 
we write $O(\abs{G})$ to denote $O(\abs{V(G)}) = O(\abs{E(G)})$. 

\begin{fact}[Simple-cycle/simple-cut duality~\cite{Whitney}] \label{fact:simple-cycle-simple-cut}
A set of edges forms a simple cycle in a planar embedded graph $G$ iff 
it forms a simple cut in the dual $G^*$.
\end{fact}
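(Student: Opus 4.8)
The plan is to avoid any appeal to the Jordan curve theorem and instead derive the statement by $\mathbb{F}_2$-linear algebra directly from Euler's formula, which in our setting \emph{is} the definition of planarity. Work in the edge space $\mathbb{F}_2^E$. Let $\mathcal Z(G)$ be the \emph{cycle space} of $G$ --- the edge subsets in which every vertex of $G$ has even degree --- and, for any graph $H$, let $\mathcal B(H)$ be the \emph{cut space} of $H$, spanned by the elementary cuts $\delta_H(\{v\})$ over $v\in V(H)$; equivalently $\mathcal B(H)=\{\delta_H(S):S\subseteq V(H)\}$. I will use two standard facts as black boxes: (i) the inclusion-minimal nonzero members of $\mathcal Z(G)$ are exactly the edge sets of simple cycles of $G$; and (ii) the inclusion-minimal nonzero members of $\mathcal B(H)$ are exactly the bonds of $H$, and for connected $H$ a cut $\delta_H(S)$ is a bond iff $H[S]$ and $H[V(H)\setminus S]$ are both connected, i.e.\ iff it is a simple cut in the sense of Section~\ref{sec:prelim:planar}. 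We may assume $G$ is connected (the relevant case; in general the cycle and both sides of a simple cut lie in a single connected component, so the argument localizes there), and then $G^*$ is connected as well, since taking duals preserves the number of connected components.

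The crux is the identity $\mathcal Z(G)=\mathcal B(G^*)$. First I would match dimensions: $\dim\mathcal Z(G)=m-n+1$ while $\dim\mathcal B(G^*)=\phi-1$ (one less than the number of vertices of $G^*$), and these coincide precisely because $n-m+\phi=2$ for a connected planar graph. Then I would establish $\mathcal B(G^*)\subseteq\mathcal Z(G)$ on generators: for a face $f$ of $G$, I would identify $\delta_{G^*}(\{f\})$ with the set of edges occurring an odd number of times along the facial walk of $f$ --- the $\pi^*$-orbit of a dart, read as a dart sequence --- where an edge is counted twice exactly when $f$ borders both of its sides. Since consecutive darts of a $\pi^*$-orbit share an endpoint, this facial walk is a closed walk in $G$, so its edge multiset reduces mod~$2$ to an element of $\mathcal Z(G)$; hence $\delta_{G^*}(\{f\})\in\mathcal Z(G)$. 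Equal dimension together with this inclusion gives $\mathcal Z(G)=\mathcal B(G^*)$.

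The theorem is then immediate from minimality, using that the two spaces are literally equal. If $C$ is a simple cycle of $G$, then by~(i) it is inclusion-minimal and nonzero in $\mathcal Z(G)=\mathcal B(G^*)$, hence inclusion-minimal and nonzero in $\mathcal B(G^*)$, hence by~(ii) a bond --- equivalently, a simple cut --- of $G^*$. Conversely, if $C$ is a simple cut of $G^*$, then it is a bond of $G^*$, hence inclusion-minimal and nonzero in $\mathcal B(G^*)=\mathcal Z(G)$, hence a simple cycle of $G$ by~(i).

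I expect the main obstacle to be the inclusion $\mathcal B(G^*)\subseteq\mathcal Z(G)$ --- concretely, verifying \emph{within the dart/permutation formalism} that $\delta_{G^*}(\{f\})$ is exactly the mod-$2$ edge multiset of the facial walk of $f$; this needs care about bridges (edges bordered by $f$ on both sides) and about the precise meaning of $\delta$ in the dual. A secondary subtlety is fact~(ii) in the disconnected case, which is why I restrict to connected $G$ and note that the general case localizes to one component. Everything else --- the dimension count and the two minimality translations --- is routine once the key identity is in hand. (A topological alternative exists: $C$ traces a simple closed curve on the sphere, which by the Jordan curve theorem splits the faces of $G$ into two classes whose corresponding dual-vertex sets induce connected subgraphs of $G^*$ with $C$ as the cut between them; but the algebraic route stays closer to the combinatorial-embedding machinery of this paper.)
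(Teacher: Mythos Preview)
Your argument is correct. The paper, however, does not prove this statement at all: it is recorded as a Fact with a citation to Whitney and used as a black box throughout. Your $\mathbb{F}_2$-algebraic route---matching the dimensions of $\mathcal Z(G)$ and $\mathcal B(G^*)$ via Euler's formula, establishing the inclusion $\mathcal B(G^*)\subseteq\mathcal Z(G)$ on the generators $\delta_{G^*}(\{f\})$ by reading them off the facial $\pi^*$-orbits, and then passing to inclusion-minimal nonzero elements on both sides---is a standard, self-contained proof that fits well with the paper's dart/permutation formalism. There is nothing in the paper to compare against; you have supplied what the paper deliberately omits as classical.
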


Since 
a simple cut in a graph uniquely determines a bipartition of the vertices of the graph, 
a simple cycle in a planar embedded graph $G$ uniquely determines a bipartition of the faces. 

\begin{definition}[Encloses] \label{def:encloses} 
Let $C$ be a simple cycle in a connected planar embedded graph $G$. 
The edges of $C$ form a simple cut $\delta_{G^*}(S)$ for some set $S$ of vertices of $G^*$, 
i.e.~faces of $G$. 
Thus $C$ uniquely determines a bipartition $\set{F_0, F_1}$ of the faces of $G$. 
Let $f_\infty, f$ be faces of $G$. 
We say $C$ {\em encloses}~$f$ with respect to $f_\infty$ if 
exactly one of $f, f_\infty$ is in~$S$. 
For a vertex/edge $x$, 
we say $C$ {\em encloses} $x$ (with respect to $f_\infty$) if 
it encloses some face incident to $x$ 
(encloses {\em strictly} if in addition $x$ is not part of $C$).
\end{definition}

\begin{fact}[\cite{vonStaudt}] 
For any spanning tree $T$ of $G$, 
the set of edges of $G$ not in $T$ form a spanning tree of $G^*$.
\end{fact}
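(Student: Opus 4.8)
The plan is to combine an edge count from Euler's formula with the cycle/cut duality of Fact~\ref{fact:simple-cycle-simple-cut}. Since $G$ admits a spanning tree, it is connected, so planarity gives $n-m+\phi=2$, where $n=\abs{V(G)}$, $m=\abs{E(G)}$, and $\phi$ is the number of faces of $G$, i.e.~the number of vertices of $G^*$; moreover $G^*$ is connected, since taking the dual preserves the number of connected components. Set $C := E(G)\setminus E(T)$ and regard $C$ as a set of edges of $G^*$ (every edge of $G$ is an edge of $G^*$). Then $\abs{C}=m-(n-1)=\phi-1$, so the subgraph of $G^*$ with vertex set $V(G^*)$ and edge set $C$ has exactly $\phi$ vertices and $\phi-1$ edges. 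To conclude it is a spanning tree of $G^*$ it suffices to show that it is acyclic, because an acyclic spanning subgraph of a $\phi$-vertex graph with $\phi-1$ edges has exactly one connected component, hence is connected.

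First I would prove that $C$ is acyclic in $G^*$, by contradiction. If $C$ contained a cycle of $G^*$, it would contain a \emph{simple} cycle $C'$ of $G^*$. Since the dual of $G^*$ is $G$, Fact~\ref{fact:simple-cycle-simple-cut} applied with $G^*$ in the role of the graph says that $C'$ is a simple cut of $G$, say $C'=\delta_G(S)$ with both $S$ and $V(G)\setminus S$ inducing connected subgraphs of $G$. As $C'$ is a nonempty set of edges, $\delta_G(S)\neq\emptyset$, so $S$ is a proper nonempty subset of $V(G)$. But $C'\subseteq C$ is disjoint from $E(T)$, so $T$ has no edge with one endpoint in $S$ and the other in $V(G)\setminus S$, contradicting connectivity of the spanning tree $T$, which contains a path between any vertex of $S$ and any vertex of $V(G)\setminus S$. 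Hence $C$ is acyclic in $G^*$, and by the count above it is a spanning tree of $G^*$.

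There is no genuinely hard step here; the content is just recognizing the right packaging. The two places that call for a line of care are (i) verifying that the simple cut $\delta_G(S)$ obtained from the duality fact is nontrivial, i.e.~that $S$ is neither empty nor all of $V(G)$ --- this is automatic because a cycle is a nonempty edge set --- and (ii) using connectedness of $G^*$, so that a connected acyclic spanning subgraph of $G^*$ really is a spanning tree. An alternative to invoking Fact~\ref{fact:simple-cycle-simple-cut} would be to argue directly that a cycle $C'\subseteq C$ in $G^*$ encloses, with respect to a chosen root face, some nonempty proper set of faces of $G^*$ (i.e.~vertices of $G$), and that the enclosed and unenclosed vertices of $G$ would then be separated in $T$; but invoking the duality fact makes this immediate.
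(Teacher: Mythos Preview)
Your proof is correct and is the standard argument: count edges via Euler's formula, then use simple-cycle/simple-cut duality to show acyclicity of the cotree, from which connectedness follows by the edge count. The paper itself does not supply a proof of this fact; it is simply stated with a citation to von Staudt, so there is no approach to compare against. Your write-up is self-contained and the two points you flag as needing care (nontriviality of the cut, and that acyclic plus $\phi-1$ edges on $\phi$ vertices forces connectedness) are handled correctly.
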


For a spanning tree $T$ of $G$, 
we typically use $T^*$ to denote the spanning tree of $G^*$ 
consisting of the edges not in $T$. 
We often refer to $T^*$ as the {\em cotree} of $T$~\cite{conf/soda/Eppstein03}.

\subsection{Separators}

\begin{definition}
For an assignment $W(\cdot)$ of nonnegative weights to faces, edges, and vertices of $G$, 
we say a simple cycle $C$ is a {\em balanced separator}
if the total weight of faces, edges, and vertices strictly enclosed by~$C$ 
and the total weight not enclosed are each 
at most $3/4$ of the total weight.
\end{definition}
(Traditionally, balance involves a bound of $2/3$. 
We use $3/4$ because it simplifies the presentation.)

One can reduce the case of face/edge/vertex weight to the case of face weight. 
For each vertex or edge, 
remove its weight and add it to an incident face. 
A cycle separator 
that is balanced with respect to the resulting face-weight assignment 
is balanced with respect to the original weight assignment. 
We may therefore assume in cycle-separator algorithms 
that only the faces have weight. 

Let $G$ be a planar embedded graph with face weights. 
Suppose that no face has more than $1/4$ of the total weight.
Lipton and Tarjan~\cite{journals/siamjam/LiptonT79} show that, 
if $G$ is triangulated 
(every face has size at most~3) 
then for any spanning tree $T$ of $G$, 
there is an edge not in $T$ whose fundamental cycle with respect to $T$ 
is a balanced separator.
Goodrich~\cite{journals/jcss/Goodrich95} observed that 
such an edge can be found by looking for an edge-separator in the cotree $T^*$ of $T$.

We modify this approach slightly: 
let $T^*$ be a spanning tree of the planar dual $G^*$ of $G$ 
such that $T^*$ has maximum degree~3. 
Let $T$ be the cotree of $T^*$, 
so $T$ is a spanning tree of $G$. 
Root $T^*$ at an arbitrary vertex of degree one or two. 
Let $v$ be a leafmost vertex of $T^*$ 
such that the descendants of $v$ 
(including $v$) 
comprise more than $3/4$ of the weight. 
Let $\hat e$ be the edge of $T^*$ connecting $v$ to a child with greatest descendant weight.

\begin{lemma} \label{lem:fundamental-cycle-separator} 
The fundamental cycle of $\hat e$ with respect to $T$ 
is a balanced simple cycle separator.
\end{lemma}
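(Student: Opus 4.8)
The plan is to show two things: first, that the fundamental cycle $C$ of $\hat e$ with respect to $T$ is a genuine simple cycle (this is automatic, since $\hat e \notin T$ and $T$ is a tree), and second, that the bipartition of faces it induces is balanced, i.e.\ neither side carries more than $3/4$ of the weight. For the second part I would use the key observation, due to Goodrich, that the faces of $G$ correspond to the vertices of the dual spanning tree $T^*$, and that the fundamental cycle of an edge $\hat e = (e,\cdot)$ with respect to the cotree $T$ corresponds, under simple-cycle/simple-cut duality (Fact~\ref{fact:simple-cycle-simple-cut}), to the fundamental cut of $\hat e$ with respect to $T^*$ in $G^*$. Concretely, removing $\hat e$ from $T^*$ splits $T^*$ into two subtrees; since $\hat e$ connects $v$ to one of its children, these two subtrees are exactly the set $D$ of descendants of that child and its complement. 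So the two sides of the face-bipartition induced by $C$ are precisely $D$ and $V(T^*)\setminus D$, and it suffices to bound the weight of $D$ from above by $3/4$ and below by $1/4$ of the total.

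The upper bound on the weight of $D$ is the crucial step and is where the leafmost choice of $v$ is used. By the choice of $v$ as a \emph{leafmost} vertex whose descendant-weight (including $v$) exceeds $3/4$ of the total, every child $c$ of $v$ has descendant-weight at most $3/4$. In particular the child chosen to define $\hat e$ — the one with greatest descendant-weight — has descendant-weight at most $3/4$, so $W(D)\le 3/4 \cdot W_{\text{tot}}$, giving balance on the $D$ side. For the complementary side I need $W(D)\ge 1/4 \cdot W_{\text{tot}}$, equivalently that the descendants of $v$ excluding the chosen child, together with $v$ itself, carry at most $3/4$. Here I use that $T^*$ has maximum degree $3$ and is rooted at a vertex of degree one or two, so $v$ has at most two children. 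Write $W_v$ for the weight of $v$'s own face, $W_1 \ge W_2$ for the descendant-weights of $v$'s (at most two) children, and $W_{\ge v} = W_v + W_1 + W_2 > 3/4 \cdot W_{\text{tot}}$. The side complementary to $D$ has weight $W_{\text{tot}} - W_1 = (W_{\text{tot}} - W_{\ge v}) + W_v + W_2$. I need this to be $\le 3/4\cdot W_{\text{tot}}$, i.e.\ $W_v + W_2 \le W_1 + (W_{\ge v} - 3/4\cdot W_{\text{tot}})$... the clean way is: since each child's subtree has weight $\le 3/4\cdot W_{\text{tot}}$ and $W_2 \le W_1$, and since each individual face has weight at most $1/4\cdot W_{\text{tot}}$ so $W_v \le 1/4 \cdot W_{\text{tot}}$, we get $W_v + W_2 \le 1/4\cdot W_{\text{tot}} + 1/2\cdot W_1$; combined with $W_1 \le 3/4\cdot W_{\text{tot}}$ one checks $W_{\text{tot}} - W_1 \le 3/4\cdot W_{\text{tot}}$ precisely when $W_1 \ge 1/4\cdot W_{\text{tot}}$, which holds because $W_v + W_1 + W_2 > 3/4\cdot W_{\text{tot}}$ and $W_v \le 1/4\cdot W_{\text{tot}}$, $W_2 \le W_1$ force $2W_1 > 1/2\cdot W_{\text{tot}}$.

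I expect the main obstacle to be the careful bookkeeping in this last inequality — making sure the degree-$3$ hypothesis and the no-heavy-face hypothesis ($W_v \le 1/4\cdot W_{\text{tot}}$) are combined correctly to rule out the bad case where the chosen child is too light and its complement overflows. The rest — identifying $C$'s face-bipartition with the fundamental cut in $T^*$, and hence with $D$ versus its complement — is a direct application of Fact~\ref{fact:simple-cycle-simple-cut} together with the fact that the edges not in $T$ form the spanning tree $T^*$ of $G^*$, so it should go through cleanly. One minor point to check at the start is that $\hat e$ is well-defined: a leafmost overweight vertex $v$ exists because the root itself is overweight (its descendant set is everything), and $v$ has a child because a single face cannot be overweight by hypothesis, so $v$ is not a leaf.
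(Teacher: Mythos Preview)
Your proposal is correct and follows essentially the same approach as the paper: you identify the face-bipartition induced by the fundamental cycle $C$ with the two subtrees of $T^*\setminus\{\hat e\}$ via cycle/cut duality, and then bound the weight of the child's subtree $D$ between $1/4$ and $3/4$ of the total. The paper compresses your inequality chase into the phrase ``simple algebra''; your explicit verification (using the degree bound on $T^*$ to get at most two children, the leafmost choice of $v$ to get $W_1\le 3/4\,W_{\text{tot}}$, and the no-heavy-face hypothesis $W_v\le 1/4\,W_{\text{tot}}$ together with $W_2\le W_1$ to get $W_1>1/4\,W_{\text{tot}}$) is exactly the intended computation.
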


\begin{proof} 
Simple algebra shows that
each of the two trees of $T^*\setminus\set{\hat e}$ comprises 
between $1/4$ and $3/4$ of the total weight. 
One of these trees consists of the faces enclosed by~$C$, 
and 
the other consists of the faces not enclosed by~$C$, 
where $C$~is the fundamental cycle of $\hat e$ with respect to~$T$. 
\end{proof}

Miller~\cite{journals/jcss/Miller86} proved the following theorem
(actually, a stronger version).
\begin{theorem*}[Miller~\cite{journals/jcss/Miller86}]
For a planar triangulated biconnected graph $G$ 
with weights such that 
the weight of each face, edge, and vertex comprises at most 2/3 of the total weight, 
there is a simple cycle $C$ of length at most $2\sqrt{2\abs{V(G)}}$ 
such that the total weight strictly enclosed is at most 2/3 
and the total weight not enclosed is at most 2/3.
There is a linear-time algorithm to find such a cycle.
\end{theorem*}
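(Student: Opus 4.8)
The plan is to combine Lemma~\ref{lem:fundamental-cycle-separator} with a careful choice of spanning tree $T^*$ of the dual $G^*$, so that the fundamental cycle it produces is not only balanced but also short. First I would reduce to the case where all weight sits on faces, as explained just before Lemma~\ref{lem:fundamental-cycle-separator}, and (by the standard trick of splitting a heavy face into triangulated pieces of small weight, or directly handling the case of a single heavy face separately) assume no face carries more than $1/4$ of the total weight. The heart of the argument is then to pick, in the dual graph $G^*$, a \emph{breadth-first-search tree} $T^*$ rooted at $f_\infty$; since $G$ is triangulated, $G^*$ is $3$-regular, so $T^*$ automatically has maximum degree~$3$, and we may root it at a vertex of degree one or two as required by the lemma. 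Applying Lemma~\ref{lem:fundamental-cycle-separator} to this $T^*$ and its cotree $T$ yields a balanced simple cycle separator $C$ (with balance $3/4$, which certainly implies the $2/3$ bound after the usual re-splitting, or one reproves the simple algebra with the $2/3$ threshold directly).

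The next step is to bound the length of $C$. The cycle $C$ is the fundamental cycle of the dual-tree edge $\hat e$ with respect to $T$, i.e.\ $C$ consists of $\hat e$ together with the unique tree path in $T$ between the endpoints of $\hat e$; equivalently, viewed in the dual, $C$ is the edge set of the simple cut $\delta_{G^*}(S)$ where $S$ is the set of descendants of $v$ in $T^*$. Since $T^*$ is a BFS tree of $G^*$, every vertex of $G^*$ at BFS-level $i$ lies at distance~$i$ from $f_\infty$; the cut edges of $C$ connect the subtree below $v$ to the rest, and a standard layered-separator argument (as in Lipton--Tarjan) shows that by choosing the BFS levels well one can arrange that the relevant level, or a pair of adjacent levels, has $O(\sqrt{|V(G^*)|})$ vertices — here one uses that $G^*$ is planar and $3$-regular, so $|V(G^*)| = O(|F(G)|) = O(|V(G)|)$. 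The explicit constant $2\sqrt{2|V(G)|}$ then comes from optimizing this level-counting argument; I would not grind through the constant but would note that it follows from the same extremal estimate Lipton and Tarjan use, applied in the dual.

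The main obstacle — and the place where real care is needed — is reconciling the \emph{balance} requirement (which forces $\hat e$ to be chosen at a specific place in $T^*$, namely just below the leafmost vertex whose descendants are too heavy) with the \emph{length} requirement (which wants $\hat e$ to sit between two sparse BFS levels). These two demands are on different axes of $T^*$, so in general the single fundamental cycle picked by Lemma~\ref{lem:fundamental-cycle-separator} need not be short. Miller's resolution, which I would follow, is to not insist on one cycle: one shows that among the cycles obtained by cutting $T^*$ at the BFS-level boundaries near the ``heavy'' vertex, at least one is simultaneously balanced and short, or else one combines two short cycles from nearby levels into a single balanced separator by taking a symmetric difference (which, by Fact~\ref{fact:simple-cycle-simple-cut}, corresponds to combining the two dual cuts into one and remains a simple cycle because $G^*$ is connected). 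Handling this combination step — verifying that the merged edge set is still a \emph{simple} cycle of the claimed length and that it achieves the $2/3$ split — is the technical crux; the biconnectedness of $G$ is exactly what guarantees that the relevant dual cuts are bonds (simple cuts), so the resulting object really is a single simple cycle rather than a disjoint union of cycles. The linear-time claim is then immediate: BFS in $G^*$, the level counting, and locating $\hat e$ (or the pair of levels to merge) are all linear-time computations.
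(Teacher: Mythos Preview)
The paper does not prove this theorem: it is stated as a citation to Miller's result, and the authors explicitly remark immediately afterwards that they \emph{do not use Miller's construction} but instead give their own, which achieves only $3/4$ balance and length $4\sqrt{|E(G)|}$ (Section~\ref{sec:cyclesep}, Lemma~\ref{lem:CycleSeparator}). So there is no proof in the paper to compare your proposal against.

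That said, it is worth comparing your sketch to the paper's own separator construction, since the underlying difficulty is the same. You correctly identify the central tension: the edge $\hat e$ dictated by Lemma~\ref{lem:fundamental-cycle-separator} gives balance but no length control. Your proposed resolution --- taking the symmetric difference of two short level cycles --- has a genuine gap. The symmetric difference of two simple cycles in a planar graph is a cycle space element but need \emph{not} be a single simple cycle; Fact~\ref{fact:simple-cycle-simple-cut} guarantees that a simple cut dualizes to a simple cycle, but the symmetric difference of two bonds is not a bond in general. Moreover, even if you could combine two level cycles, you give no argument that the result is \emph{balanced}: two level cycles $X(K_-)$ and $X(K_+)$ bound an annulus, and nothing forces either side of a cycle lying in that annulus to carry between $1/3$ and $2/3$ of the weight.

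The paper's construction in Section~\ref{sec:cyclesep} shows what is actually needed. One keeps the balanced fundamental cycle $\tilde C$, observes that if it is long it must (by monotonicity of the primal tree, Invariant~\ref{inv:Tparent}) cross many BFS levels, and then by pigeonhole finds two short level cycles $X(K_-)$, $X(K_+)$ close together. The key is that $\tilde C$, $X(K_-)$, and $X(K_+)$ together form a small biconnected subgraph $G'$ whose \emph{faces} inherit weight $\le 3W/4$ from the balance of $\tilde C$ and the choice of $K_\pm$; a balanced simple cycle is then extracted from $G'$ by a separate greedy procedure (Lemma~\ref{lem:greedy-sep-correct}), not by a symmetric-difference trick. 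Your proposal is missing both the monotone primal tree (which is what makes the portion of $\tilde C$ between the two level cycles short) and this final extraction step.
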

In this paper, 
we do not use Miller's construction. 
We give another construction that, 
with the aid of some auxiliary data structures, 
can be carried out in sublinear time. 
For simplicity of presentation, 
we present a construction that achieves a balance of 3/4 instead of 2/3, 
and refrain from optimizing the various constants that arise in our construction.

\subsection{Divisions}
\label{sec:prel:rdiv}

Frederickson~\cite{journals/siamcomp/Frederickson87} introduced 
the notion of {$r$--divisions}.
Let $\barG$ be an $n$-vertex planar embedded graph.

\begin{definition}
A {\em region} $R$ of $\barG$ is 
an edge-induced subgraph of $\barG$. \end{definition}

\begin{definition} A {\em division} of $\barG$ is 
a collection of regions 
such that each edge is in at least one region. 
A vertex is a {\em boundary vertex} of the division 
if it is in more than one region. 
A division is an $r$--division 
if there are $O(n/r)$ regions, 
and 
each region has at most $r$ vertices 
and $O(\sqrt{r})$ boundary vertices.
\end{definition}

Frederickson's definition does not address the number of holes, 
which is not relevant in his algorithms 
(and in some subsequent algorithms building on his). 
The next definition follows the lines of that of Cabello~\cite{journals/algorithmica/Cabello12} 
but our terminology is slightly different.

\begin{definition}
A {\em natural} face of a region $R$ is 
a face of $R$ that is also a face of $\barG$. 
A {\em hole} of $R$ is a face of $R$ that is not natural.
\end{definition}

\begin{definition} An {\em $r$--division with few holes}
is an $r$--division in which
\begin{itemize}
\item any edge of two regions is on a hole of each of them, and
\item every region has $O(1)$ holes.
\end{itemize}
\end{definition}
This differs from Cabello's definition in that 
his requires only that the {\em average} number of holes per region be $O(1)$. 
We use a stronger requirement 
because some algorithms depend on it.

Our main theorem is as follows.
\begin{theorem} \label{thm:r-division}
For a constant $s$, 
there is a linear-time algorithm that, 
for any biconnected triangulated planar embedded graph $\barG$ 
and any $r\geq s$, 
outputs an $r$--division of $\barG$ with few holes.
\end{theorem}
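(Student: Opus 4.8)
The proof will proceed by recursion, using the balanced cycle-separator construction of Lemma~\ref{lem:fundamental-cycle-separator} as the workhorse, and will combine two distinct separation phases at each recursive level: one to control the number of vertices per region, and one to control the number of holes. First I would set up the recursion on a single region $R$, which I represent together with its (at most constant) set of holes; the holes get a combined boundary, and the edges on holes are the ``boundary edges'' that must be shared correctly. The recursion terminates a region when it has at most $r$ vertices. The first obstacle is that a single cycle separator only gives a $3/4$--$1/4$ split of the \emph{weighted} quantity we choose; so, as in Frederickson's and Goodrich's arguments, I would iterate the separator logarithmically many times at each ``macro'' level to bring region sizes down geometrically, while simultaneously running an analogous iterated separation keyed on the number of holes rather than on the number of vertices.

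\textbf{Controlling sizes and counting regions.} Weighting the faces of $R$ (pushing edge/vertex weight onto incident faces, as the preliminaries allow) by their contribution to $\abs{V(R)}$, I apply Lemma~\ref{lem:fundamental-cycle-separator} to get a balanced cycle $C$; cutting $R$ along $C$ produces two subregions, each inheriting the faces on its side plus a new hole bounded by $C$ (or $C$ merges with an existing hole, which is why the hole count must be separately controlled). Each cut adds $O(\abs{C})$ boundary vertices. The key accounting, exactly as in the classical analysis, is that if we always cut regions of size exceeding $r$ and the separator has length $O(\sqrt{\text{size}})$, then the total boundary summed over all regions is $O(n/\sqrt r)$, giving $O(n/r)$ regions and $O(\sqrt r)$ boundary vertices on average; to get the per-region bound $O(\sqrt r)$ one applies Frederickson's trick of re-separating any region whose boundary is too large (this is where the ``with additional care'' of the Introduction comes in). To guarantee the separator has length $O(\sqrt r)$ rather than $O(\sqrt{\abs{V(R)}})$ one first recurses with a size-based weighting until regions have $O(r)$ vertices, and only then switches to boundary-based and hole-based weightings.

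\textbf{Controlling holes.} This is the part special to ``few holes'' and, I expect, the main obstacle. When a region has more than some constant number of holes, I instead weight \emph{the holes} (say, unit weight on one face per hole, zero elsewhere) and apply Lemma~\ref{lem:fundamental-cycle-separator}; the resulting balanced cycle $C$ splits the holes roughly evenly between the two new subregions, and each subregion gains at most one new hole from $C$ itself. Since a region starts (at the top, $R=\barG$) with zero holes and each size/boundary cut creates at most one new hole, the total number of holes ever created in the whole recursion is $O(\text{number of cuts}) = O(n/r)$; the hole-balancing cuts then keep any single region from accumulating more than $O(1)$ of them. One must interleave the three kinds of cuts carefully so that fixing holes does not blow up sizes or boundary, and vice versa: the standard resolution is a round-robin or priority scheme — cut by size while size is large, then by boundary while boundary is large, then by holes while holes are large, and argue the potential $\Phi(R) = \max(0,\log(\abs{V(R)}/r)) + (\text{excess boundary}) + (\text{excess holes})$ strictly decreases, so the recursion halts with $O(n/r)$ total regions, each of size $\le r$, with $O(\sqrt r)$ boundary and $O(1)$ holes.

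\textbf{Finishing.} Finally I verify the two clauses of ``$r$--division with few holes'': every shared edge lies on a hole of each region containing it, because a shared edge is exactly an edge that was on some separating cycle $C$ used to split a common ancestor region, and by construction $C$ becomes (part of) a hole boundary in both children; and every region has $O(1)$ holes by the hole-balancing phase above. The running time claim is \emph{not} proved here — this theorem is the combinatorial existence/correctness statement, and the linear-time implementation is deferred to the later sections on maintaining primal/dual trees, the component tree, and the sublinear cycle-separator routine (Sections~\ref{sec:cyclesep}--\ref{sec:cyclesep-eff} as outlined), each invocation of Lemma~\ref{lem:fundamental-cycle-separator} being implemented in time sublinear in the current region via dynamic trees and Euler-tour trees so that the geometrically-shrinking sizes sum to $O(n)$.
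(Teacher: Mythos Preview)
Your proposal is essentially correct and follows the same high-level approach as the paper: recursively separate by simple cycles, cycling through three weight functions (number of vertices, number of boundary vertices, number of holes), then apply a Frederickson-style two-phase analysis to upgrade the average boundary bound to a per-region bound.

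A few points where the paper is sharper than your sketch are worth noting. First, the paper commits to a strict depth-$\bmod$-3 round-robin from the very top (not ``size first, then boundary, then holes'' as one of your paragraphs suggests); this matters because with a priority or sequential scheme a region can accumulate $\Theta(\log(n/r))$ holes during the size phase, and your sentence ``the total number of holes ever created is $O(\text{number of cuts})=O(n/r)$'' conflates total cuts with cuts along a single root-to-leaf path. The paper sidesteps this by proving, via a three-line induction on depth $\bmod$~3 (Lemma~\ref{lem:num-holes}), that every region has at most $10$, $11$, or $12$ holes, so no global counting or potential function is needed. Second, the paper does not aim for separators of length $O(\sqrt r)$; the separator for a region $R$ has length $O(\sqrt{|V(R)|})$ throughout, and the $O(\sqrt r)$ per-region boundary bound comes purely from the two-phase analysis (Lemmas~\ref{lem:phase1} and~\ref{lem:phase2}) applied \emph{after} building the full decomposition tree. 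Third, and relatedly, the paper derives Theorem~\ref{thm:r-division} as an immediate corollary of Theorem~\ref{thm:decomposition}: it builds one decomposition tree independent of $r$, then for any given $r$ selects the right frontier $S_r$ and refines it to $S'_r$; your proposal instead runs a recursion tailored to a single $r$, which is fine for this theorem but does not directly give the recursive $\bm r$--divisions of Theorem~\ref{thm:recursive-r-division}.
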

In fact, our algorithm further guarantees 
that all regions in the $r$--division are connected. 
This property is desirable in some applications (e.g.,~\cite{LNSWN12}). 

\subsection{Recursive Divisions and Decomposition Trees}
\label{sec:prel:recursivediv}
Some algorithms, 
e.g.~the shortest-path algorithm of Henzinger et al.~\cite{journals/jcss/HenzingerKRS97}, 
require that the graph be decomposed into regions 
that are in turn decomposed into regions, 
and so on. 
The Henzinger et al.~algorithm requires roughly $\log^* n$ levels of decomposition, 
so it would take more than linear time 
to find all the different divisions independently. 
We describe a simple decomposition of a planar graph 
that allows one to obtain such recursive divisions in linear time.

\begin{definition}
A {\em decomposition tree} for $\barG$ is 
a rooted tree in which 
each leaf is assigned a region of $\barG$ 
such that each edge of $\barG$ is represented in some region. 
For each node $x$ of the decomposition tree, 
the {\em region $R_x$ corresponding to $x$} is 
the subgraph of $\barG$ that is 
the union of the regions assigned to descendants of~$v$.
\end{definition}

\begin{definition}
A decomposition tree $\mathcal T$ 
{\em admits} an $r$--division with few holes 
if there is a set $S$ of nodes of~$\mathcal T$ 
whose corresponding regions form 
an $r$--division of~$\barG$ with few holes. 
\end{definition}

\begin{theorem} \label{thm:decomposition}
For a constant $s$, 
there is a linear-time algorithm that, 
for any biconnected triangulated planar embedded graph $G$, 
outputs a binary decomposition tree $\mathcal T$ for $G$ 
that admits an $r$--division of $G$ with few holes 
for every $r \geq s$.
\end{theorem}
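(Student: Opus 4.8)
The plan is to build the decomposition tree by recursively applying the balanced cycle separator of Lemma~\ref{lem:fundamental-cycle-separator} (in the hole-aware form used to prove Theorem~\ref{thm:r-division}), but \emph{without} committing to a particular value of $r$: we keep recursing until each region is a triangle (or has constant size). Concretely, given a region $R$ with a constant number of holes, we assign weights $1/|V(R)|$ to the faces of $R$, find a balanced simple cycle separator $C$ that also balances the number of holes (exactly as in the argument behind Theorem~\ref{thm:r-division}), and split $R$ along $C$ into two child regions, each of which again has a constant number of holes and a constant fraction fewer vertices. Interleaving these ``vertex-reducing'' steps with the ``hole-reducing'' steps of the $r$--division algorithm, and recursing on both children, produces a binary tree $\mathcal T$ whose leaves are constant-size regions covering all edges of $G$; this is a decomposition tree in the sense defined above. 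The essential point is that the \emph{same} tree works for all $r$: the hole-count and size guarantees are invariant of $r$, and $r$ only enters when we decide \emph{where to stop}.

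The second step is to show that $\mathcal T$ admits an $r$--division with few holes for every $r \ge s$. Given $r$, define $S$ to be the set of nodes $x$ such that $|V(R_x)| \le r$ but $|V(R_{\parent(x)})| > r$ (together with any leaves on branches that never exceed $r$). Because each split reduces the vertex count by a constant factor, every region $R_x$ with $x \in S$ has $\Omega(r)$ vertices, so there are only $O(n/r)$ of them, and by construction each has at most $r$ vertices. Each $R_x$ inherits the $O(1)$-holes property from the recursive construction. The boundary-size bound $O(\sqrt r)$ is obtained the same way it is obtained in the proof of Theorem~\ref{thm:r-division}: a boundary vertex of $R_x$ lies on a separator cycle used at an ancestor of $x$ that is a descendant-or-equal of the lowest ancestor whose region exceeds, say, $Kr$ for a suitable constant $K$; summing the lengths $O(\sqrt{|V(\cdot)|})$ of those $O(1)$-many separator cycles over the relevant ancestors, whose sizes decrease geometrically from $O(r)$, gives $O(\sqrt r)$. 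The ``edges shared by two regions lie on holes of both'' condition holds because separator cycles become part of the boundary (hence lie on a hole) of both sides when we cut along them.

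I expect the main obstacle to be the \emph{uniform} boundary-size analysis across all $r$ simultaneously, rather than for one fixed $r$. For a single $r$ one can afford a somewhat loose accounting, but here we must argue that the very same tree — built once, oblivious to $r$ — yields $O(\sqrt r)$ boundary per region for \emph{every} $r$. This requires that the recursion be ``geometrically well-behaved'' at every scale: between the moment a region's size drops below $cr$ and the moment it drops below $r$, only $O(1)$ separator cuts occur, and their total length is $O(\sqrt r)$. Guaranteeing this may force the construction to alternate vertex-reduction and hole-reduction phases in a carefully balanced way (as Goodrich's algorithm does for the hole-free case), so that no scale sees more than a constant number of cuts; getting this interleaving right — and ensuring the hole count never blows up while we do it — is the delicate part. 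The remaining pieces (that the tree is binary, that every edge is covered, that the construction runs in linear time) follow from the linear-time $r$--division machinery already established for Theorem~\ref{thm:r-division}, invoked with $r$ ranging over a geometric sequence down to the constant $s$, with the total work forming a geometric series and hence $O(n)$.
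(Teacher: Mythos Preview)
Your overall strategy---build a single recursive binary tree via repeated cycle separators, then for any given $r$ extract a cut $S_r$---is exactly the paper's approach. However, your boundary-size argument has a genuine gap.

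You claim that the boundary vertices of a region $R_x$ with $x\in S$ come only from separator cycles at ancestors whose regions have size between $r$ and $Kr$, so that there are $O(1)$ such cycles each contributing $O(\sqrt r)$ vertices. This is false: boundary vertices of $R_x$ can come from separator cycles at ancestors of \emph{any} size, all the way up to the root. Concretely, if every vertex-balancing cut happens to put all of its $c\sqrt{|V(R)|}$ cycle vertices on the same side, then a descendant region of size $r$ accumulates boundary of order $c\sqrt n\,(1+\sqrt\alpha+\alpha+\cdots)=\Theta(\sqrt n)$, not $O(\sqrt r)$. Hence the set $S_r$ of maximal size-$\le r$ regions need \emph{not} be an $r$--division: individual regions can have far too many boundary vertices. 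This is precisely the phenomenon Frederickson's original analysis had to work around.

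The paper fixes this with two ingredients missing from your proposal. First, the recursion cycles through \emph{three} balance criteria depending on depth $\bmod\ 3$: number of vertices, number of \emph{boundary} vertices, and number of holes. Your proposal only alternates between vertices and holes. Second, extracting an $r$--division is a two-phase Frederickson-style argument (Lemmas~\ref{lem:phase1} and~\ref{lem:phase2}): Phase~1 shows only that the \emph{total} boundary over all of $S_r$ is $O(N/\sqrt r)$ (so the average is $O(\sqrt r)$); Phase~2 then descends further from each $x\in S_r$ with excessive boundary to the rootmost descendants with $\le c'\sqrt r$ boundary vertices, and shows the total number of regions remains $O(N/r)$. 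The boundary-balancing cuts at depths $\equiv 1\pmod 3$ are exactly what guarantees that this further descent does not create too many additional regions---without them, Phase~2 fails.

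A smaller issue: your linear-time argument invokes Theorem~\ref{thm:r-division} as already established, but the dependency goes the other way---Theorem~\ref{thm:r-division} is a corollary of Theorem~\ref{thm:decomposition}. The paper achieves linear time by computing auxiliary data structures (a level-monotone primal spanning tree, its cotree, and a component tree of dual BFS levels) once, maintaining them under cuts so that each separator is found and each region split in \emph{sublinear} time; the total work then satisfies the recurrence of Lemma~\ref{lem:recurrence} with $\beta<1$ and $r=s$, yielding $O(n)$.
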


\begin{definition}
For an exponentially increasing sequence ${\bm r}=(r_1, r_2, \ldots)$ of numbers, 
a {\em recursive ${\bm r}$--division of $G$ with few holes} is 
a decomposition tree for $G$ in which, 
for $i=1,2, \ldots$, 
the nodes at height~$i$ correspond to regions 
that form an $r_i$--division of $G$ with few holes.
\end{definition}
A recursive ${\bm r}$--division is exactly the kind of structure 
that is useful for the linear-time shortest-paths algorithm~\cite{journals/jcss/HenzingerKRS97}, 
although that algorithm does not require one with few holes.

\begin{theorem} \label{thm:recursive-r-division}
There is a linear-time algorithm that, 
given a decomposition tree satisfying the condition of Theorem~\ref{thm:decomposition}, 
and given an increasing sequence $\bm{r}$, 
returns a {\em recursive $\bm{r}$--division of $G$ with few holes}.
\end{theorem}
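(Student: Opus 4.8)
The plan is to extract the recursive $\bm r$--division by repeatedly invoking the node-set $S$ guaranteed by Theorem~\ref{thm:decomposition}, once for each value $r_i$, and then to argue that these node-sets can be chosen to be laminar (nested), so that they assemble into a single decomposition tree. First I would note that, because the decomposition tree $\mathcal T$ of Theorem~\ref{thm:decomposition} admits an $r$--division with few holes for \emph{every} $r\ge s$, in particular it admits one for each $r_i$ in the given sequence $\bm r=(r_1,r_2,\dots)$. Let $S_i\subseteq V(\mathcal T)$ be a set of nodes whose corresponding regions $\{R_x : x\in S_i\}$ form an $r_i$--division of $G$ with few holes. The key structural observation is that each $S_i$ can be taken to be an \emph{antichain} of $\mathcal T$ whose regions partition the edge set of $G$ — indeed this is forced: distinct nodes of $S_i$ whose regions share an edge would have to be comparable in $\mathcal T$ (since the region of an ancestor contains the region of a descendant), and a standard pruning argument lets us replace any $S_i$ by an antichain without increasing region sizes or the number of boundary vertices.

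Second, I would establish nestedness: the natural choice of $S_i$ is the set of \emph{leafmost} nodes $x$ of $\mathcal T$ with $|R_x|\le r_i$, i.e.\ the nodes satisfying $|R_x|\le r_i$ but $|R_{\text{parent}(x)}|>r_i$. Since $r_1<r_2<\cdots$, every node counted in $S_{i+1}$ is an ancestor of (or equal to) some node in $S_i$, because $|R_x|\le r_{i+1}$ and $|R_x|\le r_i$ both hold on a downward-closed-with-respect-to-this-property portion of the tree; formally, the antichain $S_{i+1}$ is obtained from $S_i$ by contracting, so the subtrees rooted at nodes of $S_i$ refine the subtrees rooted at nodes of $S_{i+1}$. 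This gives exactly the layered structure required by the definition of a recursive $\bm r$--division: we build the output decomposition tree $\mathcal T'$ by taking, at height $i$, the nodes $S_i$, with $x\in S_{i+1}$ the parent of those $y\in S_i$ lying in its subtree of $\mathcal T$ (and the leaves of $\mathcal T'$ being the regions assigned by $\mathcal T$ itself, or equivalently $S_0$ with $r_0=s$). Because $\mathcal T$ has $O(n)$ nodes, one pass computing $|R_x|$ for all $x$ (bottom-up) and one pass marking the leafmost nodes for each threshold suffices.

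Third comes the running-time analysis, which is where the \emph{exponentially} increasing hypothesis on $\bm r$ is used. Naively, sweeping the tree once for each $i$ costs $O(|\mathcal T|)$ per level and there may be many levels; instead I would compute all the subtree sizes $|R_x|$ once in $O(n)$ time, and then realize the layers in a single traversal: as we walk up from the leaves, a node $x$ joins $S_i$ for exactly those $i$ with $|R_x|\le r_i<|R_{\text{parent}(x)}|$, which is a contiguous range of indices; because the $r_i$ grow exponentially, $|R_x|$ and $|R_{\text{parent}(x)}|$ differ by at most a constant factor (the tree is binary, so $|R_{\text{parent}(x)}| \le 2|R_x| + O(1)$), hence each node contributes to only $O(1)$ layers, and the total work of emitting the layered tree is $O(n)$. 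I would also have to check that the two "few holes" conditions — that a shared edge lies on a hole of each region, and that each region has $O(1)$ holes — are inherited at every layer; but these hold for each $S_i$ directly by Theorem~\ref{thm:decomposition}, so nothing new is needed beyond observing that $\mathcal T'$'s height-$i$ regions are exactly the regions $\{R_x : x\in S_i\}$.

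The main obstacle I anticipate is not any single step but the bookkeeping to guarantee that the \emph{same} decomposition tree $\mathcal T$ simultaneously yields all the antichains $S_i$ with a consistent ancestor relation — in other words, that the canonical "leafmost nodes below the size threshold" choice is genuinely laminar across all thresholds. Once one commits to that canonical choice the nestedness is essentially immediate from monotonicity of $x\mapsto|R_x|$ along root-to-leaf paths, but one must be slightly careful at nodes where $R_x$ already has size $\le s$ (the base case) and at the root, and one must confirm that collapsing chains of $\mathcal T$ into single edges of $\mathcal T'$ does not destroy the region-correspondence (it does not, since $R_x$ is defined purely as a union over leaf-descendants, which is unaffected by contraction). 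Everything else is a linear-time tree traversal.
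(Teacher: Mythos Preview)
Your proposal has a real gap in the choice of antichains. The set of ``leafmost nodes with $|R_x|\le r_i$'' is exactly the set $S_{r_i}$ of Lemma~\ref{lem:phase1}, and that set is \emph{not} in general an $r_i$--division: Lemma~\ref{lem:phase1} only bounds the \emph{total} number of boundary vertices over all regions of $S_{r_i}$ by $O(N/\sqrt{r_i})$, so the average boundary per region is $O(\sqrt{r_i})$, but individual regions of $S_{r_i}$ may have many more than $c'\sqrt{r_i}$ boundary vertices. The $r_i$--division that $\mathcal T$ actually admits is the refined set $S'_{r_i}$ of Lemma~\ref{lem:phase2}, obtained by descending from each $x\in S_{r_i}$ to the rootmost descendant whose boundary count is at most $c'\sqrt{r_i}$. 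Your laminarity argument therefore has to be redone for the sets $S'_{r_i}$; fortunately it still goes through, since for $r_1<r_2$ every $y\in S'_{r_1}$ has $b(y)\le c'\sqrt{r_1}\le c'\sqrt{r_2}$ and hence lies at or below some node of $S'_{r_2}$, but this is an extra step you have not addressed and it requires knowing the internal structure of the proof of Theorem~\ref{thm:decomposition}, not just its statement.

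There is also a false step in your running-time analysis: the inequality $|R_{\text{parent}(x)}|\le 2|R_x|+O(1)$ does not hold. Binarity of $\mathcal T$ says nothing about the relative sizes of siblings, and at recursion depths with $\ell\bmod 3\neq 0$ the separator balances boundary vertices or holes rather than vertex counts, so one child can be arbitrarily small compared to its parent. The linear-time bound is nonetheless correct by a different count: since $|S'_{r_i}|=O(N/r_i)$ (Lemma~\ref{lem:phase2}) and the $r_i$ grow exponentially, $\sum_i |S'_{r_i}|$ is a geometric series summing to $O(N)$, and a single traversal of $\mathcal T$ with $n(x)$ and $b(x)$ precomputed suffices to emit all layers. (The paper itself gives no detailed argument here, merely asserting that the theorem ``follows easily'' from Theorem~\ref{thm:decomposition}; your write-up is attempting more than the paper does, which is why these details matter.)
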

Note that 
Theorems~\ref{thm:r-division} and~\ref{thm:recursive-r-division} 
follow easily from Theorem~\ref{thm:decomposition}.

\subsection{A Recurrence Relation}
The following lemma, 
whose proof is provided in the appendix, 
is used to analyze several quantities throughout the paper. 
\begin{lemma}\label{lem:recurrence}
Let $\frac{1}{2} \leq \beta < 1$, 
let $c$ be a constant, 
and 
let 
$$T_r(n) \leq \left\{ \begin{array}{cc}
\rho n^\beta + \max_{\{\alpha_i\}} \sum_i T_r(\alpha_i n) & \mbox{if }
n > r \\
0 & \mbox{if } n \leq r \end{array} \right. ,$$
where the maximization is over $\{\alpha_i\}_{i=1}^8$ such that
\begin{enumerate}
\item $\alpha_i \leq 3/4 + c/\sqrt n$ for $i=1, \ldots, 8$, and 
\item $1 \leq \sum_i \alpha_i \leq 1+ c/\sqrt n$.
\end{enumerate}
There exists a constant $s$ such that 
for every $r\geq s$, 
$T_r(n)$ is $O(n/r^{1-\beta})$.
\end{lemma}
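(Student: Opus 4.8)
The recurrence is a standard divide-and-conquer recursion with a sublinear additive term $\rho n^\beta$ and a branching into at most $8$ subproblems whose sizes sum to only slightly more than $n$. The plan is to guess a bound of the form $T_r(n) \le A\,n/r^{1-\beta} - B\,n^\beta$ for suitable constants $A, B$ (depending on $\rho, c, \beta$ but not on $n$ or $r$), and verify it by induction on $n$. The subtracted lower-order term $-B\,n^\beta$ is the usual trick that makes the induction go through: it provides the slack needed to absorb the additive $\rho n^\beta$ at each level. First I would record the base case: when $n \le r$ the claimed bound must be nonnegative, which forces a mild lower bound on $r$ (this is where the constant $s$ in the statement comes from) together with an appropriate choice of $B$ relative to $A$.

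\medskip

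For the inductive step, assume $n > r$ and that the bound holds for all smaller arguments. Substituting into the recurrence gives
\[
T_r(n) \le \rho n^\beta + \sum_{i=1}^{8}\Bigl(A\,\frac{\alpha_i n}{r^{1-\beta}} - B\,(\alpha_i n)^\beta\Bigr)
= \rho n^\beta + A\,\frac{n}{r^{1-\beta}}\sum_i \alpha_i - B\,n^\beta\sum_i \alpha_i^\beta .
\]
For the first sum I use constraint (2), $\sum_i \alpha_i \le 1 + c/\sqrt n$, so the leading term is at most $A\,n/r^{1-\beta} + Ac\,n^{1/2}/r^{1-\beta} \le A\,n/r^{1-\beta} + (\text{lower order})$; since $\beta \ge 1/2$ the extra piece is $O(n^\beta)$ and can be folded into the $n^\beta$ accounting. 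For the second sum I need a \emph{lower} bound on $\sum_i \alpha_i^\beta$. Here the key observation is that $x \mapsto x^\beta$ is concave, so subject to $\sum \alpha_i \ge 1$ and each $\alpha_i \le 3/4 + c/\sqrt n < 1$ (for $n$ large, i.e.\ $n \ge s$), the sum $\sum_i \alpha_i^\beta$ is minimized at an extreme configuration and is bounded below by a constant strictly greater than $1$ — intuitively, because no single $\alpha_i$ can carry all the mass, there are at least two ``active'' pieces, and concavity then gives $\sum \alpha_i^\beta \ge (\text{something like}) 2^{1-\beta} > 1$. Call this lower bound $1 + \eta$ with $\eta > 0$ a constant.

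\medskip

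Combining: $T_r(n) \le A\,n/r^{1-\beta} - B\,n^\beta + \bigl(\rho + (\text{absorbed lower-order terms}) - B\eta\bigr) n^\beta$ (after absorbing the $O(n^\beta)$ slack from the first sum into the bracket). Choosing $B$ large enough that $B\eta$ dominates $\rho$ plus those absorbed constants makes the bracket nonpositive, completing the induction and yielding $T_r(n) = O(n/r^{1-\beta})$ as claimed. The constant $s$ is then taken large enough to guarantee both that the base case is valid and that $3/4 + c/\sqrt n < 1$ (and more generally bounded away from $1$) for all $n \ge s$, which is what the concavity argument needs; note the lemma only asserts the conclusion for $r \ge s$, and we may also assume $n \ge s$ since for $n < s$ the recursion has already bottomed out or can be handled by enlarging the constant.

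\medskip

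\textbf{Main obstacle.} The delicate point is the uniform-in-$n$ lower bound $\sum_i \alpha_i^\beta \ge 1 + \eta$ for the extremal (adversarial) choice of the $\alpha_i$. One must argue carefully that the per-piece cap $\alpha_i \le 3/4 + c/\sqrt n$ together with $\sum_i \alpha_i \ge 1$ forces enough ``spreading'' of the mass that concavity buys a strictly-larger-than-$1$ constant, and that this $\eta$ does not degrade to $0$ as $n \to \infty$ (it doesn't, because the cap converges to $3/4 < 1$, so there are always at least two substantially-sized pieces). Everything else — the base case, absorbing the $c/\sqrt n$ terms, choosing $A$ and $B$ — is routine once that inequality is in hand.
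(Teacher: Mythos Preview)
Your overall strategy---induction with the ansatz $T_r(n)\le A\,n/r^{1-\beta}-B\,n^\beta$ and the key inequality $\sum_i\alpha_i^\beta\ge 1+\eta$ coming from concavity plus the cap $\alpha_i\le 3/4+c/\sqrt n$---is exactly the paper's. But one step fails as written. You assert that ``when $n\le r$ the claimed bound must be nonnegative'' and that this can be arranged by choosing $A,B,r$; it cannot. The quantity $A\,n/r^{1-\beta}-B\,n^\beta = n^\beta\bigl(A(n/r)^{1-\beta}-B\bigr)$ is negative for all $n < (B/A)^{1/(1-\beta)}r$, and you need $B>0$ for the subtraction trick, so no choice of constants makes the bound nonnegative on the whole interval $[0,r]$. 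Consequently your blanket substitution of the inductive hypothesis into all eight terms is unjustified: for an index $i$ with $\alpha_i n$ small you only know $T_r(\alpha_i n)\le 0$, not $T_r(\alpha_i n)\le A\alpha_i n/r^{1-\beta}-B(\alpha_i n)^\beta$, and the latter right-hand side can be strictly negative.

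The fix, which is precisely what the paper does, is to split the indices: apply the inductive hypothesis only to those $i$ with $\alpha_i n$ above the threshold where the bound is nonnegative (call this set $\mathcal I$), and use $T_r(\alpha_i n)\le 0$ for the rest. This costs you the $-B(\alpha_i n)^\beta$ contributions from $i\notin\mathcal I$, so you now need the stronger inequality $\sum_{i\in\mathcal I}\alpha_i^\beta\ge 1+\eta$ rather than the full sum. That in turn requires first showing $\sum_{i\in\mathcal I}\alpha_i$ is still close to $1$ (each dropped $\alpha_i$ is at most a small constant depending on the threshold), and then running your concavity argument on $\mathcal I$. One also has to handle separately the boundary case $\mathcal I=\emptyset$, where $n$ is forced into a bounded multiple of $r$. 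Once you make this split explicit, the rest of your sketch goes through and coincides with the paper's argument.
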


\subsection{Dynamic Trees and Euler-tour Trees}
\label{sec:prelim:dtree}

The dynamic-tree data structure of Sleator and Tarjan~\cite{ST83}
represents rooted forests so as 
to support topological operations {\sc Link} and {\sc Cut} 
(removing and adding edges to the forests, respectively), 
and
to support assignments of weights to nodes with operations 
that add a number to the weights of all ancestors of a given node, 
and
that search among the ancestors for a node with minimum or maximum weight. 
Each operation can be performed in $O(\log n)$ amortized time. 
With care, 
given an initial tree, 
a dynamic-tree representation can be constructed so that 
the amortization is over $O(n/\log n)$ operations.

Many extensions and variants have been developed~\cite{ST83,journals/jal/EppsteinITTWY92,Frederickson97,ABHVW04,AHLT05,TW05}.
For example, using the self-adjusting top trees of Tarjan and Werneck~\cite{TW05}, 
one can search among descendants instead of ancestors, 
and 
one can represent embedded trees 
(adapting an idea of Eppstein et al.~\cite{journals/jal/EppsteinITTWY92}). 

The Euler-tour-tree data structure of Henzinger and King~\cite{HK99}
represents an Euler tour of a tree using a balanced binary search tree. 
It supports topological {\sc Link} and {\sc Cut} operations. 
In particular, 
for a vertex~$v$ of a tree, 
any consecutive 
(in the order used by the Euler tour) 
set of children of $v$ 
can be cut in a single operation 
that requires a constant number of binary search tree operations, 
and hence takes $O(\log n)$ time. 
This is useful for 
representing embedded trees, 
by choosing the order in which nodes are visited by the Euler tour 
to be their cyclic order in the embedding. 
The Euler-tour tree can be decorated with additional labels 
to support ancestor and descendant searches in logarithmic time as well.

 \section{Computing a Decomposition Tree}\label{sec:rdiv}
In this section, 
we give a high-level description of 
the algorithm of Theorem~\ref{thm:decomposition} 
for computing a decomposition tree, 
and 
we prove its correctness. 
The input graph $\bar G$ is assumed to be biconnected and triangulated.
It follows that, 
for every region $R$ of $\bar G$, 
every natural face is a triangle.

The algorithm is as follows. 
Given the input graph $\bar G$, 
the algorithm performs some preprocessing 
necessary for the sublinear-time cycle-separator algorithm (Algorithm~\ref{algo:prepro}), 
and 
calls {\sc RecursiveDivide}$(\bar G, 0)$. 
This procedure is given in Algorithm~\ref{algo:rdiv}.

\begin{algorithm}[h!]
\DontPrintSemicolon
 \SetAlgoLined
 let $n =\abs{V(R)}$\;
 \lIf{$n\leq s$}
 {\Return a decomposition tree 
 consisting of a leaf assigned $R$ }\label{line:basecase}\;
 \lIf{$\ell \bmod 3 = 0$}
 {separator chosen below to balance number of vertices}\; 
 \lElseIf{$\ell\bmod 3 = 1$}
 {separator chosen to balance number of boundary vertices}\;
 \lElseIf{$\ell\bmod 3 = 2$}
 {separator chosen to balance number of holes}\;

 let $R'$ be the graph obtained from $R$ as follows: 
 triangulate each hole by 
 placing an artificial vertex in the face 
 and 
 connecting it via artificial edges to 
 all occurrences of vertices on the boundary of the hole \label{line:triangulation}\;

 find a balanced simple-cycle separator $C$ in $R'$ 
 with at most $c\sqrt{n}$ natural vertices \label{line:cyclesep}\;

let $F_0,F_1 $ be 
the sets of natural faces of $R$ 
enclosed and not enclosed by $C$, respectively \label{line:divide}\;
 \For{$i \in \{0,1\}$}
 {let $R_i$ be 
 the region consisting of 
 the edges of faces in $F_i$ and
 the edges of $C$ that are in $R$\label{line:R_i}\;

$\mathcal T_i \leftarrow ${\sc RecursiveDivide}$(R_i,\ell+1)$\;
}
return the decomposition tree $\mathcal T$ consisting of 
a root with left subtree $\mathcal T_0$ and right subtree $\mathcal T_1$

 \caption{{\sc RecursiveDivide}$(R,\ell)$ \label{algo:rdiv}}
\end{algorithm}

This procedure, 
given a connected region $R$ with more than $s$ edges, 
and 
given a recursion-depth parameter~$\ell$, 
first triangulates each hole of $R$ 
by adding an artificial vertex 
and 
attaching it via artificial edges 
to each occurrence of a vertex on the boundary of the hole. 
Let $R'$ be the resulting graph. 
See Figures~\ref{fig:cut-along-cycle1},~\ref{fig:cut-along-cycle2} for an illustration. 
The vertices and edges that are not artificial are {\em natural}. 
Triangulating in this way establishes biconnectivity of $R'$. 

\begin{lemma}\label{lemma:biconnected}
$R'$ is biconnected.
\end{lemma}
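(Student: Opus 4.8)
The plan is to show that $R'$ has no cut vertex by analyzing the two ways a cut vertex could arise: either within the original region $R$, or through the artificial triangulation structure. Recall that $R$ is a connected region, so it is already connected; what we must rule out is a single vertex whose removal disconnects $R'$.

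First I would observe that every face of $R$ is a simple closed walk once we pass to $R'$: a natural face is a triangle of $\bar G$ (since $\bar G$ is triangulated), hence a simple triangle in $R$, and a hole, after inserting the artificial apex $a_h$ joined to every occurrence of a boundary vertex, is carved into triangles $a_h x y$ where $xy$ ranges over consecutive darts of the hole's facial walk. So $R'$ is a triangulated plane graph in which every face is a genuine triangle with three distinct vertices. The key structural point is that for a connected plane graph, being $2$-connected is equivalent to every face boundary being a simple cycle (no vertex repeated on the walk bounding a face). So it suffices to argue that no facial walk of $R'$ repeats a vertex.

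Next I would verify that no facial walk of $R'$ repeats a vertex. The natural faces are triangles of $\bar G$ and cause no trouble. For a hole $h$ of $R$, its facial walk in $R$ might repeat a vertex $v$ (a vertex of $R$ can appear several times on the boundary of a single hole). This is exactly why the construction attaches $a_h$ to each \emph{occurrence} of a boundary vertex rather than to each boundary vertex: the new triangular faces $a_h x y$ use the darts of the facial walk, so each such triangle is bounded by two hole-boundary darts and two artificial edges, and its three corners are $a_h$ and two consecutive dart-endpoints, which are distinct as vertices of $R'$ because consecutive darts on a facial walk of a graph without loops have distinct endpoints. (Here I would note $\bar G$ is simple, so no loops, and $R$ inherits this.) Since every facial walk of $R'$ is a simple cycle, $R'$ is $2$-connected, i.e.\ biconnected.

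The step I expect to be the main obstacle is making precise the claim that attaching $a_h$ to occurrences rather than to vertices genuinely repairs the potential non-simplicity of the hole boundary — in particular, handling the degenerate case where the facial walk of a hole traverses an edge twice (a bridge of $R$ lying on the hole), so that two artificial triangles $a_h x y$ and $a_h y x$ are created on opposite sides of that edge; I would argue these are still two distinct simple triangular faces and that no cut vertex results, because the cycle $a_h x a_h$-avoiding argument still goes through via the apex. A secondary subtlety is confirming that the planar embedding of $R'$ is well-defined (the artificial edges can be inserted into the rotation system at each occurrence in the unique consistent cyclic order inside the hole), which is routine but worth a sentence. Once these points are nailed down, the equivalence ``connected plane graph with all facial walks simple cycles $\Rightarrow$ biconnected'' finishes the proof.
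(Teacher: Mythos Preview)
Your approach is correct but genuinely different from the paper's. The paper argues directly at the level of cut vertices: it takes a cut vertex $u$ of $R$, picks two embedding-consecutive edges $uv$, $uw$ lying in different blocks, observes that the face between them cannot be a triangle of $\bar G$ (since $vw\notin R$) and hence is a hole $h$, and then exhibits the alternate $v$--$w$ path $v\,x_h\,w$ through the artificial apex. Your route instead invokes the global characterization ``connected plane graph with every facial walk a simple cycle $\Rightarrow$ 2-connected'' and verifies that every face of $R'$ is a genuine triangle.

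Each approach has its merits. The paper's argument is short and self-contained, needing no outside facts, though as written it leaves implicit why artificial apices cannot be cut vertices and why handling every \emph{consecutive} pair of edges at $u$ suffices. Your argument is cleaner in that both of these points come for free once all faces are simple cycles, and it also makes transparent why connecting $x_h$ to \emph{occurrences} (rather than vertices) is the right construction. The cost is that you must cite or prove the face-boundary characterization, and you must dispose of the degenerate cases you flag (bridges on a hole, multiple occurrences of a vertex); your handling of these is fine, since $\bar G$ simple implies $v_{i-1}\neq v_i$ along the facial walk, so every artificial triangle has three distinct vertices regardless of multiplicities or repeated edges.
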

\begin{proof}
Let $u$ be a cut vertex of $R$. 
Let $uv$ and $uw$ be two consecutive (in the embedding) edges incident to $u$ 
that belong to different biconnected components. 
Since $vu,uw$ is a $v$-to-$w$ path in $R$, 
the edge $vw$ is not in $R$. 
Hence $uv$ and $uw$ belong to a face $h$ of $R$ 
that is not a triangle, 
hence $h$ is a hole. 
After triangulation, 
$vx_h,x_hw$ is another vertex-disjoint path from $v$ to $w$, 
using the artificial node $x_h$. 
\end{proof}

Next, 
the procedure uses the {\sc SimpleCycleSeparator} procedure
(Section~\ref{sec:cyclesep:algo}) 
to find a simple-cycle separator $C$ 
consisting of at most $c\sqrt{n}$ natural vertices,
where $c$ is a constant 
and 
$n:=\abs{V(R)}$ is the number of vertices of $R$ 
(which is equivalent to the number of natural vertices of $R'$). 
Depending on the current recursion depth $\ell$, 
the cycle separates $R$ in a balanced way 
with respect to either vertices, boundary vertices, or holes. 

Note that 
{\sc SimpleCycleSeparator} is called on $R'$, 
which has more than $n$ vertices 
since it also has some artificial vertices. 
However, 
we show in Lemma~\ref{lem:num-holes} that 
there are at most twelve artificial vertices, 
so even when using a generic algorithm for finding a simple cycle separator, 
the bound of $c\sqrt{n}$ still holds for some choice of $c$ 
(since $n \geq s$). 
In fact, 
our procedure {\sc SimpleCycleSeparator} 
takes into account which vertices are artificial, 
and 
returns a separator consisting of at most $4\sqrt{3n}$ natural vertices, 
and possibly also some artificial vertices.
The number of artificial vertices on the separator 
does not matter in the analysis of {\sc RecursiveDivide}.

The cycle $C$ determines 
a bipartition of the faces of the triangulated graph $R'$, 
which in turn induces a bipartition $(F_0, F_1)$ 
of the natural faces of $R$. 
For $i\in \set{0,1}$, 
let $R_i$ be 
the region consisting of 
the edges bounding the faces in $F_i$, 
together with the edges of $C$ that are in $R$ 
(i.e.~omitting the artificial edges 
added to triangulate the artificial faces). 
See Figures~\ref{fig:cut-along-cycle3},~\ref{fig:cut-along-cycle4} for an illustration.

\begin{figure}
\begin{center}
\subfigure[A schematic diagram of 
a region $R$ with four holes 
(white faces, one of them being the unbounded face). 
\label{fig:cut-along-cycle1}]
{\includegraphics[width=0.4\textwidth]{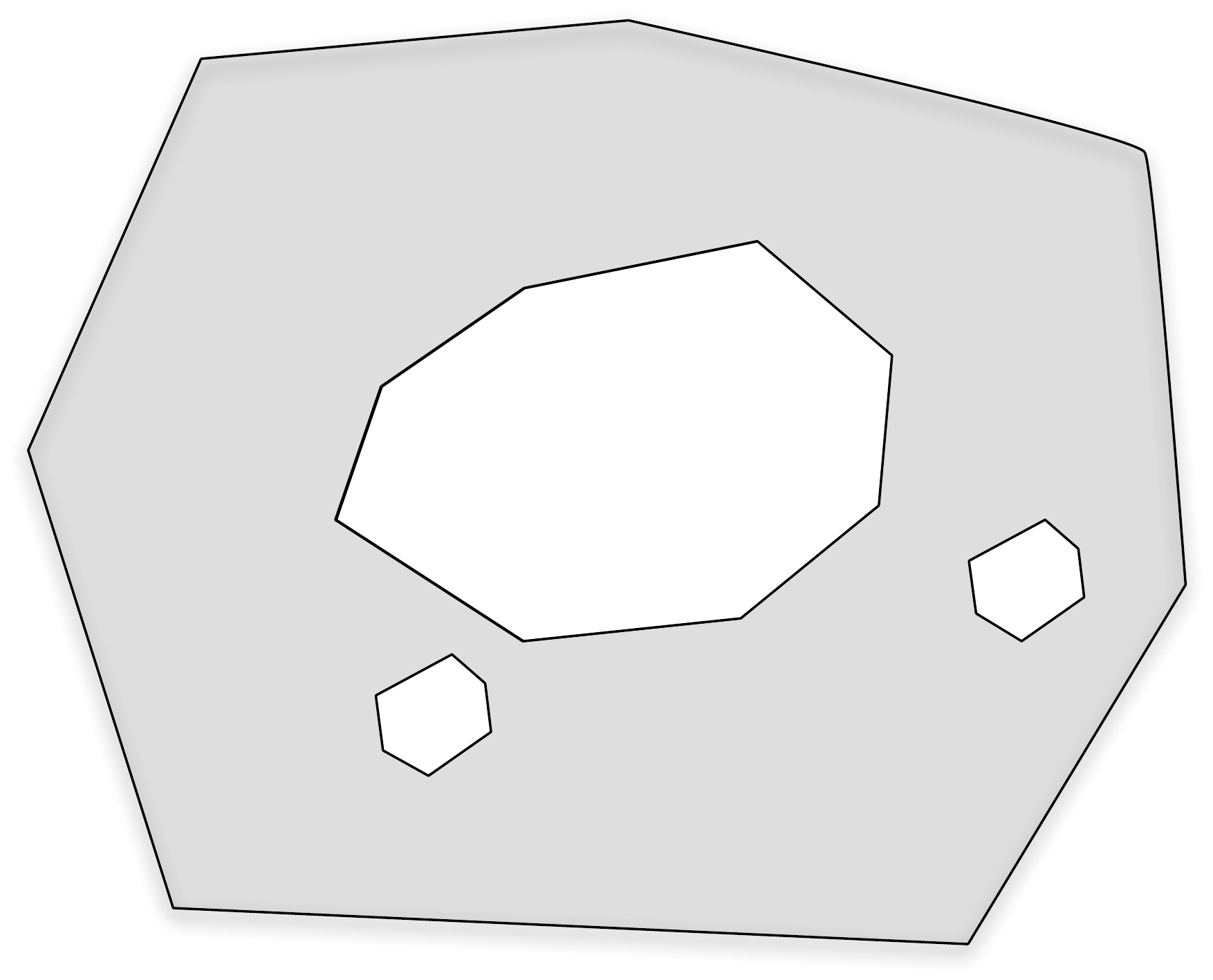}}
\hspace{0.1\textwidth}
\subfigure[The graph $R'$ and a cycle separator $C$ (solid red). 
Artificial triangulation edges are dashed 
(triangulation edges are not shown for the unbounded hole to avoid clutter). 
\label{fig:cut-along-cycle2}]
{\includegraphics[width=0.4\textwidth]{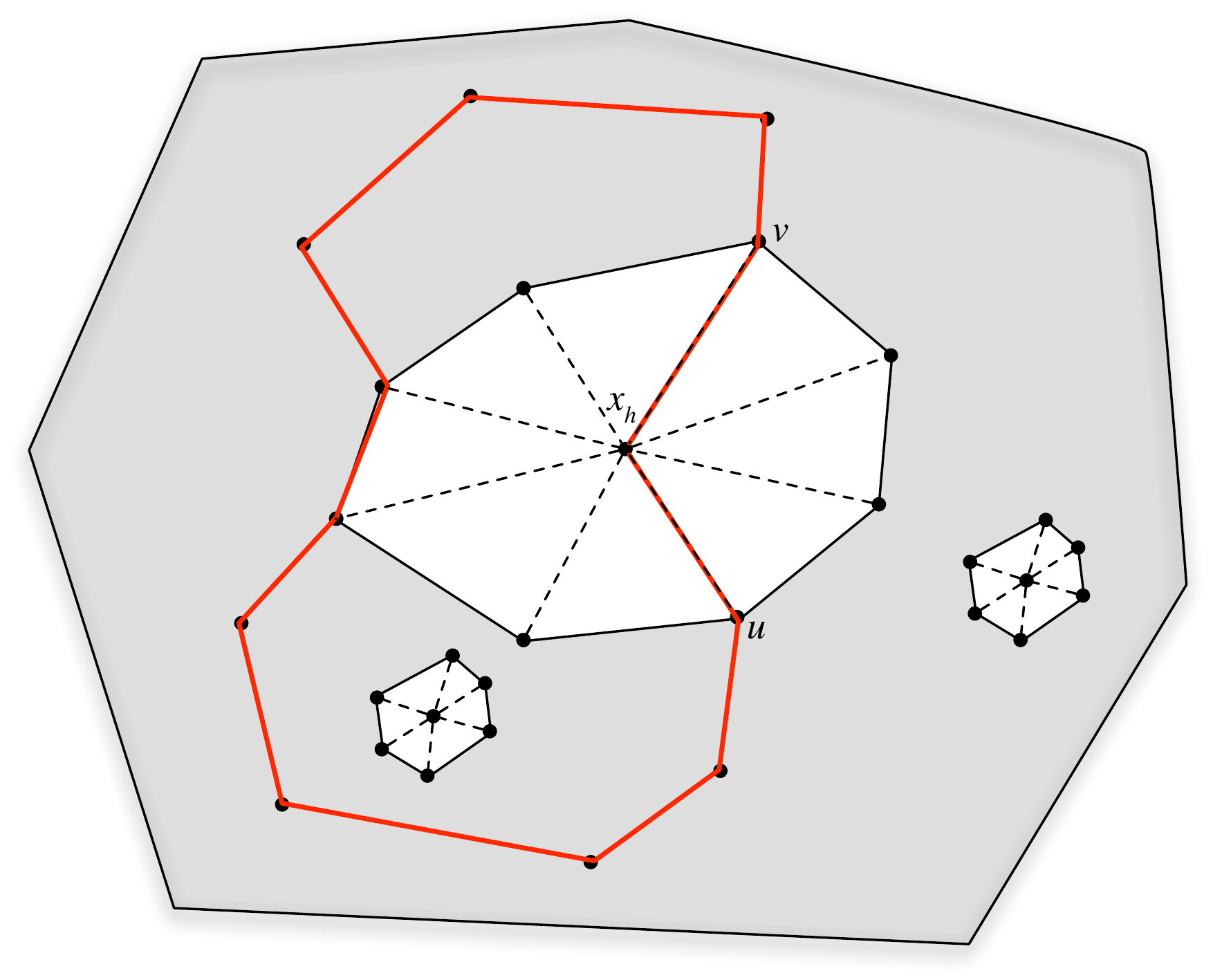}}
\\
\subfigure[The region $R_0$ 
consisting of the edges bounding the faces not enclosed by $C$ 
together with the edges of $C$ that belong to $R$. 
Equivalently, 
$R_0$ is the subgraph of $R'$ not strictly enclosed by $C$
without any artificial edges and vertices. 
$R_0$ has three holes.\label{fig:cut-along-cycle3}]
{\includegraphics[width=0.4\textwidth]{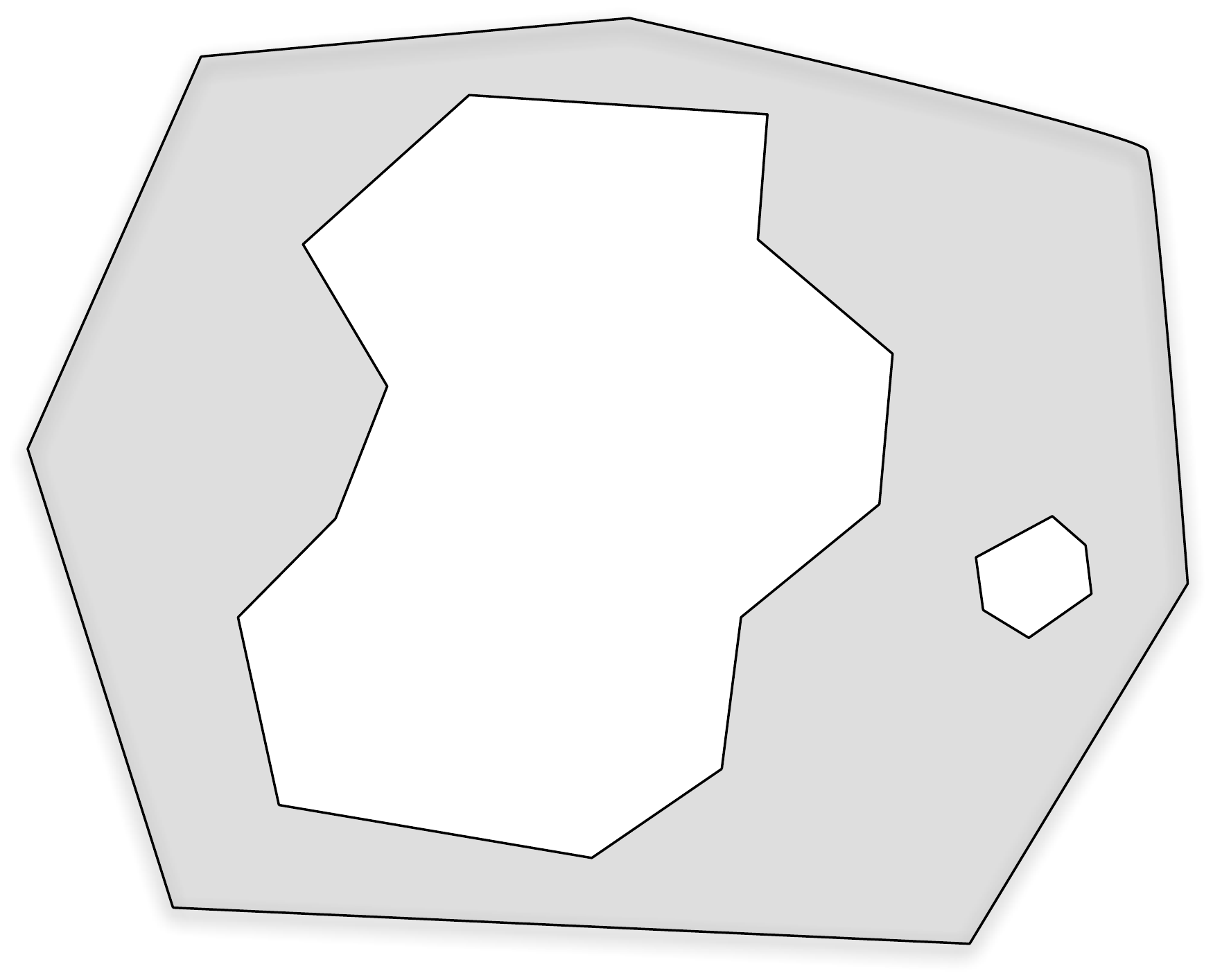}}
\hspace{0.1\textwidth}
\subfigure[The region $R_1$ 
consisting of the edges bounding the faces enclosed by $C$ 
together with the edges of $C$ that belong to $R$. 
Equivalently, 
$R_1$ is the subgraph of $R'$ enclosed by $C$
without any artificial edges and vertices. 
$R_1$ has two holes. 
Note that a hole is not necessarily a simple face.\label{fig:cut-along-cycle4}]
{\includegraphics[width=0.4\textwidth]{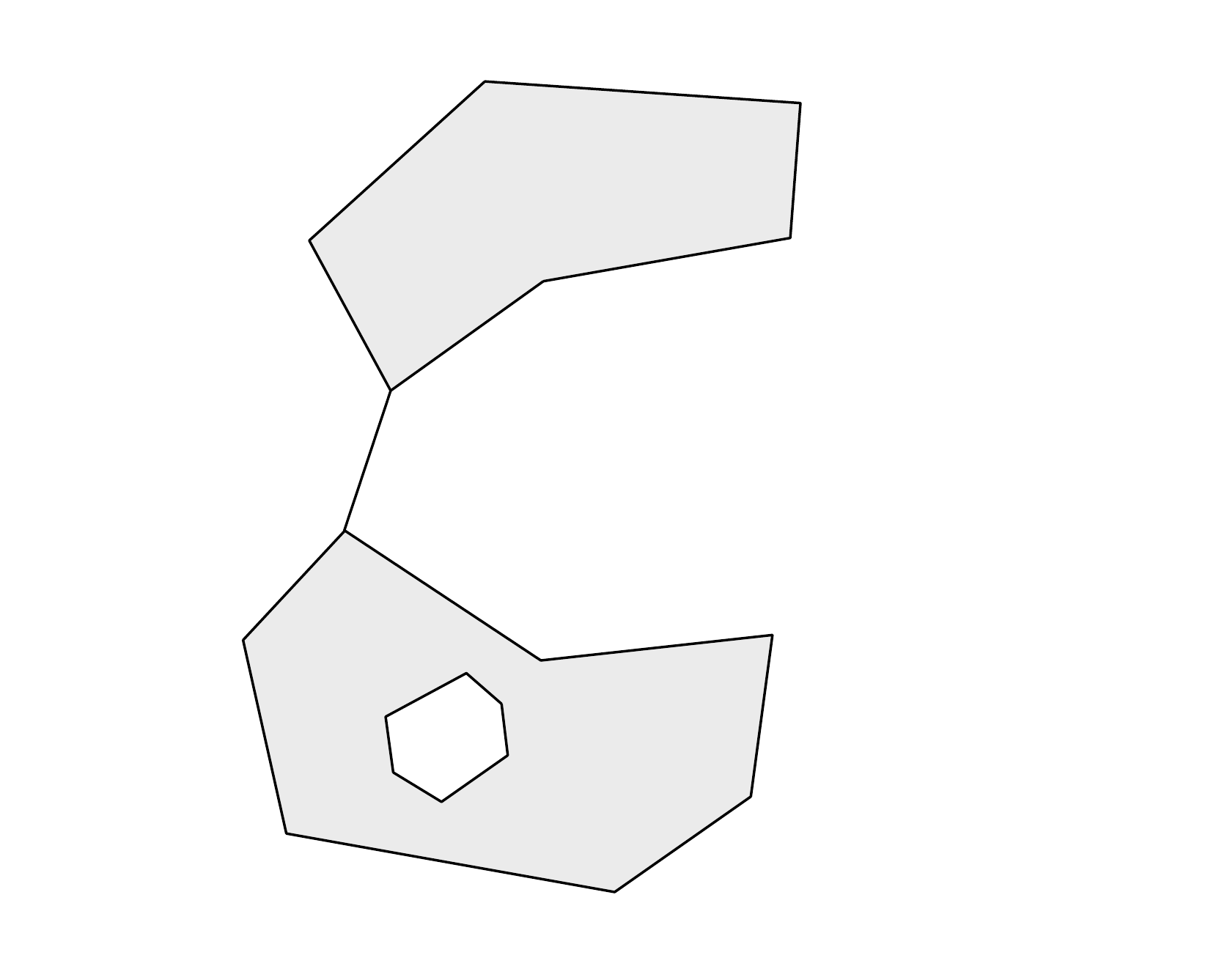}}
\end{center}
\caption{Illustration of triangulating a hole and separating along a cycle.}
\label{fig:cut-along-cycle}
\end{figure}

\begin{lemma} \label{lem:connected}
If $R$ is connected then 
$R_0$ is connected 
and 
$R_1$ is connected.
\end{lemma}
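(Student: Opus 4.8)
The plan is to argue that each $R_i$ is connected by tracing, for any two edges of $R_i$, a path in $R_i$ between them. Let me think about what $R_i$ looks like: it consists of the edges bounding natural faces in $F_i$, together with those edges of $C$ that are present in $R$ (so not the artificial triangulation edges). The cycle $C$ lives in $R'$, and $R'$ is $R$ with each hole triangulated by an artificial apex. Since $R$ is connected and $R'$ is obtained from $R$ only by adding vertices and edges inside holes, $R'$ is connected too; moreover $R'$ is biconnected by Lemma~\ref{lemma:biconnected}, so in particular every face of $R'$ is bounded by a simple cycle.

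First I would set up the topological picture: $C$ is a simple closed curve in the plane (realized in the embedding of $R'$), and it partitions the faces of $R'$ into those enclosed and those not enclosed. Fix $i\in\{0,1\}$ and consider the closed region $D_i$ of the plane that is the union of the closures of the faces of $R'$ on side $i$ of $C$ (together with $C$ itself). This $D_i$ is a closed, connected subset of the plane whose boundary is traced out by edges of $R'$; the portion of $R'$ lying in $D_i$ — call it $R'_i$ — is connected, because $D_i$ is connected and any vertex of $R'_i$ is on the boundary of some face in $D_i$, and faces on the same side sharing a boundary walk chain together along $C$ or along interior edges. The key sub-step here is: two adjacent faces on side $i$ share an edge of $R'$, so their boundary walks connect; and the faces on side $i$ that touch $C$ do so along $C$, which is itself connected — so the face-adjacency graph restricted to side $i$ is connected, hence $R'_i$ is connected. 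I'd want to be a little careful because a "hole" of $R_i$ need not be a simple face (as the caption of Figure~\ref{fig:cut-along-cycle4} warns), but $C$ being simple and $R'$ being biconnected keeps the side-$i$ subgraph $R'_i$ connected.

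Next I would pass from $R'_i$ to $R_i$ by deleting the artificial vertices and artificial edges that survive in $R'_i$. The point is that an artificial vertex $x_h$ (the apex triangulating a hole $h$ of $R$) has all its incident edges artificial, and $x_h$ together with its incident edges lies strictly inside the hole $h$; deleting $x_h$ and its edges from a connected plane graph only disconnects things if $x_h$ were a cut vertex, but every neighbor of $x_h$ is a natural vertex on the boundary walk of the hole $h$, and that boundary walk is a closed walk in $R$ (since $R$ is connected, actually it's a single closed walk around $h$), so all neighbors of $x_h$ remain mutually connected through $R_i\cap(\text{boundary of }h)$ after $x_h$ is removed. Thus deleting the artificial material does not disconnect $R'_i$, and what remains is exactly $R_i$ (the natural edges on side $i$, i.e.\ the edges bounding the natural faces in $F_i$, plus the natural edges of $C$, which are precisely $C\cap R$). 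Hence $R_i$ is connected.

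The main obstacle I anticipate is the bookkeeping around holes that are not simple faces and around the boundary walk of a hole possibly visiting a vertex more than once (the "occurrences of vertices" language in line~\ref{line:triangulation}). I need the fact that when we remove an artificial apex $x_h$, its natural neighbors are re-connected through the cycle $C$ or through the remaining boundary edges of $h$ that lie on side $i$ — and in degenerate configurations (a hole pinched at a cut vertex of $R$, a hole whose boundary $C$ crosses several times) one has to check that no "island" of side-$i$ faces is attached to the rest only via $x_h$. Resolving this cleanly is where I'd spend the effort: I expect the right tool is Fact~\ref{fact:simple-cycle-simple-cut} (simple cycle/simple cut duality) applied in $R'$, which says the edges of $C$ form a simple cut $\delta_{(R')^*}(S)$, so each side is the vertex set of a connected component in the dual — i.e.\ the side-$i$ faces form a connected dual subgraph — and this is exactly the connectivity of the face-adjacency graph on side $i$ that I used above, now obtained as a clean corollary rather than an ad hoc argument.
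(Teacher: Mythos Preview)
The paper states Lemma~\ref{lem:connected} without proof, so there is no ``paper's approach'' to compare against. Your argument is sound and is the natural one: use simple-cycle/simple-cut duality (Fact~\ref{fact:simple-cycle-simple-cut}) in $R'$ to conclude that the faces on each side of $C$ form a connected dual subgraph, hence $R'_i$ is connected; then delete each artificial apex $x_h$ and reroute through the boundary of $h$.

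The only place that needs tightening is the sentence ``all neighbors of $x_h$ remain mutually connected through $R_i\cap(\text{boundary of }h)$.'' As written it suggests the whole boundary walk of $h$ lies in $R_i$, which is false when $C$ passes through $x_h$. The clean fix is the one you almost state: because $C$ is simple it visits $x_h$ at most once, so the triangles of $h$ on side $i$ form a single contiguous fan about $x_h$. The natural edges of $\partial h$ bounding that fan form a walk between the two $C$-neighbors $u,v$ of $x_h$ (or a closed walk if $x_h\notin C$), and each such edge bounds a side-$i$ triangle, hence lies in $R'_i$ and, being natural, in $R_i$. That walk reconnects every $R'_i$-neighbor of $x_h$ without using any artificial vertex, so replacing each occurrence of $x_h$ on a path in $R'_i$ by the appropriate sub-walk gives a walk in $R_i$. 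Your concern about non-simple hole boundaries is absorbed by this argument: the walk along the fan may revisit vertices, but it is still a walk in $R_i$.
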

The procedure calls itself recursively on $R_0$ and $R_1$, 
obtaining decomposition trees $\mathcal T_0$ and $\mathcal T_1$, 
respectively. 
The procedure creates 
a new decomposition tree $\mathcal T$ 
by creating a new root corresponding to the region $R$ 
and assigning as its children the roots of $\mathcal T_0$ and $\mathcal T_1$.

\subsection{Number of Holes}

The triangulation step
(Line~\ref{line:triangulation}) 
divides each hole $h$ into a collection of triangle faces. 
We say a hole $h$ is {\em fully enclosed by $C$} if
all these triangle faces are enclosed by $C$ in~$R'$. 

\begin{lemma} \label{lem:enclosing-holes}
Suppose that there are $k$ holes 
that are fully enclosed by $C$. 
Then $R_0$ has $k+1$ holes.
\end{lemma}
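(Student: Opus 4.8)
The plan is to describe the faces of $R_0$ explicitly and then sort them into natural faces and holes, so that counting the holes becomes a matter of bookkeeping.

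First I would pin down which regions of the plane are faces of $R_0$. I would work inside the triangulated graph $R'$, which is biconnected by Lemma~\ref{lemma:biconnected}; its faces are the natural faces of $R$ — the members of $F_0\cup F_1$, each a triangle since $\bar G$ is triangulated — together with the artificial triangles introduced in Line~\ref{line:triangulation}, and $C$ splits all of these into the enclosed ones and the non-enclosed ones. Let $\widehat R_0$ be the subgraph of $R'$ consisting of $C$ together with all edges strictly enclosed by $C$; since $C$ is a simple cycle, $\widehat R_0$ fills in the closed disk bounded by $C$, so the faces of $\widehat R_0$ are exactly the enclosed faces of $R'$ plus one extra face $H^\ast$, bounded by $C$, that absorbs the entire non-enclosed side. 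The region $R_0$ is what one gets from $\widehat R_0$ by deleting the artificial vertices and edges, so I would trace what each face does under those deletions (recall $R_0$ stays connected by Lemma~\ref{lem:connected}).

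Deleting a vertex can only merge faces, never split one, so $H^\ast$ stays a single face of $R_0$, and every member of $F_0$ — a triangle all three of whose edges lie in $R_0$ — survives as a face of $R_0$; being a face of $\bar G$ it is natural, not a hole. Next consider an artificial vertex $x_h$ of a hole $h$ that is fully enclosed by $C$: then every artificial triangle of $h$ is enclosed, so $x_h$ and all of its spokes lie in $\widehat R_0$ and the faces around $x_h$ there are exactly the artificial triangles of $h$ in cyclic order; deleting $x_h$ and its spokes merges precisely these back into the single region originally occupied by the face $h$ of $R$. That region is a face of $R_0$; it is not a face of $\bar G$ (it was a hole of $R$), hence a hole of $R_0$, and it is distinct both from $H^\ast$ (it lies on the enclosed side) and from the analogous face of any other fully-enclosed hole (a different face of $R$). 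If instead $h$ is not fully enclosed, then $h$ has an artificial triangle on the non-enclosed side — hence inside $H^\ast$ — so deleting $x_h$ merely merges the enclosed artificial triangles of $h$ into $H^\ast$ and produces no new face. Since the faces of $\widehat R_0$ are the members of $F_0$, the enclosed artificial triangles, and $H^\ast$, and I have traced each of these through the deletions, the faces of $R_0$ are exactly the members of $F_0$, one hole per fully-enclosed hole of $R$, and $H^\ast$; so $R_0$ has $k+1$ holes. (One small point to dispatch along the way: $H^\ast$ really is a hole and not a natural face, since $H^\ast$ being a face of $\bar G$ would force $C$ to bound a triangle of $\bar G$, i.e.\ to be a facial cycle, which a balanced separator of a region on more than $s$ vertices cannot be.)

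The step I expect to be the crux is ruling out unexpected merging — in particular, guaranteeing that the face arising from a fully-enclosed hole $h$ does not fuse with $H^\ast$, or with the face of some other fully-enclosed hole, even when an edge on the boundary of $h$ also lies on $C$. This is exactly where the hypothesis ``fully enclosed'' is used: because every artificial triangle of $h$ lies on the enclosed side, none of them is part of $H^\ast$, so deleting the lone artificial vertex $x_h$ cannot open a passage between the interior of $h$ and the non-enclosed side, and the boundary walk of $h$ that insulates this region survives in $R_0$. Read contrapositively, the same hypothesis is what makes a merely partially-enclosed hole get swallowed by $H^\ast$ rather than survive, which is why only the fully-enclosed holes — not all holes of $R$ — contribute to the count.
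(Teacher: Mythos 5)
Your argument is correct and is essentially the paper's own proof in a slightly different dress: both start from the subgraph of $R'$ enclosed by $C$, observe that for a partially enclosed hole the deletion of its artificial edges only enlarges the outer face bounded by $C$ (creating no new face), and that each fully enclosed hole reappears as one new face of $R_0$, giving $k+1$ holes in total. The only cosmetic difference is that you phrase it as tracking faces under deletion of all artificial vertices and edges (and you additionally check that the face bounded by $C$ is not a natural face), whereas the paper performs the deletions in stages; the substance is the same.
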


\begin{proof} We give an algorithmic proof. 
See Figure~\ref{fig:cut-along-cycle} for an illustration. 
Initialize $R_0'$ to be the graph obtained from $R'$ 
by deleting all edges not enclosed by $C$. 
Then $C$ is the boundary of the infinite face of $R_0'$. 
Consider in turn 
each hole $h$ of $R$ such that 
a nonempty proper subset of $h$'s triangle faces are enclosed by $C$.
For each such face $h$, 
$C$ includes the artificial vertex $x_h$ placed in $h$, 
along with two incident edges $ux_h$ and $x_h v$
where $u$ and $v$ are distinct vertices on the boundary of $h$. 
Deleting all the remaining artificial edges of $h$ 
modifies the boundary of the infinite face 
by replacing $ux_h\ \ x_h v$ 
with a subsequence of the edges 
forming the boundary of $h$. 
In particular, 
deleting these artificial edges 
does not create any new faces. 

Finally, 
for each hole $h$ that is fully enclosed by $C$, 
delete the artificial edges of $h$, 
turning $h$ into a face of $R_0'$. 
The resulting graph is $R_0$, 
whose holes are the holes of $R$ that were fully enclosed by $C$, 
together with the infinite face of $R_0$.
\end{proof}

If the recursion depth $\bmod$~3 is 2, 
Line~\ref{line:cyclesep} of {\sc RecursiveDivide} 
must select a simple cycle in $R'$ 
that is balanced with respect to the number of holes.
To achieve this, 
for each hole $h$ of $R$, 
the algorithm assigns weight~1 
to one of the triangles resulting from triangulating $h$ in Line~\ref{line:triangulation}, 
and weight~0 to all other faces. 
Then 
the algorithm finds a cycle $C$ 
that is balanced with respect to these face-weights.

\begin{lemma} \label{lem:num-holes} For any region 
created by {\sc RecursiveDivide}, 
the number of holes is at most twelve.
\end{lemma}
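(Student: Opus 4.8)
The plan is to induct on the recursion depth, showing that the number of holes stays bounded along a full cycle of three consecutive recursion levels. The key observations are the following. First, by Lemma~\ref{lem:enclosing-holes}, one step of {\sc RecursiveDivide} never increases the number of holes of a child region beyond (number of holes fully enclosed by $C$, for the enclosed child $R_1$) $+1$, and symmetrically the non-enclosed child $R_0$ has (number of holes not fully enclosed on its side) $+1$ holes; in particular, if $R$ has $H$ holes then each of $R_0,R_1$ has at most $H+1$ holes after a single step. So a single step increases the hole count by at most one. Second, and crucially, whenever the recursion depth is $\equiv 2 \pmod 3$, the separator $C$ is chosen to be balanced with respect to the hole-weights described just before the lemma statement: each hole of $R$ contributes weight~$1$ (concentrated on one of its triangles), all other faces get weight~$0$. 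Balance means that at most $3/4$ of the total hole-weight lies strictly enclosed by $C$ and at most $3/4$ lies outside. Since a hole that is \emph{fully} enclosed by $C$ contributes its unit of weight to the strictly-enclosed side (its weighted triangle is enclosed and $x_h$ may or may not be on $C$ — if the weighted triangle is enclosed it counts toward the enclosed weight), the number of fully-enclosed holes is at most $\lfloor 3H/4\rfloor$ where $H$ is the current hole count, and likewise the number of holes fully on the outside is at most $\lfloor 3H/4\rfloor$. Hence after a level-$2\pmod 3$ step, each child has at most $\lfloor 3H/4 \rfloor + 1$ holes.

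Next I would combine these two facts over a window of three levels. Suppose a region $R$ at depth $\ell \equiv 0 \pmod 3$ has $H$ holes. The depth-$0$ step (vertex-balanced) and the depth-$1$ step (boundary-balanced) each add at most one hole, so a grandchild at depth $\ell+2 \equiv 2 \pmod 3$ has at most $H+2$ holes. The depth-$2$ step (hole-balanced) then produces great-grandchildren at depth $\ell+3 \equiv 0 \pmod 3$ with at most $\lfloor 3(H+2)/4 \rfloor + 1$ holes. Writing $f(H) = \lfloor 3(H+2)/4 \rfloor + 1 \le \tfrac34 H + \tfrac{5}{2}$, one checks that $f$ has a fixed point: $H = \tfrac34 H + \tfrac52$ gives $H = 10$, and $f$ is a contraction, so starting from the root (which has $H \le 2$: the original graph $\bar G$ has only its outer face as a potential hole, in fact zero holes, but a region created after one cut can already have two) the hole count at depths $\equiv 0 \pmod 3$ never exceeds $10$. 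Accounting for the two intermediate levels, the maximum over all regions is at most $10 + 2 = 12$. I would tidy up the base case by noting that $\bar G$ itself has no holes (it is triangulated, so its only face that could fail to be a face of a region is none — $\bar G$ is a region of itself with all faces natural), hence at depth $0$ the root has $0 \le 10$ holes, and the induction goes through.

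The main obstacle I anticipate is pinning down exactly which side of $C$ a \emph{fully enclosed} hole's weight lands on, and making sure the factor $3/4$ really applies to the count of fully-enclosed holes rather than to some looser quantity. Specifically, one must verify: (i) the unit of weight for a hole $h$ sits on a single triangle $t_h$ of the triangulated $h$, and $h$ is fully enclosed by $C$ in $R'$ exactly when \emph{all} its triangles — including $t_h$ — are enclosed, so full enclosure implies the weight is on the enclosed side; (ii) a hole that is fully enclosed does not get its weighted triangle pierced by $C$, i.e.\ $t_h$ is not a triangle of $C$ itself, which holds because a hole that is fully enclosed has none of its triangle faces "split" by $C$; and (iii) there is no double counting — distinct holes have disjoint weighted triangles. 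Granting these, the balance inequality for the hole-weighting directly bounds the number of fully-enclosed holes by $3/4$ of the total, and symmetrically for $R_0$ via the complementary side, and the recursion above closes. A minor secondary point is to confirm that Lemma~\ref{lem:enclosing-holes} (stated for $R_0$) has the obvious mirror statement for $R_1$ with "enclosed" and "not enclosed" swapped, which follows by the same algorithmic argument applied to the enclosed subgraph.
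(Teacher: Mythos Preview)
Your proposal is correct and follows essentially the same argument as the paper: induct on recursion depth modulo~3, use that any single step adds at most one hole (via Lemma~\ref{lem:enclosing-holes}), and use that the hole-balanced step at depth $\equiv 2 \pmod 3$ fully encloses at most $\lfloor 3H/4\rfloor$ of the $H$ holes, so the child has at most $\lfloor 3H/4\rfloor+1$; this yields the invariants $10,11,12$ at depths $0,1,2 \pmod 3$ exactly as in the paper. Your fixed-point formulation of the recurrence $H \mapsto \lfloor 3(H+2)/4\rfloor + 1$ is just a repackaging of the same three-step cycle the paper writes out explicitly, and the obstacles you flag (that full enclosure of a hole implies its weighted triangle is enclosed, and the symmetric version of Lemma~\ref{lem:enclosing-holes} for the other child) are indeed the only points needing care and are handled as you describe.
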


\begin{proof} By induction on the recursion depth $\ell$, 
\begin{enumerate}
\item[(A)] if $\ell \bmod 3 = 0$ then $R$ has at most ten holes;
\item[(B)] if $\ell \bmod 3 = 1$ then $R$ has at most eleven holes;
\item[(C)] if $\ell \bmod 3 = 2$ then $R$ has at most twelve holes.
\end{enumerate}
For $\ell=0$, 
there are no holes.
Assume (C) holds for $\ell$. 
Since the cycle separator $C$ encloses at most 3/4 of the weight, 
it fully encloses at most nine holes. 
By Lemma~\ref{lem:enclosing-holes}, 
$R_0$ has at most ten holes. 
The symmetric argument applies to $R_1$. 
Thus (A) holds for $\ell+1$.

Similarly, 
by Lemma~\ref{lem:enclosing-holes}, 
if (A) holds for $\ell$ 
then (B) holds for $\ell+1$, 
and 
if (B) holds for $\ell$ 
then (C) holds for $\ell+1$. 
\end{proof}

\subsection{Number of Vertices and Boundary Vertices}
\label{sec:num-of-vertices}
If the recursion depth $\bmod$~3 is 0, 
Line~\ref{line:cyclesep} of {\sc RecursiveDivide} 
selects a simple cycle in $R'$ 
that is balanced with respect to the number of natural vertices. 
To achieve this, 
for each natural vertex $v$,
the algorithm selects an adjacent face in $R'$, 
dedicated to carry $v$'s weight. 
The weight of each face is defined to be 
the number of vertices for which that face was selected.
Since each face in $R'$ is a triangle, 
every weight is an integer between~0 and~3. 
A cycle $C$ is then chosen 
that is balanced with respect to these face-weights. 

If the recursion depth $\bmod$~3 is 1, 
the cycle must be balanced with respect to 
the number of boundary vertices. 
For each boundary vertex, 
the algorithm selects an incident face; 
the algorithm then proceeds as above.

In either case, 
the total weight enclosed by the cycle $C$ is 
an upper bound on the number of vertices 
(natural or boundary) 
strictly enclosed by $C$. 
Thus at most 3/4 of the vertices 
(natural or boundary) of $R'$ 
are strictly enclosed by $C$ in $R'$. 
Similarly,
at most 3/4 of the vertices are not enclosed by $C$ in $R'$.

The vertices of $R_0$ are the natural vertices of $R'$ 
enclosed by $C$ 
(including the natural vertices on $C$, 
which number at most $c\sqrt{\abs{V(R)}}$), 
and 
the vertices of $R_1$ are 
the natural vertices of $R'$ not strictly enclosed by $C$. 
Let $n:=\abs{V(R)}$ and, 
for $i\in\set{0,1}$,
let $n_i:=\abs{V(R_i)}$. 
We obtain
\begin{equation} \label{eq:total-vertices}
n_0+n_1 \leq n + c\sqrt{n}.
\end{equation}
Moreover, 
if the recursion depth $\bmod$~3 is 0, 
then
\begin{equation} \label{eq:vertices-per-piece}
\max\set{n_0,n_1} \leq \frac{3}{4} n + c\sqrt{n}.
\end{equation}
Similarly, 
let $b$ be the number of boundary vertices of $R$, 
and, 
for $i\in\set{0,1}$, 
let $b_i$ be the number of boundary vertices of $R_i$. 
We obtain
\begin{equation} \label{eq:total-boundary-vertices}
b_0+b_1 \leq b + c\sqrt{n}. 
\end{equation}
Moreover, 
if the recursion depth $\bmod$~3 is 1, 
then
\begin{equation} \label{eq:boundary-vertices-per-piece}
\max \set{b_0, b_1} \leq \frac{3}{4} b + c \sqrt{n}.
\end{equation}

\subsection{Admitting an $r$--division}
Let $N$ be 
the number of vertices in the original input graph $\barG$.
Consider the decomposition tree $\mathcal T$ of $\barG$ 
produced by {\sc RecursiveDivide}. 
Each node $x$
corresponds to a region $R_x$. 
We define $n(x) := \abs{V(R_x)}$.
In this section 
we show that, 
for any given $r \geq s$, 
$\mathcal T$ admits an $r$--division of $\barG$. 
We adapt the two-phase analysis of Frederickson~\cite{journals/siamcomp/Frederickson87}.
In the first phase (Lemma~\ref{lem:phase1}), 
we identify a set of $O(N/r)$ regions 
for which the average number of boundary vertices is $O(\sqrt{r})$. 
However, 
some of the individual regions in this set 
might have too many boundary vertices 
(since the number of vertices and boundary vertices 
do not necessarily decrease at the same rate).
We show that each such region 
can be replaced with smaller regions in $\mathcal T$ 
so that every region 
has $O(\sqrt{r})$ boundary vertices, 
and
the total number of regions remains $O(N/r)$ 
(Lemma~\ref{lem:phase2}).

For a node $x$ of $\mathcal T$ 
and 
a set $S$ of descendants of $x$
such that no node in $S$ is an ancestor of any other,
define $L(x, S) := -n(x) + \sum_{y\in S} n(y)$. 
Roughly speaking, 
$L(x,S)$ counts the number of new boundary nodes 
with multiplicities 
when replacing $x$ by all regions in $S$. 
\begin{lemma} \label{L-additive}
If $y\in S_1$ 
then $L(x,S_1) + L(y,S_2)=L(x, S_1\cup S_2-\set{y})$.
\end{lemma}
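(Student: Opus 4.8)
The proof is a direct computation from the definition $L(x,S)=-n(x)+\sum_{z\in S}n(z)$; the only point requiring care is a disjointness observation about the index sets.

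First I would unfold both occurrences of $L$ on the left-hand side:
\[
L(x,S_1)+L(y,S_2)=\Bigl(-n(x)+\sum_{z\in S_1}n(z)\Bigr)+\Bigl(-n(y)+\sum_{z\in S_2}n(z)\Bigr).
\]
Since $y\in S_1$, I pull the term $n(y)$ out of $\sum_{z\in S_1}n(z)$; it cancels with the $-n(y)$ contributed by $L(y,S_2)$, so the left-hand side equals $-n(x)+\sum_{z\in S_1\setminus\set{y}}n(z)+\sum_{z\in S_2}n(z)$.

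Next I would show that $S_1\setminus\set{y}$ and $S_2$ are disjoint (and that $y\notin S_2$), which lets me merge the two sums into $\sum_{z\in(S_1\cup S_2)\setminus\set{y}}n(z)$ and hence identify the expression with $-n(x)+\sum_{z\in(S_1\cup S_2)\setminus\set{y}}n(z)=L(x,(S_1\cup S_2)\setminus\set{y})$, as required. For the disjointness: every element of $S_2$ is a descendant of $y$, whereas every $z\in S_1\setminus\set{y}$ is distinct from $y$ and, by the antichain property of $S_1$ (no node of $S_1$ is an ancestor of another), is not a descendant of $y$; hence $z\notin S_2$. The same argument checks that $(S_1\cup S_2)\setminus\set{y}$ is again an antichain of descendants of $x$, so the right-hand side of the identity is well-defined. (The degenerate reading $y\in S_2$ can be dismissed: the antichain property of $S_2$ would then force $S_2=\set{y}$ and $L(y,S_2)=0$, the trivial replacement; in the intended application $S_2$ consists of proper descendants of $y$, so $y\notin S_2$.)

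I do not expect a real obstacle: the statement is a bookkeeping identity with no analytic content, and the write-up should amount to the two-line algebra above together with the short paragraph justifying that $S_1\setminus\set{y}$ and $S_2$ do not overlap.
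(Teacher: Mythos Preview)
Your proof is correct and follows the same direct-computation approach as the paper: unfold the definition of $L$, cancel the $+n(y)$ from $S_1$ against the $-n(y)$ from $L(y,S_2)$, and match terms. The paper's argument is actually terser than yours---it simply tracks the contribution of each node to both sides without explicitly justifying the disjointness of $S_1\setminus\{y\}$ and $S_2$; your added paragraph making that point precise (via the antichain property of $S_1$ and the fact that $S_2$ consists of descendants of $y$) is a welcome bit of rigor that the paper leaves implicit.
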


\begin{proof} Each node $z\in S_2$ contributes 
$n(z)$ to both sides of the equation. 
Since $y$ is in $S_1$, 
it contributes $n(y)$ to $L(x,S_1)$ 
(the first term on the left). 
It contributes $-n(y)$ to $L(y,S_2)$ 
(the second term on the left), 
and nothing to the righ-hand side. 
Every other node $y'\in S_1$ 
contributes $n(y')$ to both sides.
\end{proof}

If the children of $x$ are $x_0$ and $x_1$, 
Equation~\eqref{eq:total-vertices} implies 
\begin{equation} \label{eq:L-children}
L(x, \set{x_0,x_1}) \leq c\sqrt{n(x)}.
\end{equation}
Fix $r$ and 
let $S_r$ be 
the set of nodes $y$ of $\mathcal T$ 
such that $y$'s region has no more than $r$ vertices 
but the region of $y$'s parent has more than $r$ vertices. 
Note that no node in $S_r$ is an ancestor of any other. 
Let $\hat x$ be the root of $\mathcal T$. 
\begin{lemma}[Total Number of Boundary Vertices]\label{lem:phase1} 
There are constants $s$ and $\gamma$, 
depending on $c$, 
such that,
for any $r > s$, 
$L(\hat x, S_r) \leq \frac{\gamma N}{\sqrt{r}}$.
\end{lemma}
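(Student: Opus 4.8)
The plan is to charge the quantity $L(\hat x, S_r)$ by summing the per-split costs $L(x,\{x_0,x_1\}) \le c\sqrt{n(x)}$ over all internal nodes $x$ strictly above the frontier $S_r$, using the additivity of $L$ (Lemma~\ref{L-additive}) to telescope these local bounds into the global one. Concretely, let $A_r$ be the set of internal nodes $x$ of $\mathcal T$ such that $n(x) > r$ (equivalently, $x$ is a strict ancestor of some node in $S_r$, or is itself above the frontier). Starting from $S = \{\hat x\}$ and repeatedly replacing a node $x \in A_r$ by its two children, Lemma~\ref{L-additive} gives, by induction on the number of replacements,
\begin{equation} \label{eq:L-telescope}
L(\hat x, S_r) \le \sum_{x \in A_r} L(x,\{x_0,x_1\}) \le c \sum_{x \in A_r} \sqrt{n(x)}.
\end{equation}
So it remains to show $\sum_{x\in A_r}\sqrt{n(x)} = O(N/\sqrt r)$, i.e.\ that the sum of $\sqrt{n(x)}$ over all nodes whose region has more than $r$ vertices is $O(N/\sqrt r)$.

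The main obstacle is precisely this last sum: unlike in Frederickson's clean binary-recursion analysis, the children's vertex counts need not sum to $n(x)$ — by \eqref{eq:total-vertices} we only have $n(x_0)+n(x_1)\le n(x)+c\sqrt{n(x)}$, so vertices are duplicated along the separating cycle — and moreover the vertex count only drops by a constant factor every \emph{three} levels (the $\ell \bmod 3 = 0$ case of \eqref{eq:vertices-per-piece}), since the intermediate levels balance boundary vertices or holes instead. I would handle this by appealing to Lemma~\ref{lem:recurrence}: define $T_r(n)$ to be the maximum of $\sum_{x\in A_r(\text{subtree})}\sqrt{n(x)}$ over all subtrees rooted at a node with $n$ vertices, and observe that unrolling three levels of {\sc RecursiveDivide} at once produces at most $8$ subregions whose sizes $\alpha_i n$ satisfy, by repeated application of \eqref{eq:total-vertices} and \eqref{eq:vertices-per-piece}, exactly the two constraints in Lemma~\ref{lem:recurrence} (each $\alpha_i \le 3/4 + O(1/\sqrt n)$ because at least one of the three levels balanced vertices, and $\sum_i \alpha_i \le 1 + O(1/\sqrt n)$ because the duplication overhead telescopes geometrically). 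The additive term is $\rho n^\beta$ with $\beta = 1/2$: it accounts for the $\sqrt{n}$ contributed by the node itself plus the (at most) $7$ internal descendants within the three-level block, each of size at most $n+O(\sqrt n)$, so their $\sqrt{\cdot}$ contributions sum to $O(\sqrt n)$.

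Applying Lemma~\ref{lem:recurrence} with $\beta = 1/2$ then yields $T_r(N) = O(N/r^{1-1/2}) = O(N/\sqrt r)$ for all $r$ at least the constant $s$ furnished by that lemma. Combined with \eqref{eq:L-telescope}, this gives $L(\hat x, S_r) \le c \cdot T_r(N) \le \gamma N/\sqrt r$ for a constant $\gamma$ depending only on $c$, which is the claim.

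Two small points to nail down carefully. First, the telescoping inequality \eqref{eq:L-telescope} should be justified by induction on $|A_r|$: the base case $A_r = \emptyset$ means $S_r = \{\hat x\}$ and $L(\hat x, \{\hat x\}) = 0$; for the inductive step, pick a node $x \in A_r$ that is a leafmost element of $A_r$ (so both its children are in $S_r$), peel it off, and apply Lemma~\ref{L-additive}. Second, the base case of the recurrence in Lemma~\ref{lem:recurrence} ($T_r(n) = 0$ for $n \le r$) matches the definition of $S_r$: once a region has at most $r$ vertices it is not in $A_r$ and contributes nothing, so we stop the recursion exactly there. The constant $s$ in the lemma statement can be taken to be the one produced by Lemma~\ref{lem:recurrence}, and we additionally require $s$ large enough that the $O(1/\sqrt n)$ error terms are dominated as needed — both only depend on $c$.
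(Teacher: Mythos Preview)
Your proposal is correct and follows essentially the same route as the paper. Both arguments unroll three levels of \textsc{RecursiveDivide} at a time (so that a vertex-balancing step is guaranteed), verify that the resulting eight subregion sizes satisfy the hypotheses of Lemma~\ref{lem:recurrence}, and apply that lemma with $\beta = 1/2$. The only cosmetic difference is the order of operations: the paper sets up the recurrence directly for the quantity $L(x, S_r(x))$, whereas you first telescope $L(\hat x, S_r) = \sum_{x \in A_r} L(x,\{x_0,x_1\}) \le c\sum_{x\in A_r}\sqrt{n(x)}$ and then bound the latter sum by the same recurrence; these are equivalent reorganizations of the same computation. One small point to tighten: when you define $T_r(n)$ you should either restrict to subtree roots at depth $\equiv 0 \pmod 3$ (as the paper does explicitly) or note, as you do informally, that any three consecutive levels contain a vertex-balancing step, so the constraint $\alpha_i \le 3/4 + O(1/\sqrt n)$ holds regardless of the starting depth.
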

\begin{proof}
For each node $x$ of $\mathcal T$, 
let $\ell(x)$ denote 
the depth of $x$ in $\mathcal T$.
For a node $x$ of $\mathcal T$, 
let $S_r(x)$ denote the set of descendants $y$ 
such that $y$'s region has no more than $r$ vertices
but the region of $y$'s parent has more than $r$ vertices. 
For an integer $n>r$, 
define $B_r(n)$ to be 
$\max \set{L(x, S_r(x))\ : \ x \in \mathcal T, n(x) \leq n, \ell(x)=0 \bmod 3}$.

 The lemma is a consequence of the following recurrence, 
 which is justified below.
 \begin{equation} \label{eq:recurrence}
 B_r(n) \leq 7c\sqrt n + \max_{\{\alpha_i\}} \sum_i B_r(\alpha_i n),
 \end{equation}
 where the $\max$ is over $\alpha_1, \ldots, \alpha_8$ 
 between 0 and $\frac{3}{4}+\frac{c}{\sqrt{n}}$
such that
\begin{equation} \label{eq:upper-bound-on-sum}
 \sum_{i=1}^8 \alpha_i \leq 1+7c/\sqrt{n}
\end{equation}
and
\begin{equation} \label{eq:lower-bound-on-sum}
\sum_{i=1}^8 \alpha_i \geq 1.
\end{equation}

 Let $x$ be a node such that $r < \abs{n(x)} \leq n$.
 Let $y_1, \ldots, y_k$ be the rootmost descendants $y$ of $x$ 
 such that $\abs{n(y)}\leq r$ or $\ell(y) \bmod 3=0$. 
 Note that $k\leq 8$.
Repeated application of Lemma~\ref{L-additive} 
and Equation~\eqref{eq:L-children} 
yield 
\begin{equation} \label{eq:L-x}
L(x,\set{y_1, \ldots, y_k}) \leq (k-1) c\sqrt{n},
\end{equation}
which implies that 
\begin{equation} \label{eq:sum-of-sizes-of-yi}
\sum_{i=1}^k n(y_i) \leq n + (k-1) c\sqrt{n}. 
\end{equation}
 For $1\leq i \leq k$, 
 define $\alpha_i$ 
 so that $\alpha_i n =n(y_i)$. 
 For $k < i \leq 8$, 
 define $\alpha_i=0$.
Equation~\eqref{eq:sum-of-sizes-of-yi} then implies 
Equation~\eqref{eq:upper-bound-on-sum}.
Since each vertex of $R_x$ occurs in at least one $R_{y_i}$,
Equation~\eqref{eq:lower-bound-on-sum} holds.
Let $x_0, x_1$ be the children of $x$.
For each $1\leq i \leq k$, 
$R_{y_i}$ is a subgraph of $R_{x_0}$ or $R_{x_1}$, 
so $\alpha_i n \leq \frac{3}{4} n + c \sqrt{n}$, 
which implies that $\alpha_i \leq \frac{3}{4} + c/\sqrt{n}$.

By definition of $B_r(\cdot)$, 
$L(y_i, S_r(y_i)) \leq B_r(\alpha_i n)$, 
so we have
\begin{equation} \label{eq:sum_of_descendant-sizes-of-yi}
\sum_{i=1}^8 L(y_i, S_r(y_i)) \leq \sum_{i=1}^8 B_r(\alpha_i n).
\end{equation}
By Lemma~\ref{L-additive}, 
$$L(x, S_r(x)) \leq L(x, \set{y_1, \ldots, y_k}) + \sum_{i=1}^k L(y_i,S_r(y_i)),$$
so Equation~\eqref{eq:sum_of_descendant-sizes-of-yi} and 
Equation~\eqref{eq:L-x} together 
justify the recurrence relation~\eqref{eq:recurrence}.
By Lemma~\ref{lem:recurrence}, 
this recurrence relation yields 
that $B_r(n)$ is $O(n/\sqrt r)$.
\end{proof}

Lemma~\ref{lem:phase1} implies 
that $\sum_{x\in S_r} n(x)$ is $O(N)$, 
since the regions in $S_r$ are disjoint 
{\em except} for boundary vertices, 
of which there are at most $O(N/\sqrt{r})$. 
For each parent $y$ of a node $x\in S_r$, 
the corresponding region~$R_y$ has more than $r$ vertices, 
so the number of such parents is $O(N/r)$, 
so $\abs{S_r}$ is $O(N/r)$.
Let $c'$ be a constant to be determined. 
For a node $x$, 
let $S'_r(x)$ denote the set of rootmost descendants $y$ of $x$ 
(where $x$ is a descendant of itself) 
such that $R_y$ has at most $c' \sqrt{r}$ boundary vertices. 
Let $S'_r = \bigcup_{x\in S_r} \set{S'_r(x)}$.

\begin{lemma}\label{lem:phase2}
The regions $\set{R_y \ :\ y \in S'_r}$ form 
an $r$--division with a constant number of holes per region.
\end{lemma}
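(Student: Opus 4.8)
The goal is to verify the four defining properties of an $r$--division with few holes for the family $\{R_y : y \in S'_r\}$: (i) $O(N/r)$ regions; (ii) each region has at most $r$ vertices; (iii) each region has $O(\sqrt r)$ boundary vertices; (iv) the few-holes conditions (each shared edge lies on a hole of both regions; each region has $O(1)$ holes). Properties (ii) and (iv) are essentially inherited: every $R_y$ with $y \in S'_r$ is a descendant (in $\mathcal T$) of some node in $S_r$, hence has at most $r$ vertices, giving (ii); and Lemma~\ref{lem:num-holes} bounds the number of holes of \emph{every} region produced by {\sc RecursiveDivide} by twelve, giving the second part of (iv). The first part of (iv) --- that an edge shared by two regions lies on a hole of each --- should follow from the structure of how {\sc RecursiveDivide} cuts along a cycle $C$: the edges of $C$ that survive into both $R_0$ and $R_1$ become boundary edges of the infinite face (a hole) in each piece, as recorded in the proof of Lemma~\ref{lem:enclosing-holes}; this property is preserved under further subdivision, so any edge in two regions of $S'_r$ was introduced as a cycle-separator edge at their least common ancestor and lies on a hole of both. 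Property (iii) is immediate from the definition of $S'_r(x)$: we take \emph{rootmost} descendants whose region has at most $c'\sqrt r$ boundary vertices, so every region in $S'_r$ has at most $c'\sqrt r$ boundary vertices.

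The real work is property (i): bounding $|S'_r|$ by $O(N/r)$. The idea, following Frederickson's second phase, is to charge the regions in $S'_r$ against the ``excess'' boundary vertices created when we descend from a node $x \in S_r$ down to the frontier $S'_r(x)$. Concretely, for $x \in S_r$ one should argue that $|S'_r(x)|$ is bounded by a constant plus something proportional to $L(x, S'_r(x))/\sqrt r$ (the number of new boundary-vertex occurrences created in refining $x$, normalized). Summing over $x \in S_r$ and using additivity of $L$ (Lemma~\ref{L-additive}) to telescope $\sum_{x\in S_r} L(x, S'_r(x))$ against $L(\hat x, S_r)$ and $L(\hat x, S'_r)$, one gets $\sum_{x\in S_r} L(x, S'_r(x)) \le L(\hat x, S'_r) \le \sum_{y \in S'_r} n(y)$, and more usefully one bounds $\sum_{x \in S_r} L(x,S'_r(x))$ in terms of $L(\hat x,S_r)$ plus the contributions inside each piece. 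Since $L(\hat x, S_r) = O(N/\sqrt r)$ by Lemma~\ref{lem:phase1}, and since each node in $S_r$ contributes $O(1)$ ``free'' children to the count, we get $|S'_r| \le |S_r| \cdot O(1) + O(N/\sqrt r)/\Theta(\sqrt r) = O(N/r)$, using $|S_r| = O(N/r)$ established in the paragraph preceding the lemma.

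The technical heart --- and the step I expect to be the main obstacle --- is the per-node bound relating $|S'_r(x)|$ to $L(x, S'_r(x))$. This is delicate because the number of boundary vertices of a region need not drop geometrically when we split: Equation~\eqref{eq:boundary-vertices-per-piece} guarantees a $3/4$-factor drop only at levels with $\ell \bmod 3 = 1$, and meanwhile Equation~\eqref{eq:total-boundary-vertices} only controls the \emph{sum} $b_0 + b_1 \le b + c\sqrt n$. So the argument must track, along any root-to-frontier path in the refinement of $x$, how many splits can occur before the boundary count falls below $c'\sqrt r$; one uses the every-third-level geometric decrease to bound the \emph{depth} of this refinement by $O(\log(b/\sqrt r)) = O(\log n)$, but then must argue that a tree of such bounded depth with $\sum n(y)$-type constraints cannot have too many leaves relative to the total $L$-increase. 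The cleanest route is probably: set $c'$ large enough (as a function of $c$) that a node with $b > c'\sqrt r$ boundary vertices and $n \le r$ vertices is impossible --- wait, that cannot be right since $b \le n$ gives $b \le r$, not $b \le \sqrt r$ --- so instead one keeps subdividing and charges each internal branching node of the refinement forest an amount $\Omega(\sqrt r)$ of new boundary occurrences (because to branch, the parent had $> c'\sqrt r$ boundary vertices, and over the course of reducing these to $\le c'\sqrt r$ in each descendant leaf, $\Omega(\sqrt r)$ boundary-vertex multiplicity must have been created, by a suitable choice of $c'$ relative to $c$ and the recurrence of Lemma~\ref{lem:recurrence} applied to $b$ in place of $n$). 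Getting the constants to line up --- $c'$ versus $c$ versus the hidden constant $\gamma$ of Lemma~\ref{lem:phase1} --- and making the telescoping of $L$ rigorous across the two-level hierarchy ($S_r$ inside $\mathcal T$, then $S'_r(x)$ inside each $x$) is where the care is needed; everything else is bookkeeping.
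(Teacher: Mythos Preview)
Your overall scaffolding is right: properties (ii), (iii), (iv) are inherited, and for (i) you correctly arrive at the target inequality $|S'_r| \le O(1)\cdot|S_r| + O(N/\sqrt r)/\Theta(\sqrt r)$, which would follow from a per-node bound of the form $|S'_r(x)| = O\bigl(1 + b(x)/\sqrt r\bigr)$ together with $\sum_{x\in S_r} b(x) = O(N/\sqrt r)$ (a consequence of Lemma~\ref{lem:phase1}). This is exactly the shape of the paper's argument.

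The gap is in how you propose to prove the per-node bound. None of your three suggestions quite lands. The $L$-telescoping route is circular, as you yourself notice: $\sum_{x\in S_r} L(x,S'_r(x)) = L(\hat x,S'_r) - L(\hat x,S_r)$, and you have no a~priori control of $L(\hat x,S'_r)$. The ``charge each branching node $\Omega(\sqrt r)$ new boundary occurrences'' idea fails because leaves of the refinement forest need not carry $\Omega(\sqrt r)$ boundary vertices; you can have a long chain in which one child is repeatedly a tiny leaf and the other carries almost all of $b$, so many leaves accrue with almost no growth in $\sum b$. Finally, invoking Lemma~\ref{lem:recurrence} ``with $b$ in place of $n$'' does not work as stated: the additive term in the recurrence for $|S'_r(\cdot)|$ is $O(1)$, i.e.\ $\beta=0$, whereas the lemma requires $\beta\ge 1/2$.

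What the paper actually does for this step is a direct induction proving the sharper claim
\[
|S'_r(x)| \ \le\ \max\Bigl\{1,\ \frac{b(x)}{c\sqrt r} - 12\Bigr\},
\]
with the choice $c' = 40c$. One groups three levels at a time (so $x$ has at most eight relevant descendants $y_1,\dots,y_k$), and does a case analysis on $q = |\{i : |S'_r(y_i)|>1\}|$. When $q\ge 2$ the sum constraint $\sum_i b(y_i)\le b(x)+O(c\sqrt r)$ plus the inductive hypothesis suffices, because each ``large'' $y_i$ contributes a $-12$ slack that more than pays for the $\le 7$ small leaves and the additive $O(c\sqrt r)$. When $q\le 1$ one instead uses the boundary-balanced split (at the $\ell\bmod 3 = 1$ level) to get $b(y_1)\le \tfrac34 b(x)+3c\sqrt r$, and the saving $\tfrac14 b(x)/(c\sqrt r) \ge c'/(4c) = 10$ absorbs the remaining constants. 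The negative constant in the inductive hypothesis is the device that handles precisely the ``small leaf'' obstruction your charging argument runs into.
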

\begin{proof}
It follows from the definition of $S'_r$ 
that each region in this set has at most $r$ vertices and 
at most $c'\sqrt{r}$ boundary vertices. 
It follows by induction from Lemma~\ref{lem:connected} 
that each of these regions is connected. 
It follows from Lemma~\ref{lem:num-holes} 
that each region has at most twelve holes. 
It remains to show that $\abs{S'_r}$ is $O(N/r)$.

For a node $x$ of $\mathcal T$, 
let $b(x)$ denote the number of boundary vertices of $R_x$. 
Lemma~\ref{lem:phase1} implies that 
\begin{equation} \label{eq:phase-1-consequence}
\sum\limits_{x\in S_r} b(x) \leq \frac{N}{\sqrt{r}}.
\end{equation}
We claim that, 
for every node $x\in S_r$, 
\begin{equation} \label{eq:phase-2-claim}
\abs{S'_r(x)} \leq \max \set{1, \frac{b(x)}{c\sqrt{r}} - 12}
\end{equation}
Summing over $x\in S_r$ 
and using Equation~\eqref{eq:phase-1-consequence}
then proves that $\abs{S'_r}$ is $O(N/r)$.

We set $c'=40c$. 
Proof of Equation~\eqref{eq:phase-2-claim} is by induction. 
If $b(x)\leq c'\sqrt{r}$ 
then $S'_r(x)=\set{x}$, 
so the claim holds. 
Assume therefore that $b(x) > c'\sqrt{r}$. 
Let $y_1,\ldots, y_k$ be 
the rootmost descendants $y$ of $x$ 
such that $b(y) \leq c'\sqrt{r}$ 
or $\ell(y)-\ell(x)=3$, 
ordered such that
$\abs{S'_r(y_1)} \geq \abs{S'_r(y_2)} \geq \cdots \geq \abs{S'_r(y_8)}$. 
Let $q$ be the cardinality of $\set{i\ : \abs{S'_r(y_i)}>1}$.

Case 0: $q=0$. 
In this case, 
$\abs{S'_r(x)}\leq 8$, 
and 
$\frac{b(x)}{c\sqrt{r}} - 12 > 40-12 \geq 8$.

Case 1: $q=1$. 
For some ancestor $y$ of $y_1$ 
that is a descendant of $x$, 
the separator chosen for $R_y$ 
is balanced in terms of boundary vertices. 
It follows that 
$b(y_1) \leq \frac{3}{4} b(x) + 3c\sqrt{r}$. 
By the inductive hypothesis, 
$\abs{S'_r(y_1)}\leq \frac{b(y_1)}{c\sqrt{r}} - 12$, 
so 
\begin{eqnarray*}
\abs{S'_r(x)} & = & \abs{S'_r(y_1)} + k-1\\
 & \leq & \frac{b(y_1)}{c\sqrt{r}} - 12 + 7\\
 & \leq & \frac{\frac{3}{4} b(x)+3c\sqrt{r}}{c\sqrt{r}} - 12 + 7\\
 & \leq & \frac{b(x)}{c\sqrt{r}} - \frac{b(x)}{4c\sqrt{r}} + 3 - 12 + 7\\
 & \leq & \frac{b(x)}{c\sqrt{r}} - 12,
\end{eqnarray*}
because $b(x) > c'\sqrt{r} = 40c\sqrt{r}$.

Case 2: $q=2$. 
In this case, 
$b(y_1)+b(y_2) \leq b(x) + 6c\sqrt{r}$. 
Using the inductive hypothesis on $y_1$ and $y_2$, 
we have
\begin{eqnarray*}
\abs{S'_r(x)} & = & \abs{S'_r(y_1)} +\abs{S'_r(y_2)} + k-2\\
 & \leq & \frac{b(y_1)}{c\sqrt{r}} - 12 + \frac{b(y_2)}{c\sqrt{r}} - 12 - 6\\
 & \leq & \frac{b(x)+6c\sqrt{r}}{c\sqrt{r}} - 12 - 12 -6\\
& \leq & \frac{b(x)}{c\sqrt{r}} - 12.
\end{eqnarray*}

Case 3: $q>2$. 
This case is similar to Case 2.
\end{proof}

\section{A Simple-Cycle Separator Algorithm}\label{sec:cyclesep}
In this section we present our cycle separator algorithm. 
As a one-shot algorithm, the input is
a simple biconnected graph $G$ with $m$ edges and face weights such that no face weighs more
than $3/4$ the total weight and no face consists of more than 3 edges.
The algorithm outputs a simple cycle $C$ in $G$, 
such that neither the total weight strictly enclosed by $C$ nor the total
weight not enclosed by $C$ exceeds $3/4$ of the total weight.
The length of $C$ is guaranteed to be at most $4\sqrt{\abs{E(G)}}$. Since
$G$ is a simple graph, this implies a bound of $4\sqrt{3\abs{V(G)}}$
on the length of the cycle. Better constants can be achieved, at the
cost of complicating the algorithm and the analysis, which we avoid for
the sake of simplicity and ease of presentation.

In a similar manner, we aim for $3/4$--balance to simplify the presentation. A balance of $2/3$
can be achieved.
Note that handling face-weights can be used to handle vertex-weights
and edge-weights as well; simply assign the weight of every vertex and
every edge to an arbitrary incident face. This works as long as the
resulting face weights satisfy the requirement that no single face
weighs more than $3/4$ of the total weight.

The cycle separator algorithm consists of a preprocessing step, which
runs in linear time and computes certain auxiliary data structures 
used by the main procedure, {\sc SimpleCycleSeparator}, which performs
the computation. These data structures can be represented so that {\sc
 SimpleCycleSeparator} takes sublinear time. One auxiliary data structure
is a tree $\mathcal K$, which is called the {\em component
 tree}. The tree $\mathcal K$ captures the
structural connectivity of dual BFS components of $G$. The dual BFS
components satisfy a certain disjointness property (see
Lemma~\ref{lem:disjoint-cycles}). The other
auxiliary data structures are a
spanning tree $T$ of $G$ and its cotree~$T^*$. The spanning tree $T$
satisfies a certain monotonicity property (see
Lemma~\ref{lem:Tparent}). The disjointness and monotonicity properties guarantee that the
length of the cycle separator output by {\sc SimpleCycleSeparator} is $4\sqrt{\abs{E(G)}}$.

Our algorithm for constructing the decomposition tree $\mathcal T$
invokes {\sc SimpleCycleSeparator} multiple times, on various regions
of $\barG$. It computes the auxiliary data structures $\mathcal K$, $T$,
and $T^*$ once, for the input graph $\barG$, and efficiently updates their
representation when separating a region into two regions. However, the
disjointness and monotonicity properties mentioned above are slightly weaker (see
Invariant~\ref{inv:disjoint-cycles} and Invariant~\ref{inv:Tparent});
they only apply to natural edges (disjointness) and to natural vertices (monotonicity). 
One implication is that the number of natural vertices on the cycle
separator produced by {\sc SimpleCycleSeparator} is bounded in terms
of the (squared root of the) number of natural vertices in the input graph. However, the cycle
may also consist of some artificial vertices. As argued in
Section~\ref{sec:rdiv}, artificial nodes on $C$ do not affect the analysis.

\subsection{Levels and Level Components}
\label{sec:components}

We define levels with respect to an arbitrarily chosen face
$f_\infty$, which we designate as the infinite face.

\begin{definition} The {\em level} $\dualfacelevel(f)$ of
a face $f$ is the minimum number of edges on a
$f_\infty$--to--$f$--path in $G^*$. We use $L^F_i$ to denote the
faces having level $i$, and we use $L^F_{\geqslant i}$ denote the set
of faces $f$ having level at least $i$.
\end{definition}

\begin{definition} For an integer $i\geq 0$, a connected component of the
 subgraph of $G^*$ induced by $L^F_{\geqslant i}$ is called a {\em
 level-$i$ component}, or, for unspecified $i$, a {\em
 level component}. We use $\mathcal K_{\geqslant i}$ to denote the
 set of level-$i$ components. A level-$i$ component $K$ is said to
 have level~$i$, and we denote its level by
 $\dualcomponentlevel(K)$. A {\em non-root} level component is a
 level component whose level is not zero.
 The set of vertices of $G^*$ (faces of $G$)
 belonging to $K$ is denoted $F(K)$.
\end{definition}
Note that we use $K$ (not $K^*$) to denote a level component even
though it is a connected component of a subgraph of the planar dual.
$K$ should be thought of as a set of faces. Thus we can refer to it as
a subgraph of $G^*$ or of $G$. In the former case $K$ is the subgraph
of $G^*$ induced by the faces in the set $K$. In the latter case $K$ is the subgraph
of $G$ induced by the edges that belong to faces in the set $K$.
\begin{lemma} \label{lem:outside-connected} For any non-root level
 component, the subgraph of $G^*$ consisting of faces not in $F(K)$
 is connected.
\end{lemma}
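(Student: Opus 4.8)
The plan is to use the structure of BFS levels in the dual graph $G^*$. Fix a non-root level component $K$, say of level $i \geq 1$. By definition $K$ is a connected component of the subgraph of $G^*$ induced by $L^F_{\geqslant i}$. I want to show that $H := G^* - F(K)$ is connected, where $F(K)$ is the set of faces (i.e.\ vertices of $G^*$) in $K$. First I would observe that $f_\infty \in V(H)$, since $\ell^F(f_\infty) = 0 < i$, so it suffices to show that every vertex $f$ of $H$ is connected in $H$ to $f_\infty$.

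The key idea is that a shortest path in $G^*$ from any face $f$ to $f_\infty$ can be chosen to avoid $F(K)$. Take such a shortest path $P = (f = g_0, g_1, \ldots, g_t = f_\infty)$; along $P$ the levels are nonincreasing by exactly one at each step, i.e.\ $\ell^F(g_j) = \ell^F(f) - j$ for a shortest path, so once the level drops below $i$ it never returns. Hence $P$ visits faces of level $\geq i$ only in an initial segment $g_0, \ldots, g_p$ (where $p = \ell^F(f) - i$ if $\ell^F(f) \geq i$, and $p = -1$, i.e.\ no such segment, if $\ell^F(f) < i$). The faces $g_0, \ldots, g_p$ all lie in $L^F_{\geqslant i}$ and are connected to each other in $G^*[L^F_{\geqslant i}]$ via $P$, hence they all lie in a \emph{single} level-$i$ component $K'$. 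Now there are two cases. If $f \notin F(K)$, then since $g_0 = f \in F(K') $ and $f \notin F(K)$ we get $K' \neq K$ (level components of the same level are vertex-disjoint), so $g_0, \ldots, g_p \notin F(K)$, and the suffix $g_{p+1}, \ldots, g_t$ has level $< i$ so is also disjoint from $F(K)$; thus all of $P$ lies in $H$, giving an $f$-to-$f_\infty$ path in $H$. The remaining point is that \emph{every} vertex $f$ of $H$ does fall into this case: $f \in V(H)$ means precisely $f \notin F(K)$, so the argument applies directly.

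Assembling: every vertex of $H$ is joined to $f_\infty$ by a path inside $H$, so $H$ is connected, which is exactly the claim. I expect the only subtle step to be the assertion that distinct level-$i$ components are vertex-disjoint and that the initial segment of a shortest path through $L^F_{\geqslant i}$ lies in just one of them — but this is immediate from the definition of ``connected component.'' One should also double-check the degenerate case $\ell^F(f) < i$, where the initial high-level segment is empty and the whole shortest path already avoids $L^F_{\geqslant i} \supseteq F(K)$; this case is trivial. Thus the main work is simply to set up the shortest-path / monotone-levels observation cleanly; there is no real obstacle.
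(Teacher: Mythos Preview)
Your proof is correct: the shortest-path argument from $f$ to $f_\infty$ with strictly decreasing levels cleanly shows that the path stays outside $F(K)$, and the case analysis (including the degenerate case $\ell^F(f)<i$) is handled properly. The paper itself states this lemma without proof, so there is no ``paper's approach'' to compare against; your argument is the natural one and would serve well as the omitted justification.
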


\begin{corollary} \label{cor:primal-cycle}
For any non-root level component $K$, the edges of $\delta_{G^*}(F(K))$ form a simple cycle in the primal~$G$.
\end{corollary}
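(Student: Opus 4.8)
The plan is to show that $\delta_{G^*}(F(K))$ is a \emph{simple} cut (bond) in $G^*$ and then to invoke simple-cycle/simple-cut duality (Fact~\ref{fact:simple-cycle-simple-cut}). First I would observe that a level component $K$, being by definition a connected component of the subgraph of $G^*$ induced by $L^F_{\geqslant i}$, is connected; hence $F(K)$ induces a connected subgraph of $G^*$. Lemma~\ref{lem:outside-connected} supplies the complementary fact: the faces not in $F(K)$ also induce a connected subgraph of $G^*$. By the definition of a simple cut in Section~\ref{sec:prelim:planar}, these two connectivity statements together say precisely that $\delta_{G^*}(F(K))$ is a simple cut in $G^*$ — provided both sides of the bipartition are nonempty. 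That $F(K)$ is nonempty is immediate since $K$ is a component; that its complement is nonempty is where the hypothesis that $K$ is \emph{non-root} is used, since then $\dualcomponentlevel(K)\geq 1$ while $f_\infty$ has level $0$, so $f_\infty\notin F(K)$.

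Next I would apply Fact~\ref{fact:simple-cycle-simple-cut}: a set of edges forms a simple cut in a planar embedded graph iff it forms a simple cycle in the dual. Here the relevant ``graph'' is $G^*$, whose dual is $G$ itself (taking the dual twice returns the primal, as noted in Section~\ref{sec:prel}). Therefore the edge set $\delta_{G^*}(F(K))$, which we have just argued is a simple cut in $G^*$, forms a simple cycle in $G$, which is exactly the assertion of the corollary.

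The argument is essentially a one-line composition of the definition of a level component, Lemma~\ref{lem:outside-connected}, and Fact~\ref{fact:simple-cycle-simple-cut}; the only step needing a moment's attention — and the closest thing to an obstacle — is checking that the cut is nontrivial, i.e.\ that $f_\infty\notin F(K)$, which is precisely what the ``non-root'' hypothesis guarantees.
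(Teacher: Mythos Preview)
Your proposal is correct and is exactly the argument the paper intends: the corollary is stated immediately after Lemma~\ref{lem:outside-connected} precisely because combining that lemma with the connectivity of $K$ itself shows $\delta_{G^*}(F(K))$ is a simple cut in $G^*$, and then Fact~\ref{fact:simple-cycle-simple-cut} (applied with the roles of primal and dual swapped) yields a simple cycle in $G$. The paper gives no explicit proof, but your write-up matches the evident intended derivation, including the use of the non-root hypothesis to ensure the cut is nontrivial.
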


In view of Corollary~\ref{cor:primal-cycle}, for any non-root level
component $K$, we use $X(K)$ to denote the simple cycle in the primal
$G$ consisting of the edges of $\delta_{G^*}(F(K))$. We refer to
$X(K)$ as the {\em bounding cycle} of $K$ since, when viewed as a subgraph
of $G$, $K$ is exactly the subgraph enclosed by $X(K)$. 

\begin{lemma}\label{lem:disjoint-cycles}
Let $K$ and $K'$ be two distinct components. $X(K)$ and $X(K')$ are 
edge-disjoint.
\end{lemma}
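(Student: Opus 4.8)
The plan is to argue by contradiction: assume some edge $e$ lies on both bounding cycles $X(K)=\delta_{G^*}(F(K))$ and $X(K')=\delta_{G^*}(F(K'))$, and derive a contradiction from the layered structure of the levels. Throughout I will use two elementary facts. First, since $\dualfacelevel(f)$ is the distance from $f_\infty$ to $f$ in $G^*$, any two faces $f,g$ adjacent in $G^*$ satisfy $|\dualfacelevel(f)-\dualfacelevel(g)|\le 1$. Second, if $K$ is a level component of level $i$, $p\in F(K)$, and $q$ is a face adjacent to $p$ in $G^*$ with $\dualfacelevel(q)\ge i$, then $q\in F(K)$: indeed both $p$ and $q$ lie in $L^F_{\geqslant i}$, so the edge $pq$ belongs to the subgraph of $G^*$ induced by $L^F_{\geqslant i}$, and $K$ is a connected component of that subgraph.

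Next I would establish a dichotomy for the face sets of two distinct level components. Assume without loss of generality that $i:=\dualcomponentlevel(K)\le \dualcomponentlevel(K')=:j$. If $i=j$, then $K$ and $K'$ are distinct connected components of the same subgraph $G^*[L^F_{\geqslant i}]$, so $F(K)\cap F(K')=\emptyset$. If $i<j$, then $F(K')\subseteq L^F_{\geqslant j}\subseteq L^F_{\geqslant i}$ and $K'$ is connected in $G^*[L^F_{\geqslant i}]$, so $F(K')$ is contained in a single level-$i$ component; that component is either $K$, giving $F(K')\subseteq F(K)$, or a different one, giving $F(K)\cap F(K')=\emptyset$. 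Thus in every case either $F(K)\cap F(K')=\emptyset$, or $F(K')\subseteq F(K)$ with $i<j$.

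Now suppose $e\in X(K)\cap X(K')$, with endpoint faces $p$ and $q$. Because $e\in\delta_{G^*}(F(K))$, exactly one of $p,q$ lies in $F(K)$; say $p\in F(K)$ and $q\notin F(K)$. Because $e\in\delta_{G^*}(F(K'))$, exactly one of $p,q$ lies in $F(K')$. In the disjoint case, $p\notin F(K')$, so $q\in F(K')$ and hence $\dualfacelevel(q)\ge j\ge i$; by the second fact, $q\in F(K)$ — contradiction. In the containment case, $q\notin F(K)\supseteq F(K')$, so $q\notin F(K')$ and therefore $p\in F(K')$, giving $\dualfacelevel(p)\ge j$; by the first fact $\dualfacelevel(q)\ge j-1\ge i$, and again the second fact forces $q\in F(K)$ — contradiction. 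Since no such $e$ exists, $X(K)$ and $X(K')$ are edge-disjoint, as sets of edges of $G$ (equivalently, of $G^*$).

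The main — and essentially the only — subtlety I anticipate is the containment case: connectivity of level components by itself does not close it, and one genuinely needs the layered property that adjacent faces of $G^*$ differ in level by at most one, together with the strict inequality $i<j$, to conclude that $q$, being adjacent to a face of $F(K')$ and hence of level at least $j-1\ge i$, must actually belong to $F(K)$. Everything else is routine bookkeeping with the definitions of level and of level component (and implicitly uses connectedness of $G$, hence of $G^*$, so that levels are finite and well defined).
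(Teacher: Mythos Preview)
Your proof is correct and rests on the same two elementary facts the paper uses: adjacent faces in $G^*$ differ in level by at most one, and any face adjacent to a face of $F(K)$ with level at least $i$ must itself lie in $F(K)$. The paper's version is organized a bit more directly: it first observes that every edge of $X(K)$, as a dual edge, joins a level-$i$ face of $K$ to a level-$(i-1)$ face, and then argues in one stroke that no edge of $\delta_{G^*}(F(K'))$ with $K'\ne K$ can have this form, which lets it avoid your explicit disjoint/containment case split.
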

\begin{proof}
Let $i$ be the level of $K$. The edges of $X(K)$ are edges of
$\delta_{G^*}(F(K))$. Therefore, as edges of $G^*$, they have
one endpoint in $K$ and one in a level-$(i-1)$ face. If an edge of
$\delta_{G^*}(F(K'))$ has an endpoint in $K$ then $K \neq K'$ implies
the level of $K'$ is at least $i+1$, so it cannot be an edge of $X(K)$. If, on the other hand, an edge
of $\delta_{G^*}(F(K'))$ has an endpoint in level $i-1$ then $K \neq
K'$ implies the other endpoint is not in $K$.
\end{proof}

The following definition is illustrated in Figure~\ref{fig:comptree}.
\begin{definition} \label{def:component-tree}
The {\em component tree} $\mathcal K$ is the
 rooted tree whose nodes are the level components and in which $K$ is
 an ancestor of $K'$ if the faces of $K$ include the faces of $K'$.
\end{definition}
The root of the component tree is the unique level-0 component
consisting of all of $G^*$.

\begin{figure}%[a]
\begin{center}
\subfigure[A triangulated graph (blue vertices and edges) along with a
dual BFS tree (in red, faces at different BFS levels are indicated by
different shapes). The (primal) components $\mathcal K_{\geqslant k}$ are
indicated as shaded subgraphs. The deeper the level of a component,
the darker its shade is.]
{\includegraphics[width=0.5\textwidth]{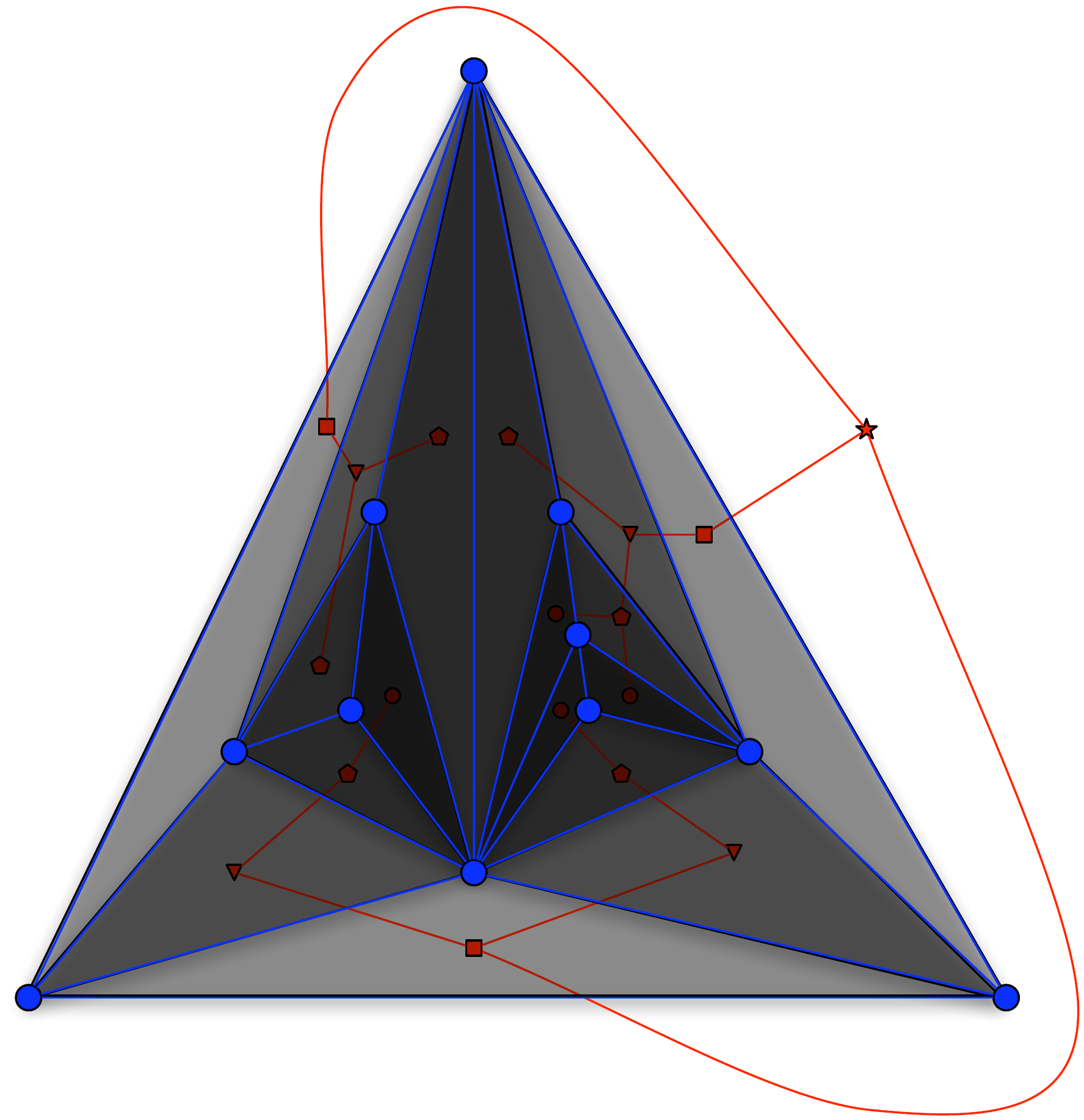}}
\hspace{0.1\textwidth}
\subfigure[The corresponding component tree $\mathcal K$. Each node of
the component tree corresponds to a connected component of $G$.]
{\includegraphics[width=0.22\textwidth]{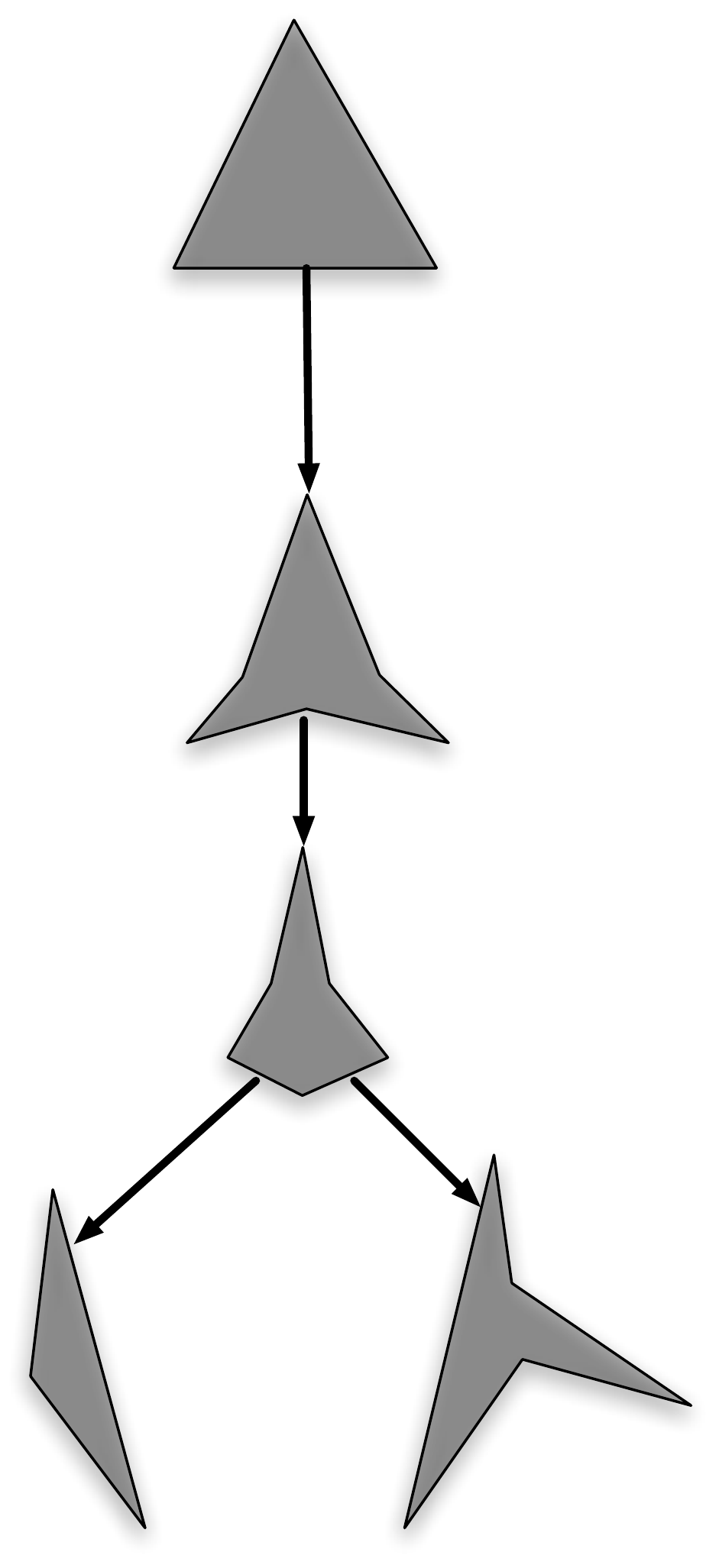}}
\end{center}
\caption{Illustration of the component tree $\mathcal K$.}
\label{fig:comptree}
\end{figure}

\begin{definition}
An edge $f f'$ of
$G^*$ has level $i$ if $f$ has level~$(i-1)$ and $f'$ has level~$i$. We
write $\dualedgelevel(f f')$ for the level of $ff'$. We
use $L^E_i$ to denote the set of edges of level $i$. 
\end{definition}
Note that not every edge
of $G^*$ is assigned a level. 
\begin{definition} Let
$L^V_i$ denote the set of vertices of the primal graph $G$ that are
endpoints in the primal graph $G$ of edges in $L^E_i$.
\end{definition}
 Note that a vertex of $G$ can be an endpoint of
two edges at different levels $i$ and $j$, so $L^V_i$ and $L^V_j$ are
not necessarily disjoint.

\begin{definition} The {\em level} $\dualnodelevel(v)$ of a primal vertex~$v$ 
is defined to be $\min_f\dualfacelevel(f)$ over all
faces $f$ incident to~$v$.
\end{definition}
Note that $L^V_i$ is {\em not} the set of vertices with level~$i$.

\subsection{The Primal Tree}
\label{sec:cyclesep:mirrortree}
The algorithm maintains a primal spanning tree $T$. We start by
describing the initial value of $T$ and its properties.

\begin{lemma}\label{lem:Tparent}
There exists a spanning tree $\bar T$ such that, for any vertex $u$, 
$\dualnodelevel(\parent_{\bar T}(u)) < \dualnodelevel(u)$. 
$\bar T$ can be computed in linear time. 
\end{lemma}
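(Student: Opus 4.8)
The plan is to build $\bar T$ directly from the level structure already defined via the dual BFS. Recall that $\dualnodelevel(v) = \min_f \dualfacelevel(f)$ over faces $f$ incident to $v$, and that $\dualfacelevel$ is the BFS distance in $G^*$ from $f_\infty$. First I would argue that for every vertex $u$ with $\dualnodelevel(u) = j \ge 1$, there is a neighbor $w$ of $u$ in $G$ with $\dualnodelevel(w) = j-1$. To see this, pick a face $f$ incident to $u$ with $\dualfacelevel(f) = j$; since $j \ge 1$, the BFS in $G^*$ gives a face $f'$ adjacent to $f$ in $G^*$ with $\dualfacelevel(f') = j-1$. The edge $e = ff'$ of $G^*$ is, in the primal, an edge on the boundary of both $f$ and $f'$; let its primal endpoints be $u'$ and $u''$. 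Because $f$ is incident to $u$ and also to $u', u''$, and the graph is triangulated (so $f$ is a triangle), $u \in \{u', u''\}$ — or, if not, one more step around the triangle $f$ reaches the needed neighbor; in either case $u$ has a primal neighbor lying on the level-$(j-1)$ face $f'$, hence of $\dualnodelevel$ at most $j-1$, and it is exactly $j-1$ since adjacent vertices' node-levels differ by at most one (a vertex on a level-$(j-1)$ face incident to an edge of level $j$ has node-level exactly $j-1$ or less, but $u$'s neighbor sharing face $f'$ has a face of level $j-1$, so its node-level is $\le j-1$, and $\ge j-1$ would require checking it is not smaller — actually we only need ``$< j$'', so $\le j-1$ suffices).

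Given this, define $\bar T$ by choosing, for each vertex $u$ with $\dualnodelevel(u) \ge 1$, a parent $\parent_{\bar T}(u) = w$ where $w$ is any primal neighbor of $u$ with $\dualnodelevel(w) = \dualnodelevel(u) - 1$ (break ties arbitrarily); the vertices with $\dualnodelevel = 0$ (those incident to $f_\infty$ or, more precisely, to a level-0 face) are handled by first building a spanning tree of the subgraph they induce — I need to check this subgraph is connected, which follows because the level-0 faces form a connected subgraph of $G^*$ (they are exactly the faces at BFS distance $0$, i.e.\ just $f_\infty$, or the induced subgraph of small-level faces is connected by the BFS structure) and in the primal the union of their boundaries is connected. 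Since each non-zero-level vertex's parent has strictly smaller $\dualnodelevel$, following parent pointers strictly decreases the level and terminates at a level-0 vertex, so $\bar T$ is acyclic and spans $V(G)$; it is a tree. The monotonicity $\dualnodelevel(\parent_{\bar T}(u)) < \dualnodelevel(u)$ holds by construction for $u$ of positive level; for $u$ of level $0$ the statement is vacuous (or one roots the level-0 tree and notes the claim is only required for vertices whose parent is defined — re-reading, the statement quantifies over all $u$, so $\bar T$ should be rooted at a level-0 vertex and the level-0 spanning tree's internal edges connect vertices all of level $0$; to satisfy a strict inequality I instead phrase the level-0 part as: the claim is required only where it makes sense, and in the paper's usage the root is excluded — I would state it cleanly as ``for every non-root $u$''). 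Finally, linear-time computability: the dual BFS from $f_\infty$ computes all $\dualfacelevel$ values in $O(|G|)$ time, from which all $\dualnodelevel(v)$ are read off by scanning incident faces, and the parent pointers are assigned in one more pass over the edges; the level-0 spanning tree is a BFS/DFS in linear time.

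The main obstacle I expect is the careful handling of the interface between triangulated faces and the dual BFS: verifying rigorously that a level-$j$ vertex $u$ genuinely has a primal neighbor of level exactly (or at most) $j-1$, rather than merely a face of level $j-1$ somewhere nearby, requires using triangulation of the natural faces precisely and being careful that the face witnessing $u$'s level and the face witnessing the neighbor's lower level share an appropriate edge. A secondary subtlety is connectivity of the level-0 vertex subgraph, and making the ``strict decrease'' statement fully consistent at the root; both are routine once the neighbor-descent claim is established, but they are where the argument must be stated with care rather than hand-waved.
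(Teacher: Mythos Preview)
Your approach is essentially the paper's. The paper's version is cleaner in two places. First, your case split on whether $u \in \{u', u''\}$ is unnecessary: since $\dualnodelevel(u) = j$, the vertex $u$ cannot lie on the level-$(j-1)$ face $f'$, so $u$ is always the \emph{third} vertex of the triangle $f$, and both $u'$ and $u''$ are primal neighbors of $u$ with node-level at most $j-1$; the paper simply takes $\parent_{\bar T}(u)$ to be whichever of the two other vertices of the witnessing triangle has smaller level. Second, the level-$0$ part is much simpler than you fear: there is exactly one face at level $0$, namely $f_\infty$, so there are exactly three level-$0$ vertices (the vertices of the triangle $f_\infty$). The paper picks one of them as the root, conventionally sets its level to $-1$, and makes it the parent of the other two, which resolves the strict-inequality issue directly. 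Your plan to take an arbitrary spanning tree of the level-$0$ subgraph could, if rooted at an endpoint of a three-vertex path, leave a level-$0$ vertex whose parent is a non-root level-$0$ vertex, so the strict inequality would fail there; the star rooted at any one of the three vertices fixes this.
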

\begin{proof}
For a primal vertex~$u$ 
with $\dualnodelevel(u) = i$, let $f$ be a level-$i$ face to which
$u$ is incident (ties are broken arbitrarily, but consistently). 
Let $v,w$ be the other two vertices of that face~$f$. The parent of $u$ in
$\bar T$, denoted by $\parent_{\bar T}(u)$, is the vertex in $\set{v,w}$ with the
smaller level (again, breaking ties arbitrarily and consistently).

Since $\dualnodelevel(u) = i$, $u$ is not incident to a level-$(i-1)$
face. Hence $v,w$ must be adjacent to a level-$(i-1)$ face $f'$.
Therefore, $\dualnodelevel(\parent_{\bar T}(u)) \leq i-1 <
\dualnodelevel(u) = i$. 

To complete the definition of $\bar T$, we choose an arbitrary vertex~$r$ incident to $f_\infty$ to be the
root of $\bar T$ by assigning it to be the parent of the two remaining
vertices at level $i=0$. 
For convenience we set the level of the root vertex to $-1$. 
\end{proof}

\subsection{The Preprocessing Step}
To compute a simple cycle separator one has to first compute the
component tree $\mathcal K$, the spanning tree $T$, and its cotree
$T^*$. This is done by the preprocessing step
(Algorithm~\ref{algo:prepro}), which runs in linear time.

\begin{algorithm}[htb]
\DontPrintSemicolon
 \SetAlgoLined
 choose an arbitrary face as $f_\infty$\; 
 compute face, edge, and vertex levels $\dualfacelevel(\cdot)$, $\dualedgelevel(\cdot)$, and $\dualnodelevel(\cdot)$, respectively\;
 compute the component tree $\mathcal K$\;
 initialize $T$ to be the tree $\bar T$ as defined in Lemma~\ref{lem:Tparent}\;
 initialize $T^*$ to be the cotree of $T$\;
 \Return $(\mathcal K,T,T^*)$
 \caption{{\sc Preprocess}$(G)$\label{algo:prepro}}
\end{algorithm}

The efficient implementation of {\sc RecursiveDivide}, which computes the
decomposition tree $\mathcal T$ in linear time,
recursively separates regions of $\barG$. It
maintains the component tree $\mathcal K$, spanning tree $T$, and
cotree $T^*$ of the currently handled region $R$
throughout the recursive calls. This is described in detail in Section~\ref{sec:maint}.
$T$ is initialized to be the tree $\bar T$ of $\barG$. {\sc
 RecursiveDivide} maintains the following invariants:
\begin{invariant}\label{inv:disjoint-cycles}
Let $K$ and $K'$ be two distinct components. $X(K)$ and $X(K')$ do not
share natural edges.
\end{invariant}
\begin{invariant}\label{inv:Tparent}
$\dualnodelevel(v) < \dualnodelevel(u)$ for any two natural vertices $u$ and $v$ of $R$ such that $v$ is an
ancestor of $u$ in $T$. 
\end{invariant}
Since $\barG$ has only natural vertices and edges,
Lemma~\ref{lem:disjoint-cycles} and Lemma~\ref{lem:Tparent}
show that the invariants initially hold for $R=\barG$.

\subsection{Computing a Simple Cycle Separator} \label{sec:cyclesep:algo}

We first provide an intuitive description of the separator algorithm
{\sc SimpleCycleSeparator}
(Algorithm~\ref{algo:cycle}). 
Let $m$ denote the number of natural edges in~$G$. 

\begin{enumerate}
\item The algorithm computes a balanced fundamental cycle $\tilde C$
(Lines~\ref{line:fundcyc:start}--\ref{line:fundcyc:end}). 
If $\tilde C$ consists of fewer than $4\sqrt{m}$ natural edges, 
then $\tilde C$ is a short balanced simple-cycle separator. 
\label{step:fundcycle}
\item Otherwise, 
the fact that $\tilde C$ is long implies 
that it intersects components at many 
(more than $2\sqrt m$) 
consecutive levels of the component tree $\mathcal K$ 
(we say that $\tilde C$ intersects a component $K$ 
if $\tilde C$ has at least one edge in $K-X(K)$ and 
at least one in $G-K$).
The algorithm performs a binary search procedure on the range $[l,h)$, 
where $l,h$ are the minimum and maximum level $\dualnodelevel(v)$ of a vertex $v \in \tilde C$, respectively.
At each step of the binary search, the algorithm identifies 
\begin{itemize}
\item the median level $i_0 = \lfloor (l + h)/2 \rfloor$, 
\item the highest-level component $K_-$ intersected by $\tilde C$,
 whose level $i_-$ is smaller than $i_0$ and
 whose bounding cycle $X(K_-)$ has few (at most $\sqrt m$) natural edges, and 
\item the smallest-level component $K_+$ intersected by $\tilde C$,
 whose level $i_+$ is at least $i_0$ and
 whose bounding cycle $X(K_+)$ has few natural edges.
\end{itemize}
The monotonicity of the primal tree $T$ implies 
that the number of natural edges of $\tilde C$ between levels $i_-$ and $i_+$ 
(that is, the number of edges of $\tilde C$ in $K_- - K_+$) 
is at most $2\sqrt m$ 
(see Lemma~\ref{lem:CycleSeparator}).
\item If $W(G - K_-) > 3W/4$ 
we continue the binary search on the range $[l,i_-)$. 
Similarly, if $W(K_+) > 3W/4$ 
we continue the binary search on the range $[i_+,h)$.
\item Otherwise, 
the graph $G'$ induced by the edges of $X(K_-),X(K_+)$ 
and the edges of $\tilde C$ in $K_- - K_+$ 
is a biconnected planar graph with at most $4\sqrt m$ natural edges, 
none of whose faces weighs more than~$3W/4$. 
See Figure~\ref{fig:cyclesep}. 
The algorithm uses a simple greedy procedure, {\sc GreedyCycleSeparator}, 
to output a balanced simple cycle separator of $G'$ in $O(\sqrt m)$ time.
\end{enumerate}

\begin{algorithm}[h!]
\DontPrintSemicolon
 \SetAlgoLined
let $m$ be the number of natural edges in $G$ \label{line:fundcyc:m}\; 
let $W:=W(G)$ \label{line:fundcyc:W}\;
compute $e^*$ to be the $3/4$--balanced edge separator of $T^*$; \ \ \ let $e$ be the primal of $e^*$\;\label{line:fundcyc:start}
let $\tilde C$ denote the fundamental cycle defined by $e\not\in T$ and $T$\;\label{line:fundcyc:end}
\lIf{$\tilde C$ has at most $4 \sqrt{m}$ natural edges}{\Return $\tilde C$\ \ }
\tcc*{$\tilde C$ short enough}
let $l,h$ be the minimum and maximum level $\dualnodelevel(v)$ of a vertex $v \in \tilde C$ \label{line:lh}\;
\While(\tcc*[f]{binary search for $i_0$ in range $[l,h)$}){$l<h$}{
set all $i_0,i_-,i_+$ to $l+\floor{(h-l)/2}$\;
\Repeat(\tcc*[f]{sequential search for level $i_-<i_0$ with small boundary})
{$i_- < l$ or $X(K_-)$ has at most $\sqrt{m}$ natural edges\label{line:i-until}}
{$i_-:=i_--1$; 
let $K_-$ be the unique component at level $i_-$ that $\tilde C$ intersects \label{line:findK-}\;}
\lIf{$W(G\setminus K_-)>3W/4$}{$h:=i_-$; {\bf continue}} \label{line:k-}\;
\Repeat( \tcc*[f]{sequential search for level $i_+>i_0$ with small boundary})
{$i_+ > h$ or $X(K_+)$ has at most $\sqrt{m}$ natural edges\label{line:i+until}}
{$i_+:=i_++1$; 
let $K_+$ be the unique component of level $i_+$ that $\tilde C$ intersects \label{line:findK+}\;}
\lIf{$W(K_+)>3W/4$}{$l:=i_+$; {\bf continue}} \label{line:k+}\;
let $G'$ be the graph induced on $G$ by 
$X(K_-) \cup X(K_+) \cup\left( \tilde C \cap (K_- - K_+) \right)$ \label{line:G'}\;
\Return {\sc GreedyCycleSeparator}$(G')$\;\label{line:bothsmall}
} % end while
\caption{{\sc SimpleCycleSeparator}$(\mathcal G)$, $\mathcal{G}$ is the tuple $(G,\mathcal K,T,T^*)$\label{algo:cycle}}
\end{algorithm}

\noindent {\em Note:} To avoid clutter, 
the pseudocode of {\sc SimpleCycleSeparator} does not handle the boundary case 
where the loop in line~\ref{line:findK-} terminates without 
finding a level $K_-$ 
with $\abs{X(K_-)} < \sqrt m$. 
In this case 
there is no need to shortcut $\tilde C$ at a small level. 
Specifically, 
$\tilde C$ is entirely enclosed by $K_-$, 
the condition in line~\ref{line:k-} is considered false, 
and 
$X(K_-)$ is considered to be an empty set of edges. 
A similar statement applies to $K_+$.
\begin{lemma}\label{lem:K+}
The components $K_-$ and $K_+$ 
defined in Lines~\ref{line:findK-} and~\ref{line:findK+}, respectively, 
are well defined. 
\end{lemma}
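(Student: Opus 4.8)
The plan is to show that each level crossed by $\tilde C$ between $l$ and $h$ hosts a *unique* component intersected by $\tilde C$, so that the assignments in Lines~\ref{line:findK-} and~\ref{line:findK+} always refer to a well-defined object, and moreover that the sequential searches in Lines~\ref{line:i-until}--\ref{line:findK-} and~\ref{line:i+until}--\ref{line:findK+} do terminate with a level whose bounding cycle is small (or else fall into the boundary case described in the Note). The argument has two ingredients: connectivity of $\tilde C$, and the monotonicity of the primal tree $T$ (Invariant~\ref{inv:Tparent}).

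First I would fix a level $i$ with $l \le i < h$ and argue that $\tilde C$ intersects exactly one level-$i$ component. Existence: since $l$ and $h$ are the minimum and maximum values of $\dualnodelevel(v)$ over $v \in \tilde C$, and $\tilde C$ is a cycle (hence connected) in $G$, the vertex levels along $\tilde C$ take every integer value in $[l,h]$; pick a vertex $v \in \tilde C$ with $\dualnodelevel(v) = i$. Then $v$ is incident to a level-$i$ face, which lies in some level-$i$ component $K$, and since $\tilde C$ also contains vertices of level $< i$, the cycle $\tilde C$ has an edge in $K - X(K)$ and an edge in $G - K$, i.e.\ $\tilde C$ intersects $K$. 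Uniqueness: suppose $\tilde C$ intersected two distinct level-$i$ components $K, K'$. Walk along the cycle $\tilde C$ from a vertex inside $K$ (strictly enclosed by $X(K)$) to a vertex inside $K'$. By Corollary~\ref{cor:primal-cycle} and Lemma~\ref{lem:disjoint-cycles} (or rather Invariant~\ref{inv:disjoint-cycles}, since we are working in a region $R$ of $\barG$ and need only natural edges), the bounding cycles $X(K)$ and $X(K')$ of distinct components do not share natural edges, and $K'$ is not enclosed by $X(K)$; so the walk must cross $X(K)$ to leave $K$ and cross $X(K')$ to enter $K'$, and along the way its vertex levels return to a value $\le i-1$ in between. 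This is where I would invoke the monotonicity of $T$: the portion of $\tilde C$ between the crossing of $X(K)$ and the crossing of $X(K')$ is, up to the single non-tree edge $e$ of $\tilde C$, a path in $T$, so the node levels along it are unimodal; combined with the fact that both endpoints of that subpath sit at level $i$ or deeper while the interior dips to level $\le i - 1$, we get a contradiction unless the whole subpath stays at level $\ge i$, forcing $K = K'$. (The one exception, when the non-tree edge $e$ falls inside this stretch, is handled by the Note's boundary-case remark and by choosing the root of $T^*$ appropriately; I would dispatch it by observing $\tilde C = T$-path $\cup\{e\}$ has at most one "ascent followed by descent" pattern in node level.)

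Given uniqueness at every level in $[l,h)$, the component called $K_-$ in Line~\ref{line:findK-} (resp.\ $K_+$ in Line~\ref{line:findK+}) is well defined the moment the loop stops at a level $i_- \ge l$ (resp.\ $i_+ \le h$); and the loop does stop, because the termination condition of the \textbf{repeat} is the disjunction "$i_- < l$ or $X(K_-)$ has few natural edges," and $i_-$ strictly decreases each iteration, so in the worst case the loop exits via $i_- < l$, which the Note tells us to treat as "$X(K_-) = \emptyset$," again well defined. Finally I would note that the binary-search invariant "$\tilde C$ crosses more than $2\sqrt m$ levels" (established in Step~\ref{step:fundcycle}, since a long $\tilde C$ must span many levels) guarantees $l < h$ whenever we reach these lines, so the interval $[l,h)$ on which the search operates is non-empty and $i_0 = l + \lfloor (h-l)/2 \rfloor$ genuinely lies in $[l,h)$.

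The main obstacle I anticipate is the uniqueness argument, specifically handling the interaction between the single non-tree edge $e$ and the node-level profile along $\tilde C$: monotonicity of $T$ (Invariant~\ref{inv:Tparent}) controls node levels along *tree* paths, but $\tilde C$ is a tree path plus one extra edge, so the level sequence around the cycle is "two monotone arcs meeting at the non-tree edge and at the meeting vertex of the two $T$-paths," and one has to argue carefully that such a profile cannot descend below level $i$ and come back up twice. I would make this precise by letting $a$ be the lowest-level vertex of $\tilde C$ and cutting $\tilde C$ at $a$ and at $e$ into at most two $T$-subpaths, each with monotone node level by Invariant~\ref{inv:Tparent}; then any level value $i > l$ is attained on at most... well, it can be attained on both arcs, but the two attainments are connected through $a$ by staying below — and the key point is that a level-$i$ component, once left (going to level $\le i-1$), is re-entered only if we return to *the same* face set, which by the disjointness of bounding cycles forces the same component. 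Pinning down that last implication cleanly is the crux.
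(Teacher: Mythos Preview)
Your proposal follows the same line as the paper: existence from the level range $[l,h]$, and uniqueness from the monotonicity of $T$ (Invariant~\ref{inv:Tparent}) together with the decomposition of $\tilde C$ into two rootward $T$-paths plus the single non-tree edge $e$. The paper's proof is three sentences; you arrive at the same place with considerably more scaffolding.

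Two points where you overreach. First, your existence step asserts that ``the vertex levels along $\tilde C$ take every integer value in $[l,h]$,'' but this does not follow from connectivity: Invariant~\ref{inv:Tparent} says levels \emph{strictly decrease} along rootward paths but does not preclude skipping values, and adjacent vertices of $G$ need not have adjacent levels. Existence does not need this claim; it is enough that $\tilde C$ contains a vertex of level $l$ and one of level $h$, so for each $i$ in between it must cross the bounding cycle of some level-$i$ component. Second, you bring in disjointness of bounding cycles (Invariant~\ref{inv:disjoint-cycles}) for uniqueness, but the paper does not, and it is unnecessary here. The paper's one-line formulation---``once a rootward path leaves a component at some level it never enters any component at that level again''---is exactly the monotonicity consequence that dissolves your anticipated obstacle with $e$: the non-tree edge sits at the leafmost end of both rootward paths, hence inside whatever single level-$i$ component each path visits, and since the two paths share both $e$ at the top and the LCA at the bottom, they visit the same one.
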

\begin{proof}
Since $l \leq i_-\leq h$, the cycle $\tilde C$ must intersect some
component at level $i_-$. The monotonicity of the tree $T$
implies that once a rootward path leaves a component at some level it
never enters any component at that level again. The lemma follows
since $\tilde C$ is comprised of two rootward paths plus one edge.
\end{proof}

\begin{lemma}
For $l \leq i \leq h$, let $K_i$ be the unique component at level $i$
that is intersected by $\tilde C$.
If the weight not enclosed by $K_-$ is greater than $3W/4$ then there
exists a level $l \leq i_0 < i_- $ such that the weight
not enclosed by any $K_i$ with $l \leq i \leq i_0$ is at most
$3W/4$, and such that the weight
enclosed by any $K_i$ with $i_0 < i \leq h$ is at most
$3W/4$.
\end{lemma}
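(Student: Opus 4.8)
The plan is to exploit monotonicity of the weights $W(K_i)$ and $W(G\setminus K_i)$ as functions of the level $i$. First I would observe that, since $\tilde C$ intersects exactly one component $K_i$ at each level $i\in[l,h]$ (this is exactly the well-definedness established in Lemma~\ref{lem:K+}, via monotonicity of $T$), and since deeper components are nested inside shallower ones along this chain, we have $F(K_l)\supseteq F(K_{l+1})\supseteq\cdots\supseteq F(K_h)$. Consequently $W(K_i)$ is non-increasing in $i$ and $W(G\setminus K_i)$ — equivalently the weight enclosed by the complementary side — is non-decreasing in $i$. So the predicate ``$W(G\setminus K_i)\le 3W/4$'' holds for an initial segment of levels and the predicate ``$W(K_i)\le 3W/4$'' holds for a final segment.

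Next I would argue these two segments together cover $[l,h]$, i.e.\ there is no level $i$ at which \emph{both} $W(K_i)>3W/4$ and $W(G\setminus K_i)>3W/4$. This is where I would use the hypothesis that no single face weighs more than $3W/4$, combined with the fact that $K_i$ and the subgraph induced by the faces not in $F(K_i)$ partition the faces of $G$: if both sides exceeded $3W/4$ the total weight would exceed $3W/2>W$, a contradiction. (One must be slightly careful about weight sitting ``on'' the bounding cycle $X(K_i)$ itself, but since we pushed all vertex/edge weight onto faces, every unit of weight lies in exactly one of the two face-sets, so the partition is clean.) Hence for each $i$ at least one of the two inequalities holds.

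Now I would combine the hypothesis with these facts. The hypothesis says $W(G\setminus K_{i_-})>3W/4$; by monotonicity (non-decreasing) of $W(G\setminus K_i)$, this means $W(G\setminus K_i)>3W/4$ for \emph{all} $i\ge i_-$, so the ``initial segment'' on which $W(G\setminus K_i)\le 3W/4$ is contained in $[l,i_--1]$; in particular $i_-\ge l+1$, so the claimed range $[l,i_-)$ is nonempty. Let $i_0$ be the largest level in $[l,i_--1]$ with $W(G\setminus K_{i_0})\le 3W/4$; such an $i_0$ exists provided $[l,i_--1]$ is nonempty and contains at least one level satisfying the predicate. I would need to check $i=l$ qualifies: at level $l$ the component $K_l$ is the shallowest one $\tilde C$ touches, and the complementary side is small — indeed, in the algorithm's setup $W(G\setminus K_l)$ should be at most $3W/4$ because otherwise the binary search would already have recursed; but to keep the lemma self-contained I would instead note that by the covering property of the previous paragraph, every level is in one segment or the other, so the ``final segment'' where $W(K_i)\le 3W/4$ starts at some level $j_0\le i_-$ (since $W(G\setminus K_{i_-})>3W/4$ forces, via $W(K_{i_-})\le W - W(\text{complement of }F(K_{i_-}))$ ... hmm) — more directly: since $W(G\setminus K_{i_-})>3W/4$, level $i_-$ is \emph{not} in the initial segment, so it is in the final segment, so $W(K_{i_-})\le 3W/4$, and by non-increasing monotonicity $W(K_i)\le 3W/4$ for all $i\ge i_-$, hence for all $i$ with $i_0<i\le h$ once we fix $i_0<i_-$. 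It remains to produce the $i_0$: take $i_0$ to be the largest level $<i_-$ with $W(G\setminus K_{i_0})\le 3W/4$, and by the covering/monotonicity structure this is exactly one below the start of the final segment, and it satisfies both required bounds.

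\textbf{Main obstacle.} The routine monotonicity is easy; the delicate point is the boundary case at the shallow end — establishing that the set $\{i\in[l,i_--1] : W(G\setminus K_i)\le 3W/4\}$ is nonempty, so that a valid $i_0$ exists at all. I expect to handle this by invoking the covering property (every level lies in one of the two monotone segments) together with the no-heavy-face assumption, which forces the two segments to overlap or abut with no gap, pinning down $i_0$ as the last level of the initial segment; the hypothesis $W(G\setminus K_{i_-})>3W/4$ then guarantees $i_-$ lies strictly past that point, so $i_0<i_-$ as required. I would also double-check the degenerate situation flagged in the Note (when the $K_-$ search runs off the bottom without finding a small-boundary component), where $X(K_-)$ is treated as empty and the condition on line~\ref{line:k-} as false — in that case the hypothesis of this lemma is vacuously not triggered, so nothing needs proving.
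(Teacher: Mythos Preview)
Your monotonicity/covering framework is sound and is essentially the same skeleton as the paper's argument. But you correctly flag, and then fail to resolve, the one genuine obstacle: showing that the set $\{\,i \in [l,\,i_--1] : W(G\setminus K_i) \le 3W/4\,\}$ is nonempty, i.e.\ that your ``initial segment'' actually begins at $i=l$. Your proposed fix --- the covering property plus the no-heavy-face assumption --- does not do this. Those two facts together only guarantee that at every level at least one of the two inequalities holds; that is perfectly consistent with the initial segment being empty (for instance, if $W(K_l)$ were very small, then $W(G\setminus K_l)>3W/4$, yet covering still holds because $W(K_l)\le 3W/4$). And your aside that ``otherwise the binary search would already have recursed'' is circular: this lemma is precisely what justifies the correctness of the binary search.

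The missing idea, and the one the paper uses, is the balance of $\tilde C$ itself. Since $l$ is the minimum vertex level on $\tilde C$, the level-$l$ component $K_l$ encloses all of $\tilde C$, and therefore encloses every face that $\tilde C$ encloses. Because $\tilde C$ is a $3/4$-balanced separator, the weight it encloses is at least $W/4$; hence $W(K_l)\ge W/4$, i.e.\ $W(G\setminus K_l)\le 3W/4$. This single observation anchors your initial segment at $i=l$, after which your monotonicity argument goes through verbatim: take $i_0$ to be the largest level with $W(K_{i_0})\ge W/4$ (equivalently, the last level of the initial segment), note $i_0\ge l$ by the anchor just established, note $i_0<i_-$ by the hypothesis $W(G\setminus K_{i_-})>3W/4$, and conclude that every $K_i$ with $i>i_0$ has $W(K_i)<W/4\le 3W/4$.
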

\begin{proof}
Let $K_l$ be the unique level-$l$ component that encloses
$\tilde C$. Since $\tilde C$ is a balanced separator, the weight
enclosed by $K_l$ is at least $W/4$. Hence the weight not enclosed
by $K_l$ is at most $3W/4$. Let $\tilde K$ be the component with
maximum level that is intersected by $\tilde C$ and whose enclosed weight is
at least $W/4$. Let $i_0$ be the level of $\tilde K$. By the above
argument $l \leq i_0$, and since the weight not enclosed by $K_-$
is greater than~$3W/4$, $i_0 < i_-$. By choice of $i_0$, any $K_i$
with $i_0 < i \leq h$ encloses at most $W/4$ weight.
\end{proof}
A symmetric lemma applies to the case where the weight enclosed by
$K_+$ is greater than $3W/4$. These lemmas show that, if $l$ or $h$
are updated (i.e., the binary search continues), there exists some
level $i_0$ in the new search range for which both conditions in Lines~\ref{line:k-}
and~\ref{line:k+} are false, and hence the binary search
procedure must eventually terminate.

\begin{lemma}\label{lem:CycleSeparator}
Let $G$ be a biconnected plane graph $G$ with $m$ natural edges and
face weights such that no face weighs more
than $3/4$ the total weight and such that no face consists of more than 3 edges.
The procedure {\sc SimpleCycleSeparator}
finds a $3/4$--balanced simple cycle separator in $G$ with at most
$4\sqrt m$ natural vertices.
\end{lemma}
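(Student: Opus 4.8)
The plan is to verify the three required properties of the output cycle $C$: that $C$ is a simple cycle, that it is $3/4$--balanced, and that it has at most $4\sqrt m$ natural vertices (hence at most $4\sqrt{3\abs{V(G)}}$ natural vertices since $G$ is simple). I would organize the argument by the two branches of {\sc SimpleCycleSeparator}.

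First, consider the easy branch, where the balanced fundamental cycle $\tilde C$ computed in Lines~\ref{line:fundcyc:start}--\ref{line:fundcyc:end} has at most $4\sqrt m$ natural edges. Here the algorithm returns $\tilde C$ directly. By Lemma~\ref{lem:fundamental-cycle-separator}, $\tilde C$ is a balanced simple cycle separator with respect to the given face weights, so both the strictly-enclosed and the not-enclosed weight are at most $3/4$ of the total (in fact the lemma gives a $1/4$ lower bound too, but we only need the upper bound). The bound on natural vertices follows from the bound on natural edges of a simple cycle. So this branch is immediate once one observes that Line~\ref{line:fundcyc:start} correctly produces the edge whose fundamental cycle Lemma~\ref{lem:fundamental-cycle-separator} analyzes.

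Next, the main branch: $\tilde C$ has more than $4\sqrt m$ natural edges. The key structural claim is that $\tilde C$, being two rootward paths in $T$ plus one edge, must intersect components at more than $2\sqrt m$ consecutive levels of $\mathcal K$; this uses the monotonicity of $T$ (Invariant~\ref{inv:Tparent} / Lemma~\ref{lem:Tparent}), which guarantees that along a rootward path the vertex levels are strictly decreasing, so each level is ``crossed'' at most a bounded number of times and a long cycle forces many distinct levels. I would then argue, via the two lemmas already proved in the excerpt (on the existence of a suitable $i_0$), that the binary search terminates: whenever $l$ or $h$ is updated there remains a level $i_0$ in the new range for which both conditions in Lines~\ref{line:k-} and~\ref{line:k+} are false, so the search cannot loop forever and must reach Line~\ref{line:G'}. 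At termination we have components $K_-$ (level $<i_0$, small bounding cycle) and $K_+$ (level $\ge i_0$, small bounding cycle), well defined by Lemma~\ref{lem:K+}, with $W(G\setminus K_-)\le 3W/4$ and $W(K_+)\le 3W/4$. I then bound the number of natural edges of $G'$: $X(K_-)$ and $X(K_+)$ each have at most $\sqrt m$ natural edges by the termination condition of their search loops, and the portion of $\tilde C$ strictly between levels $i_-$ and $i_+$ (i.e.\ in $K_- - K_+$) has at most $2\sqrt m$ natural edges by monotonicity of $T$ — this is the crucial counting step, since a rootward path enters and leaves each level at most once, so between two fixed levels it contributes a bounded number of natural edges, and $\tilde C$ is two such paths plus an edge. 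Summing gives $\abs{E(G')}\le 4\sqrt m$ natural edges. I would also check $G'$ is biconnected (it is built from two nested simple cycles plus a connecting path, with the graph between them; using Corollary~\ref{cor:primal-cycle} that $X(K_-),X(K_+)$ are simple cycles, and that $\tilde C$ connects them) and that no face of $G'$ carries more than $3W/4$ weight (faces of $G'$ either correspond to faces of $G$, which satisfy this by hypothesis, or are the ``large'' inner/outer faces bounded by $X(K_+)$ resp.\ $X(K_-)$, whose weights are $W(K_+)\le 3W/4$ and $W(G\setminus K_-)\le 3W/4$ by the loop exit conditions). Then {\sc GreedyCycleSeparator}$(G')$ returns a $3/4$--balanced simple cycle separator of $G'$; since enclosure relations in $G'$ refine those in $G$ (every face of $G$ lies inside exactly one face of $G'$, carrying all its weight), a cycle balanced in $G'$ is balanced in $G$, and being a simple cycle in $G'$ with at most $4\sqrt m$ natural edges it has at most $4\sqrt m$ natural vertices.

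The main obstacle is the counting of natural edges of $\tilde C$ in $K_- - K_+$: one must carefully use Invariant~\ref{inv:Tparent} to show that each of the two rootward tree-paths comprising $\tilde C$ contributes at most $\sqrt m$ natural edges between the two chosen levels, which in turn relies on arguing that the separation of levels $i_- , i_+$ produced by the binary search and the sequential searches is $O(\sqrt m)$ — this is where the ``$\tilde C$ is long $\Rightarrow$ many consecutive intersected levels'' observation and the choice of the $\sqrt m$ threshold in the loops must be combined quantitatively. A secondary subtlety is handling the boundary cases flagged in the note after the algorithm (when no small-boundary $K_-$ or $K_+$ is found), where one treats $X(K_-)$ (resp.\ $X(K_+)$) as empty and the corresponding balance condition as vacuously false; I would dispatch these by observing $\tilde C$ is then entirely enclosed by $K_-$ (resp.\ encloses $K_+$), so the relevant side contributes no extra edges and the weight bound is inherited directly from $\tilde C$ being balanced.
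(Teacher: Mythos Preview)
Your overall structure matches the paper's, but there are two genuine gaps.

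First, your face-weight argument for $G'$ is wrong. The graph $G'$ consists \emph{only} of the edges of $X(K_-)$, $X(K_+)$, and the portion of $\tilde C$ in $K_- - K_+$; it has $O(\sqrt m)$ edges and hence $O(\sqrt m)$ faces. Apart from the outer face (exterior of $X(K_-)$) and the inner face (interior of $X(K_+)$), the remaining faces of $G'$ lie in the annulus between the two level cycles and each contains many faces of $G$ --- they are \emph{not} faces of $G$. So the hypothesis ``no face of $G$ weighs more than $3W/4$'' does not directly bound them. The paper's argument is that every annulus face of $G'$ is bounded in part by $\tilde C$, hence lies entirely on one side of $\tilde C$ in $G$; since $\tilde C$ is $3/4$--balanced, each such face has weight at most $3W/4$.

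Second, your bound on the number of natural edges of $\tilde C \cap (K_- - K_+)$ is incomplete. Monotonicity of $T$ (Invariant~\ref{inv:Tparent}) tells you that a rootward path from $X(K_+)$ to $X(K_-)$ has at most $i_+ - i_-$ natural vertices, so you need $i_+ - i_- \le \sqrt m$. This bound does \emph{not} come from ``$\tilde C$ is long $\Rightarrow$ many intersected levels'' combined with the threshold; it comes from Invariant~\ref{inv:disjoint-cycles} (edge-disjointness of level cycles on natural edges). By the termination conditions of the two sequential searches, for every level $i$ with $i_- < i < i_+$ the unique level-$i$ component intersected by $\tilde C$ has a bounding cycle with more than $\sqrt m$ natural edges; since these bounding cycles share no natural edges and there are only $m$ natural edges in $G$, there can be at most $\sqrt m$ such levels. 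You never invoke disjointness, and without it the step fails.
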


\begin{proof}
Since $G$ has face size at most 3, the fundamental cycle $\tilde C$
is a $3/4$--balanced simple cycle
separator. If it consists of fewer than $4\sqrt m$ natural edges it is returned in Line~\ref{line:fundcyc:end}.
Otherwise, the lemma follows from the
correctness of {\sc GreedyCycleSeparator} (Lemma~\ref{lem:greedy-sep-correct}), provided that we show
that $G'$ is a biconnected subgraph of~$G$ with $4\sqrt m$ natural edges, none of
whose faces weighs more than $3W/4$. 
We assume that both $K_-$ and $K_+$ exist. 
The cases where one of them or both do not exist are similar.
Consider the cycles $X(K_-), X(K_+)$, and $\tilde C$. Since $\tilde C$
intersects both $K_-$ and $K_+$, $G'$ is biconnected. See
Figure~\ref{fig:cyclesep} for an illustration.

To establish the bound on the number of edges in $G'$, note that, by
choice of $X(K_-)$ and $X(K_+)$, they consist of fewer than $\sqrt m$
natural edges each. It remains to bound $\abs{\tilde C \cap (K_- - K_+)}$,
which consists of two paths in $T$ between $X(K_+)$ and $X(K_-)$.
We claim that $i_+ - i_- \leq \sqrt m$. To see this,
observe that by definition of $i_-$ and $i_+$, for every level $i_- <
i < i_+$, the bounding cycle of the unique level-$i$ component
intersected by~$\tilde C$ consists of more than $\sqrt m$ natural
edges. Since bounding cycles do not share natural edges
(Invariant~\ref{inv:disjoint-cycles}) and since there are
$m$ natural edges in $G$, it must be
that $i_+ - i_- + 1 \leq \sqrt m$.
Consider node $u$ on $X(K_+)$ and let
$u_0,u_1,\dots, u_k$ be the natural vertices on the $u$-to-root path in $T$ (i.e., $u_0 = u$).
Let $v$ be the first vertex that belongs to $X(K_-)$ on this path. 
Since $u$ is incident to a level $i_+$ face, $\dualnodelevel(u) \leq i_+$.
By invariant~\ref{inv:Tparent}, for any $i$, $\dualnodelevel(u_i) \leq
\dualnodelevel(u) - i = i_+ - i$. Hence, $\dualnodelevel(u_{\sqrt m}
\leq i_+ - \sqrt m \leq i_-$. This implies that $u_{\sqrt m}$ belongs
to a level cycle at level at most $i_-$. It follows that 
the number of natural vertices in the path in $T$ from $u$ to $v$ is at most $\sqrt m$.
We therefore conclude that $\abs{\tilde C \cap (K_- - K_+)}$ consists of at most $2\sqrt m$
natural vertices.

As for the face weights, the weight of a face $f'$ of $G'$ is the total
weight of the faces of $G$ that are enclosed in $G$ by the cycle
formed by the darts of $f'$. The faces of $G'$ that correspond to the
exterior of $X(K_-)$ and to the interior of $X(K_+)$ have weight at most
$3W/4$ by the conditions in Lines~\ref{line:k-} and~\ref{line:k+}. All
of the other faces of $G'$ are either enclosed (in $G$) by $\tilde C$
or not enclosed by $\tilde C$. Therefore, the weight of each of these
faces is at most $3W/4$ since $\tilde C$ is a balanced separator.
\end{proof}

\begin{figure}%[a]
\begin{center}
\includegraphics[width=0.4\textwidth]{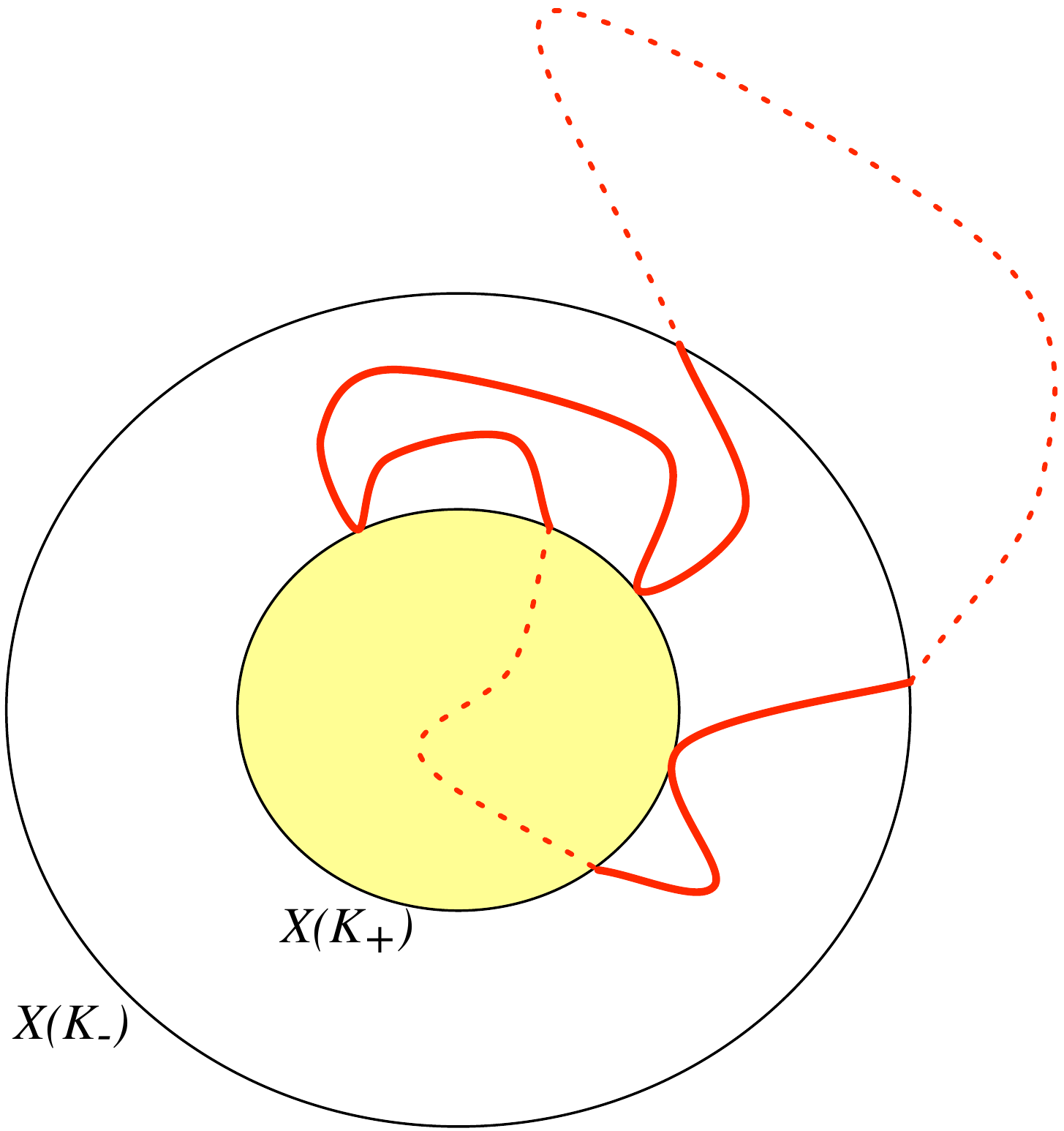}
\caption{Illustration of the cycle separator algorithm: the
 fundamental cycle $\tilde C$ is shown in red (solid and dashed). The level cycles
 $X(K_-)$ and $X(K_+)$ are indicated. The component $K_+$ is shaded
 yellow. $G'$ is the graph consisting of the edges of $\tilde C$ that
 belong to $K_- - K_+$ (solid red) as well as the edges of the cycles
 $X(K_-)$ and $X(K_+)$ (black). \label{fig:cyclesep}}
\end{center}
\end{figure}

Since any simple plane graph has at most three times as many edges as
vertices, and since triangulating and making biconnected any plane
graph $G$ requires $O(\abs{V(G)})$ edges, we can restate
Lemma~\ref{lem:CycleSeparator} in a more general form that does not
distinguish natural and artificial vertices.
\begin{corollary}\label{cor:CycleSeparator}
Let $G$ be a simple biconnected plane graph with face weights such that no face weighs more
than $3/4$ the total weight, and such that no face consists of more than 3 edges.
There exists a constant $c$ such that the procedure {\sc SimpleCycleSeparator}
finds a $3/4$--balanced simple cycle separator in $G$ whose length is
at most $c\sqrt{\abs{V(G)}}$.
\end{corollary}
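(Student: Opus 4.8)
The plan is to obtain Corollary~\ref{cor:CycleSeparator} as a direct specialization of Lemma~\ref{lem:CycleSeparator}: the ``one-shot'' hypothesis of the corollary is exactly the case in which the graph carries no region structure, so that every vertex and every edge is natural. First I would run {\sc Preprocess}$(G)$ to obtain the tuple $\mathcal G=(G,\mathcal K,T,T^*)$, where $T=\bar T$ is the tree of Lemma~\ref{lem:Tparent} and $T^*$ is its cotree. Declaring all vertices and edges natural, Invariant~\ref{inv:disjoint-cycles} and Invariant~\ref{inv:Tparent} become verbatim the statements of Lemma~\ref{lem:disjoint-cycles} and Lemma~\ref{lem:Tparent}, so they hold; moreover, since every face of $G$ has size at most~$3$, the dual $G^*$ has maximum degree~$3$, hence so does $T^*$, which is what is needed for the $3/4$-balanced edge separator of $T^*$ in Line~\ref{line:fundcyc:start} to exist and for Lemma~\ref{lem:fundamental-cycle-separator} to apply.

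Next I would invoke Lemma~\ref{lem:CycleSeparator} with $m:=\abs{E(G)}$. It guarantees that {\sc SimpleCycleSeparator}$(\mathcal G)$ returns a $3/4$-balanced simple cycle separator of $G$ with at most $4\sqrt m$ natural vertices; since here no vertex is artificial, this bounds the total length of the returned cycle by $4\sqrt{\abs{E(G)}}$. Finally I would apply sparsity (Fact~\ref{fact:sparse}), $\abs{E(G)}\le 3\abs{V(G)}-6<3\abs{V(G)}$, to conclude that the cycle has length at most $4\sqrt{3\abs{V(G)}}$. Thus $c=4\sqrt3$ works (any larger constant does too, which also absorbs the $O(\abs{V(G)})$ edges one would add if one wished to first triangulate and biconnect an arbitrary plane input, as remarked just before the corollary).

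I do not expect a genuine obstacle here; the only point requiring care is bookkeeping, namely verifying that the natural/artificial apparatus degenerates as claimed --- that with no artificial elements the invariants are implied by the already-proved Lemmas~\ref{lem:disjoint-cycles} and~\ref{lem:Tparent}, and that the corollary's weight hypothesis (no face exceeds $3/4$ of the total) is precisely the hypothesis under which {\sc SimpleCycleSeparator}, and in turn {\sc GreedyCycleSeparator}, were shown correct. Everything else is the one-line substitution $m\le 3\abs{V(G)}$.
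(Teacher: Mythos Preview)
Your proposal is correct and follows essentially the same route as the paper: invoke Lemma~\ref{lem:CycleSeparator} in the special case where every vertex and edge is natural, then translate the $4\sqrt{m}$ bound into a bound in terms of $\abs{V(G)}$ via the sparsity inequality $m<3\abs{V(G)}$. The paper's justification is the one-sentence remark immediately preceding the corollary, and your write-up simply makes the bookkeeping (preprocessing, invariants reducing to Lemmas~\ref{lem:disjoint-cycles} and~\ref{lem:Tparent}) explicit.
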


We next discuss the procedure {\sc GreedyCycleSeparator} that finds a simple cycle in a
biconnected graph with no face weight exceeding $3W/4$.

\begin{lemma}\label{lem:greedy-sep-correct}
Let $G$ be a biconnected planar graph with $m$ edges and face weights
summing to $W$ such that no face weighs more than
$3W/4$. There exists an $O(m)$ algorithm that finds a balanced
simple-cycle separator in $G$.
\end{lemma}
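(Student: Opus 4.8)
The plan is to first dispose of an easy case and then run a greedy ``peeling'' of the dual. If some face $f$ has weight in $[W/4,3W/4]$, then, since $G$ is biconnected, $\partial f$ is a simple cycle, and it partitions the faces into $\{f\}$ and the rest, of weights at most $3W/4$ and $W-W(f)\le 3W/4$ respectively; I return it. So assume from now on that every face has weight $<W/4$.

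The rest I would phrase in terms of simple cuts of $G^*$: a partition of the faces into $(R,\bar R)$ with both parts connected in $G^*$ corresponds, by Fact~\ref{fact:simple-cycle-simple-cut}, to a simple cycle in $G$, and the two enclosed weights are $W(R)$ and $W(\bar R)$. I call such an $R$ a \emph{region}. The algorithm maintains a region $R$ with $W(R)>3W/4$, starting from $R:=V(G^*)\setminus\{f_\infty\}$ (choosing $f_\infty$ so that $G^*-f_\infty$ is connected, which is possible since $G^*$ inherits biconnectivity from $G$). At each step it picks a face $f_0\in R$ incident to an edge of $\delta_{G^*}(R)$ and considers the connected components $R_1,\dots,R_t$ of $G^*[R\setminus\{f_0\}]$. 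Because $f_0$ is a single disk face of a biconnected graph and $f_0$ meets the boundary curve $\delta_{G^*}(R)$, each $R_i$ is again a region, each $R_i$ touches $\bar R$, and consequently $\{f_0\}\cup R_{i_1}\cup\dots\cup R_{i_k}$ is a region for \emph{any} sub-collection (all its pieces touch $f_0$, and the complement stays connected because every omitted $R_i$ touches $\bar R$). Now: if some $R_i$ has weight in $[W/4,3W/4]$, return $\delta_{G^*}(R_i)$; if some $R_i$ has weight $>3W/4$ --- there is at most one, since the $R_i$ are disjoint --- set $R:=R_i$ and repeat; otherwise every $R_i$ has weight $<W/4$, so the partial sums $W(f_0)+\sum_{i\le j}W(R_i)$ climb from $W(f_0)<W/4$ to $W(R)>3W/4$ in steps of size $<W/4$, so some prefix $P=\{f_0\}\cup R_1\cup\dots\cup R_j$ has $W(P)\in[W/4,3W/4)$, and I return $\delta_{G^*}(P)$.

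Correctness of the balance bound is then routine: whenever $R$ is (re)assigned it has weight $>3W/4$, so $W(f_0)<W/4$ forces $\sum_i W(R_i)=W(R)-W(f_0)>W/2$, hence $t\ge 1$ and one of the three cases applies with both enclosed weights at most $3W/4$. Termination follows since $|R|$ strictly decreases in the recursing case (at least $f_0$ and the non-heavy pockets are dropped), so there are at most $m$ iterations. For the $O(m)$ time bound the point is never to re-examine a face: in each iteration the work is $O(|\partial_G f_0|)$ to read off how $\partial f_0$ splits into the arcs bordering the pockets, plus the total size of the pockets that get \emph{discarded} this iteration --- namely all $R_i$ except the unique heavy one, whose weight is obtained by subtracting from the maintained running total $W(R)$, so it is never traversed --- and these discarded pockets (together with the distinct faces $f_0$) are pairwise disjoint across all iterations. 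I expect the main obstacle to be exactly this accounting: justifying that one iteration need only touch $f_0$ and the to-be-discarded pockets, and maintaining the boundary $\delta_{G^*}(R)$ incrementally rather than recomputing it (an alternative is to keep a spanning tree of $G^*$ decorated with subtree weights so pocket weights are read off in $O(1)$). The combinatorial claims that $f_0$ being a single face makes the $R_i$ regions touching $\bar R$, and the balance arithmetic, are the easy parts.
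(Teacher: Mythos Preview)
Your argument is correct but follows a genuinely different route from the paper's. The paper \emph{grows} a set $S$ one face at a time, starting from a single face and maintaining the invariant that $\delta_{G^*}(S)$ is a simple cycle, until $W(S)$ first exceeds $W/4$; the core lemma is that whenever $\partial S$ is simple there is always an adjacent face $f$ with $\partial(S\cup\{f\})$ still simple, proved by an inductive shrinking argument on a biconnected subgraph. The $O(m)$ bound is obtained by maintaining, for each face $f$, the number $\gamma(f)$ of maximal arcs of $\partial S$ on $\partial f$, updating $\gamma$ only along the newly-exposed portion of $\partial S$. You instead \emph{shrink} a region $R$ with $W(R)>3W/4$ by peeling a boundary face $f_0$ and analysing the pockets $R_1,\dots,R_t$ of $R\setminus\{f_0\}$; your balance step is a clean prefix-sum argument over the pockets. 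The topological claim that each $R_i$ touches $\bar R$ is the crux, and it does hold: if some $R_i$ touched only $f_0$, then $\delta_{G^*}(R_i)\subseteq\partial f_0$, and since both are simple cycles (biconnectivity) they would coincide, forcing $R_i=V(G^*)\setminus\{f_0\}$, a contradiction.

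What each approach buys: the paper's incremental $\gamma$--bookkeeping gives an immediate and self-contained $O(m)$ bound; your running-time accounting is the weak point you already flagged. To make it rigorous you need, in the recursing case, to identify the heavy pocket without traversing it. A lockstep exploration of all pockets in parallel, halting each as it is exhausted and declaring the last survivor heavy (the standard smaller-half trick), charges all work to the discarded pockets and gives $O(m)$ total; the spanning-tree-with-subtree-weights alternative you mention does not directly apply, since the pockets are not subtrees of any fixed spanning tree of $G^*$. Also, the side remark that $G^*$ inherits biconnectivity from $G$ is in fact a standard duality fact, but you only need the much weaker observation that any connected graph on at least two vertices has a non-cut vertex.
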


\begin{proof}
The proof is constructive. Since $G$ is biconnected, every face
is a simple cycle. If $W(f) > W/4$ for any
face $f$, then $f$ is a balanced simple cycle
separator. Otherwise, use the following greedy algorithm. Initialize
the set $S$ with a single arbitrary face of $G$. Repeatedly add a
face $f$ to $S$, maintaining the invariant that the edges of $\delta_{G^*}(S)$ (the
boundary of $S$) form a simple
cycle, stopping when $W(S)$ first exceeds $W/4$. At this point, the
edges of $\delta_{G^*}(S)$ form a simple-cycle separator. It remains
to show that, in each iteration, there exists a face to add to $S$
while maintaining the invariant, and to show how such a face can be
quickly found.

We claim that, if the boundary of $S$ is a simple cycle, there exists
a face $f \notin S$ such that the boundary of $S \cup \{f\}$ is a
simple cycle. We prove the claim using an inductive argument
(illustrated in Figure~\ref{fig:greedy}).
Let $e$ be an edge of $\delta(S)$. Let $f'$ be the face not in $S$
to which $e$ belongs. Let $S'$ be $S \cup f'$. 
If $S'$ has a simple boundary, we are done. Otherwise, $\delta(S')$
consists of a set $\mathcal C$ of at least two simple cycles.
Consider the graph $G'$ obtained from $G$ by
deleting the non-boundary edges of $S'$. Each cycle in $\mathcal C$ consists of at least one edge of
$\delta(S)$ and one edge of $f'$, and bounds a 
connected subgraph of $G'$. Let $H$ be one such connected subgraph.
Since $G$ is biconnected, any two vertices in $H$ are
connected by two vertex-disjoint paths in $G$. These paths can
be transformed into vertex-disjoint paths in $H$ by rerouting along
the boundary of $H$. Such rerouting is possible since the boundary of $H$ is a simple cycle.
Hence $H$ is biconnected. Since $H$ contains
an edge of $S$, $H \cup S$ is biconnected as well.
Now repeat the argument with $G$ replaced by $S \cup H$. Since
$S \cup H$ is strictly smaller than $G$ (it does not include at least one edge
--- the one in $f' \cap C'$ for some cycle $C' \in \mathcal C$ that is
not the one bounding $H$), the inductive argument proves the claim.

We describe how, in each iteration, the greedy algorithm finds a face
$f$ such that the boundary of $S\cup \set{f}$ is a simple cycle. The algorithm maintains,
for each face $f$,
the number $\gamma(f)$ of consecutive maximal subpaths of
$\delta(S)$ on $f$. (A subpath may consist of a single vertex of
$\delta(S)$.) Note that $S \cup f$ has simple boundary iff
$\gamma(f) = 1$. The algorithm also maintains a list of the faces $f$ 
not enclosed by $S$ that have $\gamma(f)=1$ and at least one edge shared
with $S$. When a face $f$ is added
to the set $S$, the algorithm updates $\gamma(\cdot)$ as follows.
Let $S'$ denote $S \cup f$. Since $G$ is biconnected, $f$
is a simple cycle. Let $P$ be the subpath of $f$ that consists
of the boundary vertices and edges of $S'$ that do not belong to $S$. 
Let $f_1,f_2, \dots$ be the faces other than $f$ incident to edges or
vertices of $P$ in clockwise order along $f$. Note that the $f_i$'s are
not necessarily distinct (this is illustrated in
Figure~\ref{fig:greedy-alg}).
Let $P_i$ be the maximal subpath (possibly a single vertex)
of $P$ along which $f_i$ is incident to $P$. For each subpath $P_i$, let $e_i^-$ and $e_i^+$ be the edges 
preceding and following $P_i$ in $\delta(S')$, respectively. 
If both $e_i^-$ and $e_i^+$ belong to $f_i$, decrease $\gamma(f_i)$
by one. 
If neither $e_i^-$ nor $e_i^+$ belongs to $f_i$, increase $\gamma(f_i)$
by one. 
Otherwise, $\gamma(f_i)$ remains unchanged.
Since the work done when adding a face is proportional to the sum of degrees of the vertices of $P$, and since each edge or vertex
belongs to $P$ only once, when it first enters $S$, 
the total running time of this greedy
algorithm is $O(m)$.
\end{proof}

\begin{figure}[h!!]
\begin{center}
\includegraphics[width=0.4\textwidth]{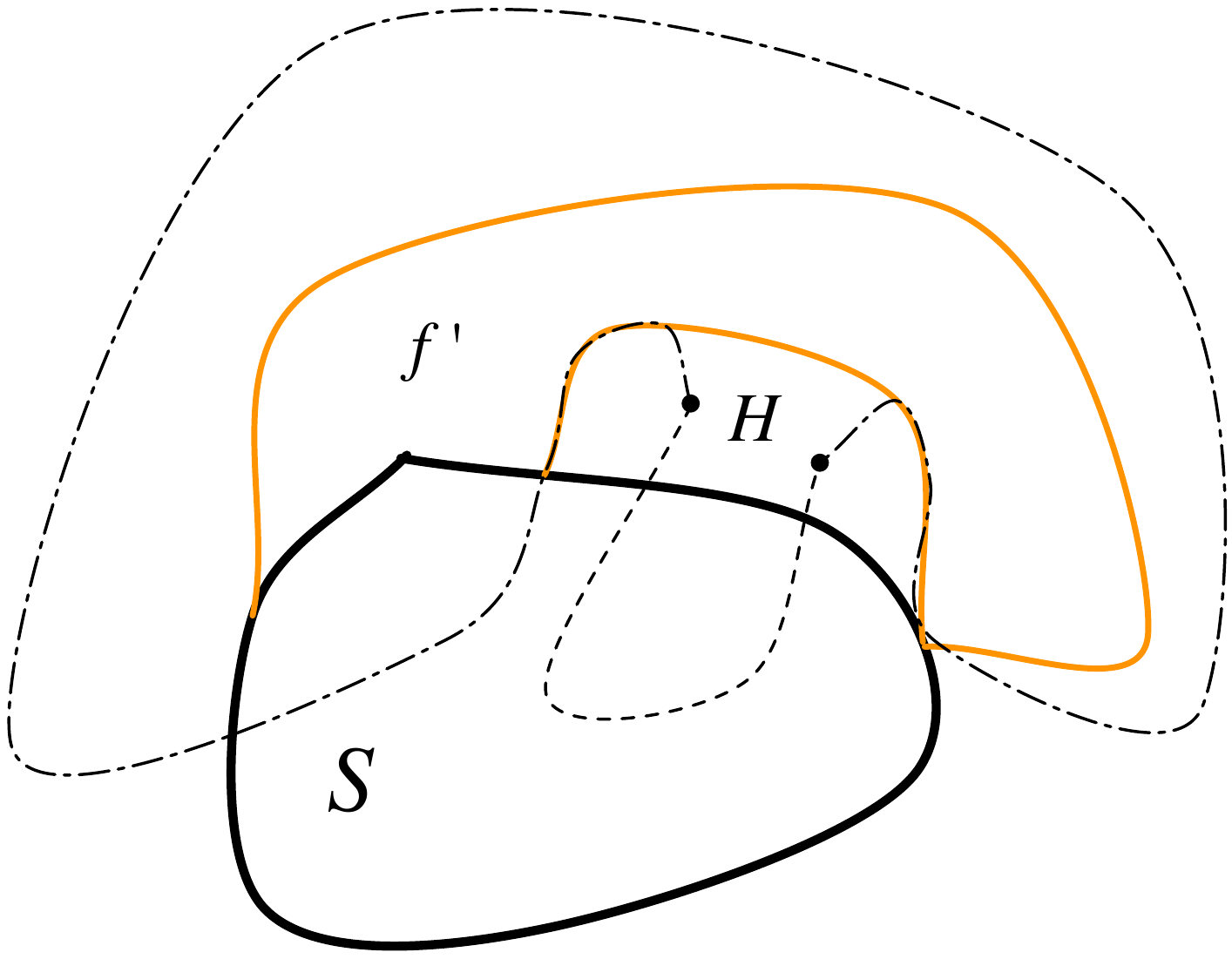}
\caption{Illustration of the proof of the auxiliary claim in
 Lemma~\ref{lem:greedy-sep-correct}. 
The boundary of the set $S$ is shown in solid thick black. The face $f'$ is
shown in solid orange. $H$ is a subgraph bounded by one of the simple
cycles that comprise the boundary of $S' = S \cup f'$. A pair of vertices in
$H$ is indicated, along with two vertex-disjoint paths between them
in the entire graph $G$. These paths can be transformed into disjoint
paths in $H$ by rerouting each of them along part of the simple cycle
bounding $H$. The subgraph $S \cup H$ is biconnected and strictly
smaller than $G$.
\label{fig:greedy}}
\end{center}
\end{figure}

\begin{figure}[h!!]
\begin{center}
\includegraphics[width=0.4\textwidth]{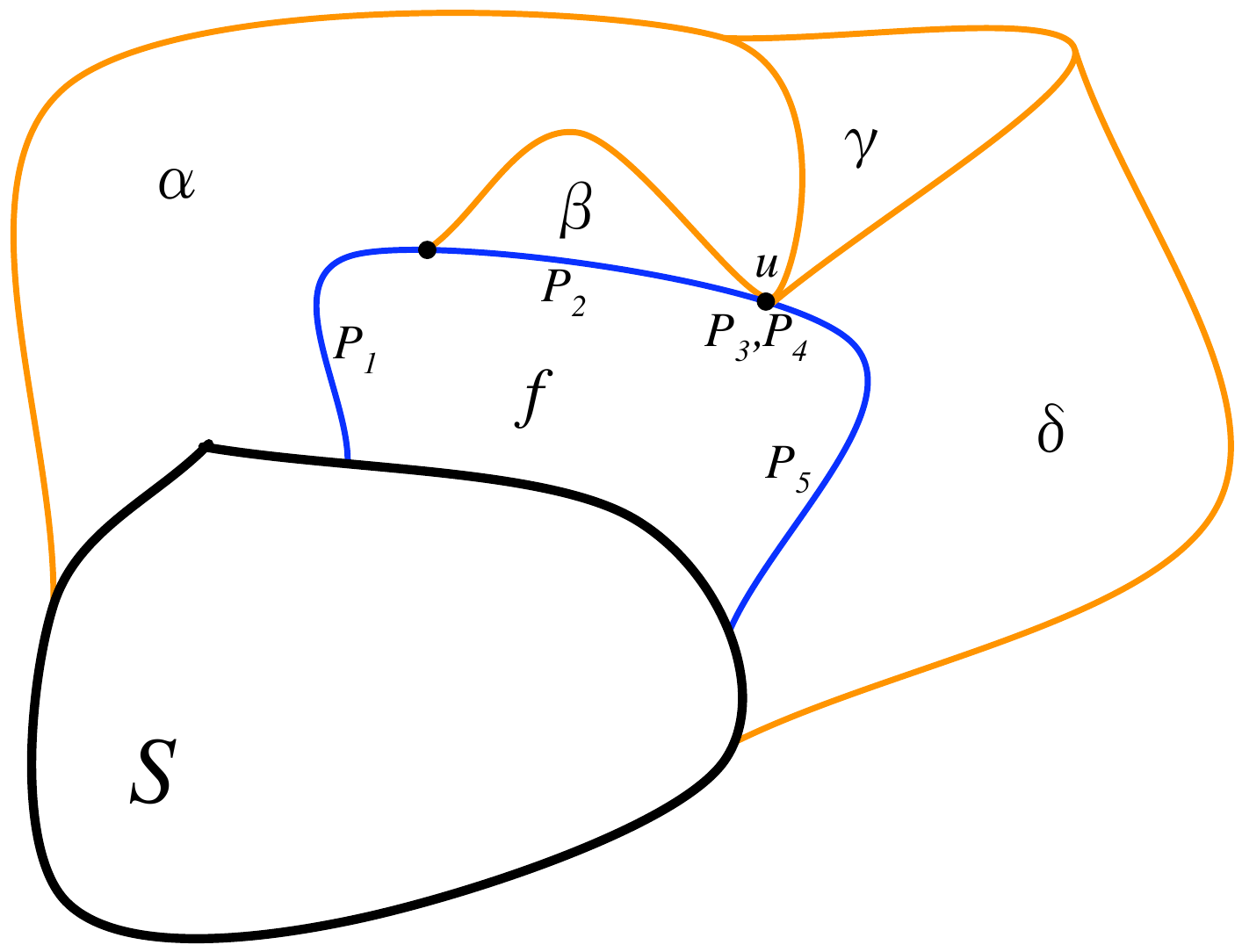}
\caption{Illustration of the algorithm 
in the proof of Lemma~\ref{lem:greedy-sep-correct}. 
The boundary of the set $S$ is shown in solid thick black. 
The path $P$ is solid blue. 
The list of faces $f_i$ is $f_1 = \alpha, f_2 = \beta,f_3 = \alpha, f_4 = \gamma, f_5 = \delta$. 
Note that the subpaths $P_3$ and $P_4$ both consist of the single vertex $u$.
Before $f$ is added to $S$, 
$\gamma(\alpha) = 1$. 
Since $e_1^-$, 
the edge of $\delta(S')$ preceding $P_1$, 
belongs to $\alpha$, 
and $e_1^+$, 
the edge of $\delta(S')$ following $P_1$, 
does not belong to~$\alpha$, 
$P_1$ does not contribute anything to~$\gamma(\alpha)$. 
Since both $e_3^-$ and $e_3^+$ do not belong to $\alpha$, 
$P_3$ contributes $+1$ to~$\gamma(\alpha)$. 
Hence, 
after $f$ is added to $S$, 
$\gamma(\alpha) = 2$.
\label{fig:greedy-alg}}
\end{center}
\end{figure}

\subsection{Efficient Implementation}
\label{sec:cyclesep-eff}

It is fairly straightforward to implement {\sc SimpleCycleSeparator} in linear time. In
this subsection we describe the data structures that support
implementing {\sc SimpleCycleSeparator} in sublinear time.
Our preprocessing procedure initializes the tree $T$ to be the 
tree $\bar T$ (Lemma~\ref{lem:Tparent}), the cotree $T^*$ to be the spanning tree of $\barG^*$, rooted at
$f_\infty$ whose edges are those not in $T$. $T$ is represented using
an Euler tour tree~\cite{HK99}. $T^*$ is represented
using a dynamic tree.
It also computes the component tree $\mathcal K$ and all level cycles
in $G$.
The component tree can be represented by a parent list. 
The level cycles are represented by splay trees~\cite{journals/jacm/SleatorT85}.

For every edge $e$ in $\barG$, let $P^{\mathcal K}(e)$ be the path in
$\mathcal K$ that consists of all components $K$ such that $e$ has one
endpoint in $X(K)$ and another endpoint strictly enclosed by
$X(K)$. We maintain these paths by storing the first and last components of $P^{\mathcal
 K}(e)$ in an array entry $\tt{R}(e)$. Note that $\tt{R}$ can be
generated in linear time in the preprocessing step as part of the
computation of the component tree.

To implement the algorithm efficiently we need to support the
following operations:
\begin{itemize}
\item Line~\ref{line:fundcyc:m} - Find the number of natural edges in
 $G$. This can be done by keeping track of the number $m_{\rm
 artificial}$ of artificial
 triangulation edges (Lemma~\ref{lem:boundary-bound} in
 Section~\ref{sec:maint} shows that this takes total $O(n)$ time over
 all calls to {\sc SimpleCycleSeparator}, so we may assume $m_{\rm
 artificial}$ is given). Let $F$ be the number of
 faces in $G$. $F$ can be found in $O(\log\abs{E(G)})$ amortized time by a
 single dynamic tree operation on~$T^*$. Since $G$ is triangulated, the number of natural edges
 is given by $3F/2 - m_{\rm artificial}$.
\item Line~\ref{line:fundcyc:W} - Find the total weight of faces in
 $G$. This is done using a single dynamic tree operation that returns
 the total weight of nodes (faces) in $T^*$
\item Line~\ref{line:fundcyc:start} - Find a balanced edge separator
 in $T^*$. This is done using a dynamic tree operation that returns
 a leafmost edge whose subtree has at least a certain total amount of
 weight ($W/4$ in our case). This operation takes $O(\log\abs{E(G)})$
 amortized time.
\item Line~\ref{line:fundcyc:end} - Find out if a path in $T$ consists
 of at most a given number $k$ of natural edges (in this case $k=4\sqrt{\abs{E(G)}}$). 
 This can be done by traversing the path, and takes $O(k \log\abs{E(G)})$ time (the
 number of artificial edges on a path is bounded by the number of
 holes of $G$, which, by Lemma~\ref{lem:num-holes}, is constant).
\item Line~\ref{line:lh} - Find least common ancestor in $T$. This can be done using
 the Euler tour representation in $O(\log\abs{E(G)})$
 amortized time. 
\item Line~\ref{line:findK-},~\ref{line:findK+} - Find leafmost
 ancestor in $T$ with level at most $i_-$. This can be done using
 the Euler tour representation in $O(\log\abs{E(G)})$
 amortized time. 
\item Find greatest-level component to which a particular edge
 belongs. This information is stored in the array $\tt{R}$.
\item Line~\ref{line:i-until},~\ref{line:i+until} - Query the number of
 edges in a level cycle. This takes $O(\log\abs{E(G)})$ time since
 level cycles are stored as splay trees.
\item Line~\ref{line:k-},~\ref{line:k+},~\ref{line:G'} - Query the
 weight enclosed by a short simple cycle $C$. This can be done by summing
 up the weight of subtrees as follows.
 For each cycle edge $e$, if $e \in T^*$ and
 the rootward endpoint of $e$ in $G^*$ is not enclosed by $C$, add
 the total weight of the subtree of $e$ in $T^*$ to the sum. If $e \in T^*$ and
 the rootward endpoint of $e$ in $G^*$ is enclosed by $C$, subtract
 the total weight of the subtree of $e$ in $T^*$ from the sum. Since
 querying the total weight of a subtree can be done using a dynamic
 tree operation in $O(\log\abs{E(G)})$ amortized time, and since the length
 of $C$ is $O(\sqrt{\abs{E(G)}})$, computing the total weight enclosed by $C$
 can be done in $O(\sqrt{\abs{E(G)}} \log \abs{E(G)})$ time.
\end{itemize}

This shows that, assuming that the auxiliary data structures are given
in the desired representation, {\sc SimpleCycleSeparator} can be
implemented in $O(\sqrt{\abs{E(G)}} \log^2{\abs{E(G)}})$ time.

\section{Maintaining the Representation of Regions Efficiently}
\label{sec:maint}

In this section we describe how to maintain the auxiliary data
structures that represent the regions and are required by {\sc SimpleCycleSeparator} 
throughout the recursive calls to {\sc RecursiveDivide}. 
These include maintaining the embedding of regions, the primal
spanning tree $T$ and its monotonicity invariant
(Invariant~\ref{inv:Tparent}), the cotree $T^*$, the component tree, and
the level cycles. 

Consider an iteration of {\sc RecursiveDivide} in which a
region $R$ is partitioned into two regions $R_0$ and $R_1$ along a cycle
separator $C$. 
Note that, due to the recursive nature of
the algorithm, we may assume that first the representations for $R_0$
are obtained and the algorithm is invoked recursively on $R_0$. After
the recursive call for $R_0$ 
is completed, all changes are undone, and the representations for
$R_1$ are obtained.
To achieve linear time for computing an $r$--division, it is required
that the update be done in sublinear time in the size of $R$. 
Essentially, the representation can be updated by performing only
local operations on the vertices and edges of the separator $C$. This
results in sublinear running time since
$C$ consists of $O(\sqrt{\abs{R}})$ edges, and since each operation in the
data structures we use takes $O(\log\abs{R})$ amortized time. 
Lemma~\ref{lem:recurrence} with $r=s$ shows that if the cycle can be 
computed in $O(\abs{R}^\beta)$ time (for some $\beta<1$), as 
established in Section~\ref{sec:cyclesep-eff}, 
and if the representation 
can be updated in $O(\abs{R}^\beta)$ time, 
then the total running time of the algorithm is linear. This
establishes the running time claimed in Theorem~\ref{thm:decomposition}.

\subsection{Maintaining the Embedding}
The embedding is represented by using a splay
tree~\cite{journals/jacm/SleatorT85} for each vertex $v$ of $R$ to store the
edges incident to $v$ in cyclic order.
Conceptually, $R_0$ is obtained from $R$ by merging all the faces
enclosed by $C$ into a single face.
Merging the faces enclosed by $C$ corresponds to deleting all
edges of $R$ strictly enclosed by $C$. 
Similarly, $R_1$ is obtained
from $R$ by deleting all edges not enclosed by $R$. We do not perform
these deletions explicitly since that would not be efficient.

We describe the procedure for $R_0$. The procedure for $R_1$ is
similar. We start by ``cutting off'' the edges that do not belong to
$R_0$. To do this we reconstruct the cyclic order of edges around the
vertices of the cycle~$C$. Consider a vertex $v$ of $C$. Let $\pi_v$ be the
cyclic permutation of edges around $v$ in $R$. There are
$2$ edges of $C$ that are incident to $v$. These cycle edges
partition $\pi_v$ into $2$ intervals. The edges in one interval
belong to $R_0$ and those in the other interval do not. Using $O(1)$
cut operations, each taking $O(\log\abs{R})$ amortized time, we cut off the
edges that do not belong to $R_0$. Since the length of $C$ is $O(\sqrt
{\abs{R}})$ (it consists of $O(\sqrt{\abs R})$ natural nodes, and at
most 12 artificial nodes), the total time to update all vertices
of $C$ is $O(\sqrt{\abs{R}} \log{\abs{R}})$. Note that there is no need to
explicitly delete the vertices enclosed by $C$ from the representation
because they are no longer pointed to from any vertex in $R_0$.

Observe that the definition of $R_0$ in Line~\ref{line:R_i} of {\sc
 RecursiveDivide} and the definition of $R'_0$ in
Line~\ref{line:triangulation} of the recursive call of {\sc
 RecursiveDivide} on $R_0$ are equivalent to deleting all the artificial triangulation
edges incident to $C$, and then triangulating the new artificial face
of $R_0$ with a new artificial node and new
artificial edges. This can be done using a constant number of link and
cut operations per node of $C$, each requiring $O(\log\abs{R})$ amortized
time. The number of triangulation edges is bounded by the number of
boundary vertices of $R_0$. Note, however, that even though our algorithm
separates according to the number of boundary vertices every third
recursive call, there is no guarantee on the ratio between the number
of vertices in $R_0$ and the number of boundary vertices of $R_0$. Instead,
we bound the total time required for creating all of the triangulation edges throughout
the entire execution of the recursive algorithm. Recall that for a
node $x$ of $\mathcal T$, $b(x)$ denotes the number of boundary nodes
of $R_x$.

\begin{lemma}\label{lem:boundary-bound}
$\sum_{x \in \mathcal T} b(x)\log\abs{R_x} = O(\abs{R_{\hat x}})$, 
where $\hat x$ is the root of $\mathcal T$ (i.e., $R_{\hat x}$ is the
input graph $\barG$).
\end{lemma}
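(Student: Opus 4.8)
The plan is to bound the sum by charging each term $b(x)\log\abs{R_x}$ against the vertices of $R_x$ in a way that every vertex of $\barG$ is charged only $O(1)$ times in total. The key observation is that boundary vertices are created only along cycle separators, and the total number of boundary vertices created is controlled by Lemma~\ref{lem:phase1} (more precisely, by the function $L$ and the recurrence of Lemma~\ref{lem:recurrence}). Concretely, for a node $x$ with children $x_0,x_1$, Equation~\eqref{eq:total-vertices} gives $n(x_0)+n(x_1)\le n(x)+c\sqrt{n(x)}$, i.e.\ the number of \emph{new} incidences of vertices to regions created at this split is $L(x,\{x_0,x_1\})\le c\sqrt{n(x)}$. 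Summing $L(x,\{x_0,x_1\})$ over all internal nodes $x$ of a fixed level-class telescopes (Lemma~\ref{L-additive}) to $B_r(N)$-type quantities, and taking $r=s$ (the base case) the bound from Lemma~\ref{lem:recurrence} with $\beta=1/2$ yields $\sum_{x}\sqrt{n(x)}=O(N)$ where the sum is over all nodes $x$ of $\mathcal T$. Since $b(x)\le n(x)$ and, in fact, the number of newly created boundary vertices at each split is $O(\sqrt{n(x)})$, one gets $\sum_{x\in\mathcal T} b(x)=O(N)$ by the same telescoping argument applied to boundary-vertex counts rather than vertex counts.

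First I would make the telescoping precise. Define, analogously to $B_r$, the quantity $A(n):=\max\{\,\sum_{y\in\text{leaves}(x)} b(y)\ :\ x\in\mathcal T,\ n(x)\le n\,\}$ is not quite right because $b$ is not additive; instead I would track $L$ applied to boundary counts: let $b\text{-}L(x,S):=-b(x)+\sum_{y\in S}b(y)$, which by the analogue of Equation~\eqref{eq:L-children} (using Equation~\eqref{eq:total-boundary-vertices}) satisfies $b\text{-}L(x,\{x_0,x_1\})\le c\sqrt{n(x)}$. Then $\sum_{x\in\mathcal T}b\text{-}L(x,\{x_0,x_1\})$, by repeated application of the additivity Lemma~\ref{L-additive}, equals $b\text{-}L(\hat x,\text{leaves})=\sum_{\text{leaves }y}b(y)-b(\hat x)=\sum_{\text{leaves }y}b(y)$ since $b(\hat x)=0$. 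Each term is $\le c\sqrt{n(x)}$, and $\sum_{x\in\mathcal T}\sqrt{n(x)}=O(N)$ follows from Lemma~\ref{lem:recurrence} with $r=s$, $\beta=1/2$, $\rho=c$ (this is exactly the recurrence that governs the total work). So $\sum_{\text{leaves }y}b(y)=O(N)$, and then $\sum_{x\in\mathcal T}b(x)=\sum_x\sum_{\text{descendant leaves }y\text{ newly boundary at }x}\!\!1\le\sum_x c\sqrt{n(x)}=O(N)$ by the same telescoping read in the other direction — each boundary vertex of a leaf $R_y$ is "new" at exactly the split where it first became a boundary vertex, so it contributes to $b(x)$ for only the $O(\log\,\cdot)$-depth chain of ancestors down which it persists; but more simply, $b(x)$ is bounded by the number of boundary vertices created at or above $x$ on the current branch, which telescopes.

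To get the extra $\log\abs{R_x}$ factor I would observe that along any root-to-leaf path in $\mathcal T$ the sizes $n(x)$ decrease geometrically in blocks of three levels (Equations~\eqref{eq:vertices-per-piece} with the factor $3/4+c/\sqrt n$), so for a fixed vertex $v$, as we descend the chain of regions $R_x$ containing $v$, the values $\log\abs{R_x}$ decrease by a constant every three steps and hence form a geometric series summing to $O(\log N)$ over that chain. Rather than carry the $\log$ inside the recurrence, I would absorb it as follows: rerun the telescoping argument on $\sum_x b\text{-}L(x,\{x_0,x_1\})\log\abs{R_x}\le c\sum_x\sqrt{n(x)}\log\abs{R_x}$ and invoke Lemma~\ref{lem:recurrence} with $\beta$ any fixed constant in $(1/2,1)$: since $\sqrt{n}\log n=O(n^{\beta})$, the recurrence gives $\sum_x n(x)^{\beta}=O(N/r^{1-\beta})=O(N)$ for $r=s$. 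Then, as before, $\sum_{x\in\mathcal T}b(x)\log\abs{R_x}$ is bounded by a telescoped sum of the $b\text{-}L(x,\{x_0,x_1\})\log\abs{R_x}$ terms — care is needed because $\log\abs{R_x}$ is not constant down a chain, but since it only decreases as we descend, $b(x)\log\abs{R_x}\le\log\abs{R_x}\cdot(\text{boundary verts created at ancestors of }x)$ and summing over $x$ on a fixed branch telescopes against the geometric decrease. The main obstacle, and the step I would be most careful about, is exactly this last bookkeeping: $b(x)$ is not additive and the weight $\log\abs{R_x}$ varies along the tree, so one must argue that charging each "newly created" boundary incidence $O(\log(\text{size at creation}))$ and summing over ancestors (where it survives) still telescopes — this works because sizes shrink geometrically every three levels, so the geometric series of $\log$-weights is dominated by its largest term, recovering $\sum_x\sqrt{n(x)}\log\abs{R_x}$ up to a constant, which Lemma~\ref{lem:recurrence} bounds by $O(N)$.
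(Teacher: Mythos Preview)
Your approach is essentially the paper's: both charge each boundary vertex of a region $R_x$ to the cycle separator $C_y$ at the ancestor $y$ where that vertex first became a boundary vertex, observe that such a vertex persists in at most two descendants of $y$ at each of the $O(\log|R_y|)$ levels below $y$, and then bound the resulting sum $\sum_y |C_y|\cdot(\text{polylog}\,|R_y|)$ via the recurrence of Lemma~\ref{lem:recurrence} (the paper writes the recurrence $T(n)\le 14c_1c\sqrt{n}\log^2 n+\max_{\{\alpha_i\}}\sum_i T(\alpha_i n)$ and applies Lemma~\ref{lem:recurrence}). Your detour through the weighted telescoping of $b\text{-}L$ is unnecessary and, as you yourself note, does not actually telescope when the weight $\log|R_x|$ varies; the clean route is exactly the swap-of-summation you describe at the end.

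There is one genuine slip. You assert that along a root-to-leaf chain ``the values $\log|R_x|$ decrease by a constant every three steps and hence form a geometric series summing to $O(\log N)$.'' This is wrong: it is the \emph{sizes} $|R_x|$ that shrink geometrically (by a factor at most $3/4+c/\sqrt{n}$ every three levels); consequently $\log|R_x|$ decreases \emph{arithmetically}, by a constant every three levels. A chain below $y$ therefore has $\Theta(\log|R_y|)$ terms, each of size at most $\log|R_y|$, so the sum is $\Theta(\log^2|R_y|)$, not $O(\log|R_y|)$. The error is not fatal---you already observed that $\sqrt{n}\log n=O(n^\beta)$ for any fixed $\beta\in(1/2,1)$, and the same holds for $\sqrt{n}\log^2 n$, so Lemma~\ref{lem:recurrence} still yields $O(N)$---but the argument as written rests on a false claim and should be corrected to carry the extra logarithmic factor (this is precisely why the paper's recurrence has $\sqrt{n}\log^2 n$ rather than $\sqrt{n}\log n$).
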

\begin{proof}
Consider a node $x$ of $\mathcal T$. Let $C_x$ denote the
cycle used to separate $R_x$.
Observe that a vertex $v$ on $C_x$ appears as a boundary vertex in the
regions of exactly two descendants of $x$ in each level $\ell >
\ell(x)$ of $\mathcal T$, {\em unless} $v$ appears on another cycle
separator~$C_y$ for some descendant $y$ of $x$.
Since the depth of the subtree of $\mathcal T$ rooted at $x$ is
bounded by $c_1 \log \abs{R_x}$ for some constant $c_1$ (the size of regions decreases by a constant factor
every 3 levels) we have:
\begin{equation}
\sum_{x \in \mathcal T} b(x) \leq \sum_{x \in \mathcal T}
2c_1 \abs{C_x} \log \abs{R_x}
\end{equation}
Note that if $v$ does appear on $C_y$ for some descendant $y$ of $x$, 
its additional occurrences as a boundary vertex are accounted for by
the contribution of $C_y$ to the sum.

Let $T(n)$ be the maximum of $\sum_{x \in \mathcal T} 2c_1\abs{C_x} \log\abs{R_x}$ over all
$n$-vertex triangulated biconnected planar input graphs. 
Let $k$, $y_1,\dots y_k$, $\alpha_1, \dots \alpha_8$ be as in
Lemma~\ref{lem:phase1}.
Let $\mathcal T_{x,y_1,\dots,y_k}$ denote the subtree of $\mathcal T$
induced by nodes that are descendants of $x$ but not descendants of
any $y_i$. Note that the number of internal nodes in $\mathcal
T_{x,y_1,\dots,y_k}$ is $k-1$, which is at most 7, and that for any internal node $z$ of
$\mathcal T_{x,y_1,\dots,y_k}$, $\abs{R_z}\leq \abs{R_x}$, and $\abs{C_z} \leq \abs{C_x} \leq
c\sqrt{\abs{R_x}}$. Hence, 
\begin{equation}
T(n) \leq 2c_1 \cdot 7c \sqrt n \log^2 n + \max_{\{\alpha_i\}} \sum
T(\alpha_i n)
\end{equation}
Lemma~\ref{lem:recurrence} shows that $T(n)$ is $O(n)$. 
\end{proof}

\subsection{Representation of the Spanning Tree $T$ and its Cotree $T^*$}
Let $T$ and $T^*$ denote the primal and dual spanning trees of $R$,
respectively, at the time the
cycle separator $C$ of $R$ is found. 
In what follows we describe how to convert $T$ into the spanning tree
$T_0$ of~$R_0$. The procedure for converting $T$ into the spanning
tree $T_1$ of $R_1$ is similar.
 
Let $h$ be the new artificial face of $R_0$.
The tree $T$ is cut at each vertex $v$ of $h$
so that the only edges adjacent to $v$ in $T$ are those that belong to
$R_0$ (i.e., not strictly enclosed by $h$). 
This breaks $T$ into a forest~$F$ in $O(\sqrt{\abs{R}} \log
\abs{R})$ time. Let $F_0$ be the set of trees in $F$ that consist of
edges of $R_0$.
The cotree $T^*$ is cut at the edges of
$h$. This disconnects $T^*$ into a forest $F^*$ in $O(\sqrt{\abs{R}} \log
\abs{R})$ time. Let $F^*_0$ be the set of trees in $F^*$ that consist
of duals of edges of $R_0$.

The algorithm connects the forest $F_0$ into a spanning tree $T_0$ of $R_0$
while maintaining the monotonicity invariant Invariant~\ref{inv:Tparent}).

\begin{lemma} \label{lem:no-parent-not-min}
Let $D$ be a cycle of natural vertices in $R$. Let $v$ be a vertex of $D$ such that the
parent of $v$ in $T$ is enclosed by $D$, where enclosure is defined
with respect to the root of $T$. There exists a vertex $u
\in D$ with $\dualnodelevel(u) < \dualnodelevel(v)$. 
\end{lemma}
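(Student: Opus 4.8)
Lemma \ref{lem:no-parent-not-min} — proof plan.

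The plan is to argue by contradiction: suppose every vertex $u \in D$ satisfies $\dualnodelevel(u) \geq \dualnodelevel(v)$, i.e.\ $v$ attains the minimum level $i := \dualnodelevel(v)$ along the cycle $D$. The key is to combine this with the monotonicity invariant (Invariant~\ref{inv:Tparent}) and the structure of level components. First I would recall that $v$ is incident to some level-$i$ face $f$, and that $f$ belongs to some level-$i$ component $K$ (i.e.\ $\dualcomponentlevel(K)=i$), whose bounding cycle $X(K)$ is a simple cycle in $G$ consisting of edges between level-$i$ and level-$(i-1)$ faces. Since $\dualnodelevel(v)=i$, $v$ is not incident to any level-$(i-1)$ face, so $v$ lies strictly inside $X(K)$ (viewing $K$ as the subgraph of $G$ enclosed by $X(K)$). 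Actually the cleaner route: let $p = \parent_T(v)$; by hypothesis $p$ is enclosed by $D$. By Invariant~\ref{inv:Tparent}, $\dualnodelevel(p) < \dualnodelevel(v) = i$, so $p \notin D$ (as every vertex of $D$ has level $\geq i$), consistent with $p$ being strictly enclosed.

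Next I would trace the root-path of $v$ in $T$: write $v = u_0, u_1 = p, u_2, \dots$ for the natural vertices on the $v$-to-root path in $T$. By Invariant~\ref{inv:Tparent} the levels $\dualnodelevel(u_j)$ are strictly decreasing in $j$, so $\dualnodelevel(u_j) \leq i - j$ for $j \geq 1$; in particular the path eventually reaches a vertex of level $< i$, hence a vertex not on $D$, and more importantly the path leaves the region enclosed by $D$ only by crossing an edge of $D$. The idea is that the root of $T$ is incident to $f_\infty$, hence has level $0 \leq$ (anything), and in particular the root is not strictly enclosed by $D$ unless $D$ is trivial — so the root-path of $v$ must cross $D$. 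Let $u_k w$ be the first edge of this path (with $w$ the parent of $u_k$) that crosses $D$, meaning $u_k$ is strictly enclosed (or on $D$) and $w$ is outside. But every vertex of $D$, and every vertex strictly enclosed by $D$ on this particular path, has level $\geq$... here is where I must be careful: enclosed vertices need not all have level $\geq i$. So instead I would use that $u_k \in D$ or $u_k$ strictly enclosed, and $w = \parent_T(u_k)$ satisfies $\dualnodelevel(w) < \dualnodelevel(u_k)$. Iterating, the level strictly decreases each step, so the first vertex $u_k$ on the path that actually lies \emph{on} $D$ (there is one, since to get from strictly-inside $p$ to outside, and $v \in D$ is a candidate... actually $v=u_0\in D$) — hmm. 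The honest version: since $v\in D$ and $\parent_T(v)$ is strictly enclosed, while the root is not strictly enclosed, the root-path of $v$ must cross $D$ again at some later edge; at the vertex $u_k \in D$ where it crosses back out, $\dualnodelevel(\parent_T(u_k)) < \dualnodelevel(u_k)$, and by strict monotonicity along $u_0,\dots,u_k$ we get $\dualnodelevel(u_k) < \dualnodelevel(u_0) = \dualnodelevel(v)$. Thus $u_k \in D$ is the desired vertex $u$ with $\dualnodelevel(u) < \dualnodelevel(v)$.

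The main obstacle I expect is making precise the claim that "the root-path of $v$ must re-cross $D$": this requires a clean Jordan-curve / enclosure argument showing that since $v$ is on $D$ and its $T$-parent is strictly inside $D$, the path from $v$ toward the root (which ends at a vertex incident to $f_\infty$, hence not strictly inside $D$ provided $f_\infty$ is not enclosed, which we may arrange or which follows from how $D$ is used) must contain an edge with one endpoint strictly inside $D$ and one endpoint not inside — and then locating the \emph{first} vertex of $D$ encountered when walking this crossing path. The edge-case where the root itself is strictly enclosed by $D$ would need separate handling, but in the intended application $D = X(K)$ or a cycle arising in the separator and $f_\infty$ lies outside, so the root (incident to $f_\infty$) is not strictly enclosed; I would state this hypothesis or derive it. Once the crossing vertex $u_k \in D$ is identified, the strict monotonicity of $\dualnodelevel$ along $T$-paths (Invariant~\ref{inv:Tparent}, applied repeatedly along the ancestor chain) immediately gives $\dualnodelevel(u_k) < \dualnodelevel(v)$, completing the proof.
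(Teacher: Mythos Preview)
Your ``honest version'' is exactly the paper's argument: since the root of $T$ is not enclosed by $D$, the $v$-to-root path in $T$ must hit $D$ at some proper ancestor $u$ (natural, since $D$ consists of natural vertices), and Invariant~\ref{inv:Tparent} applied directly to the pair $(v,u)$ gives $\dualnodelevel(u) < \dualnodelevel(v)$. Your flagged obstacle---whether the root is enclosed---is a non-issue: the lemma statement specifies that enclosure is defined \emph{with respect to the root of $T$}, so the root is by definition on the outside, and no appeal to $f_\infty$ or extra hypotheses is needed; the initial detour through level components and bounding cycles is also irrelevant here.
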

\begin{proof}
Since the root of $T$ is not enclosed by $D$, there
must be a vertex $u\in D$ on the path in $T$ from $v$ to the root of
$T$. By Invariant~\ref{inv:Tparent}, $\dualnodelevel(u) < \dualnodelevel(v)$.
\end{proof}

Recall that, after deleting all the triangulation edges incident to
$h$, $h$ is retriangulated with a new
artificial vertex $x_h$ and new artificial edges $vx_h$ for each
vertex of $h$. 
Let $v_h$ be a vertex which minimizes $\dualnodelevel(v)$ among all
vertices of $h$. The algorithm attaches $x_h$ as a child of $v_h$ in $T$.
It then considers every vertex $v$ of $h$.
If $v$ does not have a parent in $F_0$ (that is, the parent of $v$ in
$T$ was enclosed by $h$ in $R$, so after cutting $T$, $v$ is a root of a
tree in the forest $F_0$) then the algorithm sets $x_h$
to be the parent of $v$ in $F_0$ by adding the
artificial triangulation edge $vx_h$ to $F_0$.
This operation is performed by linking the tree of $F_0$ whose root is $v$ to $x_h$.
If $v$ does have a parent in $F_0$ we add $vx_h$ to $F^*_0$.
At the end of the process, each edge of $R_0$
is either in $F_0$ or in $F^*_0$, and $F_0$ contains no cycles. Hence
$F_0$ and $F^*_0$ are spanning trees of $R_0$ and its dual,
respectively.

\begin{lemma}
After the update, the tree $F_0$ satisfies Invariant~\ref{inv:Tparent}.
\end{lemma}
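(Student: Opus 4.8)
The plan is to classify the ancestor/descendant pairs of $F_0$ according to how the update created them. Recall how $F_0$ arises: $T$ is cut at the vertices of $h$, severing exactly those $T$-edges that leave a vertex of $h$ toward a vertex strictly enclosed by $h$; this breaks $T$ into rooted pieces, each of which is a connected subgraph of $T$. We discard the pieces made of edges not in $R_0$, attach the new artificial vertex $x_h$ as a child of $v_h$ (the vertex of $h$ minimizing $\dualnodelevel$), and for each vertex $v$ of $h$ that became the root of a surviving piece --- equivalently, whose $T$-parent was enclosed by $h$ --- we make $v$ a child of $x_h$. Since each surviving piece is a connected subgraph of $T$, the $F_0$-path between two vertices lying in the same piece coincides with their $T$-path, so for such pairs monotonicity is inherited from Invariant~\ref{inv:Tparent}, which holds for $R$ with tree $T$. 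Hence it suffices to verify the invariant for a pair of natural vertices $a$ (the ancestor) and $d$ (the descendant) whose $F_0$-path passes through $x_h$.

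Because $x_h$ is artificial it is an internal vertex of that path, so $a\in\{v_h\}\cup\{\text{proper }T\text{-ancestors of }v_h\}$ and, for some child $v$ of $x_h$, $d\in\{v\}\cup\{\text{proper }T\text{-descendants of }v\}$; here $v$ is a vertex of $h$ whose $T$-parent is enclosed by $h$. We must show $\dualnodelevel(a)<\dualnodelevel(d)$. We have the chain
\[
\dualnodelevel(a)\ \le\ \dualnodelevel(v_h)\ \le\ \dualnodelevel(v)\ \le\ \dualnodelevel(d),
\]
where the outer two inequalities are Invariant~\ref{inv:Tparent} for $R$ --- with equality only when $a=v_h$, resp.\ $d=v$ --- and the middle one holds because $v_h$ minimizes $\dualnodelevel$ over the vertices of $h$ while $v$ is one of them. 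If $a\ne v_h$ or $d\ne v$ this chain is already strict, so the only remaining case is $a=v_h$ and $d=v$, where we must strengthen the middle inequality to $\dualnodelevel(v_h)<\dualnodelevel(v)$.

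This last case is the crux, and I expect it to be the only real obstacle. Take $D$ to be the cycle bounding the face $h$. Its vertices are all natural (the artificial vertices of old holes met by the separator are removed in the course of the update, and the new vertex $x_h$ lies strictly inside $h$), so $D$ is a cycle of natural vertices of $R$; the minor points --- that $D$ can be taken simple, and that the at most twelve artificial vertices possibly carried by the separator are harmless --- are dealt with exactly as in the parallel arguments of Section~\ref{sec:maint}. By hypothesis $\parent_T(v)$ is enclosed by $h$, hence enclosed by $D$ with respect to the root of $T$ (which lies outside $h$). Lemma~\ref{lem:no-parent-not-min} then yields a vertex $u\in D$ with $\dualnodelevel(u)<\dualnodelevel(v)$; since $u$ is a vertex of $h$, minimality of $v_h$ gives $\dualnodelevel(v_h)\le\dualnodelevel(u)<\dualnodelevel(v)$, which is what was needed. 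This exhausts all cases, so $F_0$ satisfies Invariant~\ref{inv:Tparent}.
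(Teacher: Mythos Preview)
Your proof is correct and follows essentially the same approach as the paper: both arguments reduce to showing $\dualnodelevel(v_h)<\dualnodelevel(v)$ for each vertex $v$ of $h$ whose $T$-parent is enclosed by $h$, and both invoke Lemma~\ref{lem:no-parent-not-min} for this step. Your write-up is more explicit than the paper's about the case analysis (pairs whose $F_0$-path does or does not pass through $x_h$) and about why the remaining chain of inequalities collapses to the single case $a=v_h$, $d=v$; the paper compresses all of this into the observation that the only natural vertices receiving a new parent are such $v$, and that their new proper natural ancestors are exactly the old ancestors of $v_h$.
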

\begin{proof}
The only natural nodes whose parent is set by the update procedure are
those on the new hole $h$ of $R_0$ that do not have a parent after $F$
was cut at the vertices of $h$ . By Lemma~\ref{lem:no-parent-not-min},
for each such vertex $v$, 
$\dualnodelevel(v_h) < \dualnodelevel(v)$. Since the algorithm sets
$x_h$ as the parent of $v$, and $v_h$ as the parent of $x_h$, any
natural vertex of $R_0$ that is a proper ancestor of $v$ is an
ancestor of $v_h$. Invariant~\ref{inv:Tparent} is maintained since the ancestors of $v_h$ are not changed by the
update procedure.

The number of operations required to obtain the updated representation
of the tree and cotree for $R_0$ is proportional
to the number of nodes of
$h$. By Lemma~\ref{lem:boundary-bound}, the total update cost over
all recursive calls to {\sc RecursiveDivide} is $O(\abs{\bar G})$.
Obtaining the representation of the tree and cotree for $R_1$ is
similar.
\end{proof}

\subsection{Representation of Component Boundaries}
The component tree $\mathcal K$ is computed at the preprocessing
step and represented by parent pointers.
We do not update the component tree when {\sc RecursiveDivide}
partitions the graph. We only maintain and update the level-cycles
(the boundaries of components). Note that at any given recursive call
$\mathcal K$ correctly represents the enclosure relations between
components in the current subgraph. However, $\mathcal K$ also
contains components that no longer belong to the current
subgraph. This does not pose a problem since we only use $\mathcal K$
for parent queries.

For every edge $e$ in $G$, let $P^{\mathcal K}(e)$ be the path in
$\mathcal K$ that consists of all components $K$ such that $e$ has one
endpoint in $X(K)$ and another endpoint strictly enclosed by
$X(K)$. The algorithm maintains these paths by storing the first and last components of $P^{\mathcal
 K}(e)$ in an array entry $\tt{R}(e)$. Note that $\tt{R}$ is
generated in linear time at the preprocessing step and is used when
working with subgraphs at different recursive levels. $\tt{R}(e)$ need
not be updated for natural edges. However, whenever a new triangulation
edge $e$ is introduced, the algorithm must store the first and last
components of $P^{\mathcal K}(e)$ in $\tt{R}(e)$. The way this is done
is described at the end of this subsection.

We now describe how we update the representation of the level
cycles. By construction, the cycle separator $C$ intersects at most
$\sqrt m$ level-cycles (at most a single cycle at each level between $i_-$ and
$i_+$, see Lemma~\ref{lem:K+}) 
These are the only level-cycles that need to be updated. 
We describe the procedure for $R_0$. The one for $R_1$ is similar. 
As we mentioned above, these updates are first done for $R_0$. When the
recursive call for $R_0$ is completed, all changes are undone and we
perform the updates for $R_1$.

We first identify the components that are
bisected by $C$. 
This is done by traversing the cycle $C$. For each edge $e$ 
of $C$ we mark each component $K$ in
$P^{\mathcal K}(e)$ whose level is at least $i_-$ by the endpoint of $e$ that belongs to
$X(K)$. This can be done by starting at the last component of
$P^{\mathcal K}(e)$ (Since $C$ does not
contain edges of any component with level greater than $i_+$, the
level of the last component of $P^{\mathcal K}(e)$ is at most $i_+$), and
following parent pointers in the component tree $\mathcal K$ until either the other end
of $P^{\mathcal K}(e)$ is encountered or until a component of level
$i_-$ is encountered.

After traversing $C$, we have identified, for each marked component
$K$, two vertices $u$ and $v$ on $X(K)$. We cut the cycle $X(K)$ at $u$
and at $v$. Since cycle levels are represented using splay trees,
this takes $O(\log\abs{R})$ amortized time. This results in two paths,
let $X_0(K)$ be the one that 
consists of edges of $R_0$. The endpoints $u$ and $v$ of $X_0(K)$ are incident to
the new hole $h$ of $R_0$. 
We add a dummy edge $uv$ to turn $X_0(K)$ back into a cycle.
Note that dummy edge $uv$ represents the two triangulation edges
$vx_h$ and $x_hu$ that connect $u$ and $v$ in $R_0$. We do not
explicitly add those artificial edges 
to form a cycle since artificial edges (as well as the artificial node
$x_h$) might be replaced with new ones
at subsequent recursive calls. We cannot afford to update such edges
when retriangulating since each triangulation edge might appear in many
level cycles. To be able to associate a dummy edge with the
corresponding triangulation edges,
the algorithm maintains, for every natural vertex $v$, a list of the
artificial triangulation edges incident to $v$. At any given time
there are at most 12 items in the list. Note that, by
Lemma~\ref{lem:boundary-bound}, we can afford to update these lists
when retriangulating holes.

Note that, even though level cycles are no longer disjoint,
Invariant~\ref{inv:disjoint-cycles} is maintained. The reason is that
level cycles may share
artificial triangulation edges, but no natural edges.

We now describe how to update $\tt{R}$ with new triangulation edges.
For each triangulation edge that belongs to some level cycle, the
algorithm stores pointers to the lowest and deepest corresponding
components in $\mathcal K$. These can be identified at the time the
corresponding dummy edge is inserted into the representation of
level-cycle. When retriangulating a hole, every new triangulation edge
inherit this information from the triangulation edge it replaced.
Recall that all triangulation edges of a specific hole $h$ are incident to
a single artificial vertex $x_h$. Going over the edges incident to
$x_h$ in cyclic order, and knowing the level cycles to which each edge
belongs, it is easy to infer the corresponding paths $P^{\mathcal K}(e)$,
and store their endpoints in the array~$\tt{R}$.

\subsection{Representation of Vertex Weights}
Our algorithm separates regions according to three possible balance
criteria: number of natural vertices, number of boundary vertices, and
number of holes. In this subsection we discuss how these quantities
are represented and maintained. 

As we mention in Section~\ref{sec:num-of-vertices} and in the
beginning of Section~\ref{sec:cyclesep}, vertex weights are handled by
transferring the weight to some adjacent face. A cycle that is
balanced with respect to the face weights is balanced with respect to
the original vertex weights as well.

\newcommand{\naturalw}{{\tt natural\_weight}}
\newcommand{\boundary}{{\tt boundary\_weight}}
\newcommand{\hole}{{\tt hole\_weight}}
We define three types of face weights: $\naturalw$, $\boundary$ and $\hole$ such
that balancing according to each weight type is equivalent to
balancing according to the corresponding quantity.
Initially, all face weights are initialized to zero.
For each vertex $v$ of $\barG$, the algorithm associate
with $v$ an
arbitrary adjacent face $f$, and increases $\naturalw(f)$ by one. 
It maintains this vertex-face association in an array ${\tt F}$ so that, for every vertex,
the associated face can be queried in constant time.
Since the input graph $\barG$ is triangulated, each face has weight at
most 3, so no face carries more than $3/4$ of the total $\naturalw$
(recall that the cycle separator is not invoked on graphs with fewer
than $s$ vertices).

We now describe how to update the weights when separating a region $R$
along a cycle separator $C$ into subregions $R_0$ and $R_1$. We
describe the procedure for $R_0$. The one for $R_1$ is similar.
The algorithm checks, for each (natural) boundary vertex $v$ of $R_0$
if ${\tt F}[v]$ is a face of $R_0$. This can be done in $O(\log
\abs{R_0})$ time using the cotree $T^*$ for $R_0$. If ${\tt F}[v]$ is
not a face of $R_0$, the algorithm chooses an arbitrary face $f$ of $R_0$
that is adjacent to $v$, sets ${\tt F}[v]$ to $f$, and increases
$\naturalw(f)$ by one. This guarantees that each vertex of $R_0$
assigns unit weight to one associated face of $R_0$. Note also that
for each face $f$ of $R_0$, all vertices assigning weight to $f$ are
vertices of $R_0$. Hence, balancing $R_0$ by $\naturalw$ is equivalent to
balancing by the number of natural vertices of $R_0$. The time to
perform the update is $O(\abs{C} \log \abs{R})$, which, by
Lemma~\ref{lem:boundary-bound}, is $O(\abs{\barG})$ over the entire
execution of the algorithm.

To update $\boundary$, the algorithm sets $\boundary(f)$ to one for
every new triangulation face created when triangulating
$R_0$. Similarly, to update $\hole$, the algorithm sets to one
$\hole(f)$ for one arbitrary new triangulation face of the new hole
formed when cutting $R$ along $C$. Again, by
Lemma~\ref{lem:boundary-bound}, this takes total $O(\abs{\barG})$ time.

\newpage
\bibliographystyle{alpha}
\bibliography{planarseparators}

\newpage
\begin{appendix}
\section{Proof of Lemma~\ref{lem:recurrence}}
\begin{proof}
We show by induction on $n$ that, for $n\geq r$, $T_r(n) \leq \delta\gamma n / r^{1-\beta} -
\gamma n^\beta$, where $\gamma>1$ and $\delta>1$ are constants to be
determined. Note that $\delta\gamma n / r^{1-\beta} -
\gamma n^\beta$ is non-negative whenever $\delta n / r^{1-\beta} \geq
n^\beta$, or equivalently, whenever $n \geq \frac{r}{\delta^{1/(1-\beta)}}$.

Suppose $n>r$, and assume the lemma holds for $n'<n$. Fix values of
$\alpha_1, \ldots, \alpha_8$. We show
that $\rho n^\beta + \sum_i T_r(\alpha_i n) \leq \delta\gamma n / r^{1-\beta} -
\gamma n^\beta$. Let $\mathcal I = \set{i \ 
 :\ \alpha_i n > \frac{r}{\delta^{1/(1-\beta)}}}$, and let
$q=\abs{\mathcal I}$.

Case 0: $q=0$. In this case, $\alpha_i n < \frac{r}{\delta^{1/(1-\beta)}}$
for all $1 \leq i \leq 8$. Since $\sum \alpha_i \geq 1$, it must be
that $r < n \leq 8r$.
Hence $T_r(n) \leq \rho n^\beta$, and 
\begin{equation} \label{eq:case1a}
\delta\gamma n / r^{1-\beta} - \gamma n^\beta \geq \delta \gamma r^\beta -
8 \gamma r^\beta 
 = (\delta-8) \gamma r^\beta 
\end{equation}
Setting $\gamma > 8\rho$ we get
\begin{eqnarray} \label{eq:case1b}
T_r(n) & \leq & \rho n^\beta \nonumber \\
& \leq & 8 \rho r^\beta \nonumber \\
& \leq & \gamma r^\beta
\end{eqnarray}
Requiring $\delta \geq 9$ and combining~\eqref{eq:case1a} and
~\eqref{eq:case1b} proves the inductive step in this case.

Case 1: $q \geq 1$. 
For
every $i \notin {\mathcal I}$, 
$$\alpha_i n \leq \frac{r}{\delta^{1/(1-\beta)} }\leq
 \frac{n}{\delta^{1/(1-\beta)}}.$$
Hence, for
every $i \notin {\mathcal I}$, $\alpha_i \leq 
 \frac{1}{\delta^{1/(1-\beta)}}$.
It follows that 
\begin{equation}\label{eq:sum-alpha}
\sum_{i \in {\mathcal I}} \alpha_i \geq 1 - \frac{7}{\delta^{1/(1-\beta)}}
\end{equation}

We claim that $\delta$ can be chosen sufficiently large so that
$\sum_{i \in {\mathcal I}} \alpha_i^\beta = 1 +\epsilon$ for some
 $\epsilon > 0$. To see this, assume without loss of generality that $\alpha_1 \geq {
 \alpha}_2 \geq \dots \geq \alpha_8$, and
observe that
\begin{eqnarray}\label{eq:sumI}
\sum_{i \in {\mathcal I}} \alpha_i^\beta & \geq & {
 \alpha}_1^\beta - \alpha_1 + \sum_{i \in {\mathcal I}} {
 \alpha}_i \nonumber \\
& \geq & {
 \alpha}_1^\beta - \alpha_1 +1 - \frac{7}{\delta^{1/(1-\beta)}}
\end{eqnarray}
Let $s$ be sufficiently large so that $3/4 + c/\sqrt n < 1$. Since $1/8 \leq \alpha_1 \leq 3/4 + c/\sqrt n$, $ {
 \alpha}_1^\beta - \alpha_1$ is strictly positive. Define
 $\epsilon'>0$ to be $ \alpha_1^\beta - {
 \alpha}_1$. Note that, by concavity of $\alpha^\beta - \alpha$,
 $\epsilon'$ is monotonically increasing with $s$. Substituting into~\eqref{eq:sumI} we get
\begin{equation}
\sum_{i \in {\mathcal I}} \alpha_i^\beta \geq 1+ \epsilon' - \frac{7}{\delta^{1/(1-\beta)}}
\end{equation}

Define $\epsilon$ to be $\epsilon' -
\frac{7}{\delta^{1/(1-\beta)}} $, and set $\delta$ sufficiently large
so that $\epsilon > 0$. We have
\begin{equation}
\sum_{i \in {\mathcal I}} \alpha_i^\beta \geq 1+ \epsilon
\end{equation}

For each $i\not \in \mathcal I$, $T_r(\alpha_i n) \leq 0$. Hence
\begin{eqnarray} 
\rho n^\beta + \sum_i T_r(\alpha_i n) & = & 
\rho n^\beta + \sum_{i \in
{\mathcal I}} \left(
 \delta\gamma \alpha_i n/r^{1-\beta} - \gamma \alpha_i^\beta n^\beta
\right) \nonumber \\
& \leq & \rho n^\beta + \delta \gamma n / r^{1-\beta} \sum_{i \in
{\mathcal I}} \alpha_i -
\gamma n^\beta \sum_{i \in
{\mathcal I}} \alpha_i^\beta \nonumber \\
& \leq & \rho n^\beta + (1 + c/\sqrt n) \delta \gamma n / r^{1-\beta} -
\gamma n^\beta(1+\epsilon) \nonumber \\
& = & \delta \gamma n / r^{1-\beta} - \gamma n^\beta + \rho n^\beta
+ \delta \gamma c \sqrt n / r^{1-\beta} -
\epsilon \gamma n^\beta\nonumber \\
& = & \delta \gamma n / r^{1-\beta} - \gamma n^\beta + n^\beta
\left( \rho + \delta \gamma c n^{1/2-\beta} / r^{1-\beta} -
\epsilon \gamma \right)\nonumber \\
& \leq & \delta \gamma n / r^{1-\beta} - \gamma n^\beta + n^\beta
\left( \rho + \delta \gamma c / r^{1-\beta} -
\epsilon \gamma \right) \label{eq:lastTr}
\end{eqnarray}
Let $s$ be sufficiently large so that for any $r\geq s$,
$\delta c/r^{1-\beta} < \epsilon/2$. 
Note that $\epsilon$ does depend on $s$, but it since it is monotonically increasing
with $s$, there exists such sufficiently large $s$. 

Then the right-hand side
of~\eqref{eq:lastTr} is at most

$$\delta \gamma n / r^{1-\beta} - \gamma n^\beta + n^\beta \left( \rho -
\epsilon \gamma /2 \right) $$
Setting $\gamma > \rho/\epsilon$ completes the proof.
\end{proof}

\end{appendix}
\end{document}